\pgfplotsset{compat=1.18}
\definecolor{grey}{rgb}{0.9,0.9,0.9}
\newtheorem{theorem}{Theorem}
\newtheorem{lemma}[theorem]{Lemma}
\newtheorem{proposition}[theorem]{Proposition}
\newtheorem{definition}{Definition}
\newtheorem{remark}{Remark}
\newcommand{\bD}{{\mathbb{D}}}
\newcommand{\IR}{{\mathbb{R}}}
\newcommand{\CI}{{\mathbb{C}}}
\newcommand{\TP}{^\mathsf{T}}
\newcommand{\HP}{^\mathsf{H}}
\newcommand{\overbar}[1]{\mkern 1.5mu\overline{\mkern-1.5mu#1\mkern-1.5mu}\mkern 1.5mu}
\newcommand{\allone}[1]{\mathbf{1}_{#1}}
\newcommand{\rmin}{\rho_{\mathrm{min}}}
\DeclareMathOperator{\diag}{diag}
\DeclareMathOperator{\spec}{spec}
\DeclareMathOperator{\re}{Re}
\DeclareMathOperator{\im}{Im}
\DeclareMathOperator*{\argmin}{arg\,min}
\LetLtxMacro\orgvdots\vdots
\LetLtxMacro\orgddots\ddots
\DeclareRobustCommand\vdots{%
  \mathpalette\@vdots{}%
}
\newcommand*{\@vdots}[2]{%
  \sbox0{$#1\cdotp\cdotp\cdotp\m@th$}%
  \sbox2{$#1.\m@th$}%
  \vbox{%
    \dimen@=\wd0 %
    \advance\dimen@ -3\ht2 %
    \kern.5\dimen@
    \dimen@=\wd2 %
    \advance\dimen@ -\ht2 %
    \dimen2=\wd0 %
    \advance\dimen2 -\dimen@
    \vbox to \dimen2{%
      \offinterlineskip
      \copy2 \vfill\copy2 \vfill\copy2 %
    }%
  }%
}
\DeclareRobustCommand\ddots{%
  \mathinner{%
    \mathpalette\@ddots{}%
    \mkern\thinmuskip
  }%
}
\newcommand*{\@ddots}[2]{%
  \sbox0{$#1\cdotp\cdotp\cdotp\m@th$}%
  \sbox2{$#1.\m@th$}%
  \vbox{%
    \dimen@=\wd0 %
    \advance\dimen@ -3\ht2 %
    \kern.5\dimen@
    \dimen@=\wd2 %
    \advance\dimen@ -\ht2 %
    \dimen2=\wd0 %
    \advance\dimen2 -\dimen@
    \vbox to \dimen2{%
      \offinterlineskip
      \hbox{$#1\mathpunct{.}\m@th$}%
      \vfill
      \hbox{$#1\mathpunct{\kern\wd2}\mathpunct{.}\m@th$}%
      \vfill
      \hbox{$#1\mathpunct{\kern\wd2}\mathpunct{\kern\wd2}\mathpunct{.}\m@th$}%
    }%
  }%
}
\begin{document}


\title{
Tannenbaum's gain-margin optimization\\
meets Polyak's heavy-ball algorithm\\[1ex]\large 
Dedicated to the memory of Allen R.~Tannenbaum and Boris T.~Polyak
\thanks{Supported in part by Hong Kong RGC under the project CityU 11203321, CityU 11213322, by the City University of Hong Kong under Project 9380054, and by the NSF/USA (ECCS-2347357), AFOSR/USA (FA9550-24-1-0278), and ARO/USA (W911NF-22-1-0292).}}

\author{Wuwei Wu\orcidlink{0000-0003-3754-8427}\thanks{W.~Wu and J.~Chen are with the Department of Electrical Engineering, City University of Hong Kong, Hong Kong SAR, China (e-mail: \href{mailto:w.wu@my.cityu.edu.hk}{w.wu@my.cityu.edu.hk}; \href{mailto:jichen@cityu.edu.hk}{jichen@cityu.edu.hk}).},
Jie Chen\orcidlink{0000-0003-3171-1887}\footnotemark[2],
Mihailo R.\ Jovanovi\'c\orcidlink{0000-0002-4181-2924}\thanks{M.\ R.\ Jovanovi\'c is with the Ming Hsieh Department of Electrical and Computer Engineering, University of Southern California, Los Angeles, CA 90089 USA (e-mail: \href{mailto:mihailo@usc.edu}{mihailo@usc.edu}).},
and Tryphon T.\ Georgiou\orcidlink{0000-0003-0012-5447}\thanks{T.\ T.\ Georgiou is with the Department of Mechanical and Aerospace Engineering, University of California, Irvine, CA 92697 USA (e-mail: \href{mailto:tryphon@uci.edu}{tryphon@uci.edu}).}}
\maketitle

\begin{abstract}
This paper highlights an apparent, yet relatively unknown link between algorithm design in optimization theory and controller synthesis in robust control. Specifically, quadratic optimization can be recast as a regulation problem within the framework of $\mathcal{H}_\infty$ control.
From this vantage point, the optimality of Polyak's fastest heavy-ball algorithm can be ascertained as a solution to a gain margin optimization problem. The approach is independent of Polyak's original and brilliant argument, and relies on foundational work by Tannenbaum, who introduced and solved gain margin optimization via Nevanlinna--Pick interpolation theory. The link between first-order optimization methods and robust control sheds new light on the limits of algorithmic performance of such methods, and suggests a framework where similar computational tasks can be systematically studied and algorithms optimized. In particular, it raises the question as to whether periodically scheduled algorithms can achieve faster rates for quadratic optimization, in a manner analogous to periodic control that extends the gain margin beyond that of time-invariant control. This turns out not to be the case, due to the analytic obstruction of a transmission zero that is inherent in causal schemes. Interestingly, this obstruction can be removed with implicit algorithms, cast as feedback regulation problems with causal, but not strictly causal dynamics, thereby devoid of the transmission zero at infinity and able to achieve superior convergence rates.
\end{abstract}
%
\begin{keywords}
{Robust control, Nevanlinna--Pick interpolation, first-order optimization methods}
\end{keywords}

\vspace*{.15in}
%


\section{Introduction}

The seeds of our subject, to view optimization methods within the prism of the theory of dynamical systems, have a long history~\cite{tsypkin1971adaptation, tsypkin1973foundations, krasnoselʹskii1989positive, helmke_optimization_1994}. The systematic exploration of the control toolbox for algorithm analysis and design has received renewed impetus in recent years~\cite{bhaya2006control, kashima2007system,lessard2016analysis, hu2017dissipativity, fazlyab_analysis_2018,  muehlebach_dynamical_2019, scherer2021convex, Ugrinovskii2022, ross_generating_2023, ugrinovskii2023robust}. The theme of our paper builds on a link between the celebrated gain margin problem in the modern theory of robust control, due to Allen Tannenbaum, and the foundational heavy-ball algorithm of Boris Polyak, as pointed out by Ugrinovskii et al.~\cite{ugrinovskii2023robust}, and by authors of this paper in~\cite{zhang2024frequency}.

Our exposition begins with a concise account of the gain margin problem and the significance of {\em transmission zeros}, as analytic obstacles that limit performance.
First-order optimization methods are treated next, and are recast as tracking problems, where optimization algorithms serve as control laws to effect feedback regulation. When the objective function is quadratic, the convergence rate is tied to the gain margin that feedback regulation can achieve; we then highlight the fact that optimality of the heavy-ball algorithm can be deduced from the theory of the gain margin problem.

The foundational contributions in modern robust control of the early 1980's were followed by the development of periodic and sampled-data $\mathcal{H}_\infty$ methods. In this, a lifting isomorphism translates time-varying periodic dynamics into time-invariant ones with a special structure.
One of the first observations was that transmission zeros were no longer an obstacle
\cite{khargonekar1985robust,kabamba1987control,Lee1987,francis1988stability},
and thereby, that the achievable gain margin can be  improved; in fact, arbitrarily large gain margin can be achieved.
A succinct exposition of this theory will be provided, and it will be shown that when first-order optimization methods are implemented in a strictly causal
manner, the analytic obstacle due to ``causality zeros'' cannot be removed. Thereby, time-varying schemes cannot beat Polyak's heavy-ball performance.

The insights gained
suggest that faster algorithms need to circumvent the strict causality constraint.
This is indeed the case in
\emph{implicit} iterative first-order optimization schemes. It will be shown that these can likewise be cast as feedback regulation. The enabling difference is that now dynamics are causal but not strictly causal.
As a result, the analytic constraint from the zero at infinity disappears. Superior rates of implicit schemes are thus explained within the framework of gain margin optimization theory. This link provides an approach to optimize implicit optimization schemes, which offer advantages in ill-conditioned problems.
With the help of the circle criterion, the results are extended to algorithms for optimization of not-necessarily quadratic functions.

Our development
enables the analysis and the design of gradient-based algorithms as control problems, which in turn can be tackled by employing rich techniques found in the theory of robust control, based on frequency-domain input-output maps. Particular tools include the Nevanlinna--Pick interpolation method and stability theory of Lur'e systems. These facilitate the development of analytical solutions to analysis and design of first-order optimization algorithms.

The link between the two themes,
provides a framework to study algorithmic performance and robustness.
A case in point arises in applications where the exact value of the gradient is not fully available, e.g., when the objective function is obtained using simulations or noisy measurements. Such instances necessitate the study of noise amplification in optimization algorithms~\cite{mohrazjovTAC21,mohrazjovTAC24}.
This link is also suited for studying distributed optimization, as pursued in~\cite{zhang2024frequency} for special cases of time-invariant and strictly causal schemes.

The paper aims to have an expository value as well, since some of the background theory is not well known. To this end, we begin with a brief self-contained expository on the gain margin problem in Section~\ref{sec:gainmargin}. First-order optimization methods are introduced in Section~\ref{sec:firstorder} and are cast as feedback systems and the optimal choice of parameters as gain margin optimization.
Section~\ref{sec:periodic} discusses periodic control and explores the possibility of periodic schedules in carrying out the computations of first-order methods in a periodic manner, in hopes of alleviating analytic constraints that limit achievable rates; it is shown that this is not possible since the analytic obstacle arises from transmission zeros at infinity that cannot be removed,thus precluding improvements in convergence rates of algorithms including those for certain online optimization tasks.
This circle of ideas segues into Sections~\ref{sec:implicit_section} and \ref{sec:nonquadratic}, where it is explained how implicit methods, cast in a similar manner as feedback systems and implementable via proximal operators, avoid the analytic obstacle at infinity that is due to causality, and thereby achieve superior convergence rates.
The paper concludes with an epilogue, Section~\ref{sec:epilogue}, and a dedication to Allen Tannenbaum and Boris Polyak.
It is our hope that the developments in this paper, born in the confluence of their pioneering contributions, will offer a fruitful and potentially consequential re-thinking of optimization methods.

\subsection*{Notation and Preliminaries}
For a matrix $X$,  $X\TP$ denotes the transpose, $X\HP$ the conjugate transpose, and $\overbar{\sigma}(X)$ the maximum singular value. For symmetric matrices $X$ and $Y$ of equal dimension, $X(\succeq)\succ Y$ indicates that $X-Y$ is positive (semi-)definite. The spectrum of a square matrix $X$ is denoted by $\spec(X)$.
The Kronecker product is denoted by $\otimes$.
We denote the $d$-dimensional Euclidean space by $\IR^d$, in which the inner product of vectors $x$ and $y$ is denoted by $\left< x,y\right> \coloneqq x\TP y$ and the Euclidean norm of $x$ by $\left\|x\right\|$.
The identity matrix in $\IR^{n\times n}$ is denoted by $I_n$, the vector of ones in $\IR^d$ by $\mathbf{1}_d$, and the zero vector in $\IR^d$ by $\mathbf{0}_d$; subscripts are omitted when it is clear from the context.

The open unit disc is denoted by $\bD \coloneqq \{z\in\CI \colon |z|< 1\}$,
its boundary by $\partial\mathbb{D}\coloneqq\{z\in\mathbb{C}\colon|z|=1\}$, and its complement in $\overbar{\CI}\coloneqq\CI\cup\{\infty\}$ by $\bD^c \coloneqq \{z\in \overbar{\CI} \colon |z|\geq 1\}$.
The complex conjugate of $z\in\CI$ is denoted by $\overbar{z}$.
We consider finite-dimensional discrete-time linear time-invariant (LTI) dynamical systems. They are represented by their corresponding (rational) transfer function $G(z)=\widehat{g}(z)\coloneqq\sum^\infty_{t=0} g[t]z^{-t} $, the $\mathcal{Z}$-transform of their impulse response ${\{g[t]\}}^\infty_{t=0}$. A system is
\emph{stable} if its transfer function has no pole in ${{\bD}}^c$, i.e., it is analytic in $\bD^c\setminus\partial\bD$ and bounded on $\partial\bD$. The space that comprises all stable rational transfer functions is denoted by $\mathcal{RH}_{\infty}$.

\section{The gain margin problem and the dawn of robust control}\label{sec:gainmargin}

The stability of the standard feedback interconnection, displayed in Fig.~\ref{fig:feedback}, where $P$ signifies a given ``plant'' and $C$  the ``controller'' that often needs to be designed,
is equivalent to the stability of all ``closed-loop'' transfer functions, from the external inputs $u_0$, $y_0$ to signals generated inside the feedback loop, $u_1$, $y_1$, $u_2$, $y_2$ (Stability of the transfer functions from $u_0$, $y_0$ to $u_1$, $y_1$ suffices).
In control design, of special importance is the transfer function from $y_0$ to $y_1$, i.e., from ``measurement noise'' to ``system output,'' namely,
\begin{equation*}
T(z) \coloneqq P(z)C(z) {\bigl(I+P(z)C(z)\bigr)}^{-1},
\end{equation*}
and is referred to as
the \emph{complementary sensitivity} function.

\begin{figure}[tb]
\centering
\begin{tikzpicture}[>=latex, line width=1pt]
    \tikzstyle{component} = [draw, rectangle, very thick, inner sep=1ex, minimum height=1em]
     \tikzstyle{add} = [draw, circle, inner sep=2pt]
    
          \node[component] (c) {$C(z)$};
     \node[component, above = 3em of c] (p) {$P(z)$};

    \node(a1) [add, right=6em of c]  {};
      \node(a2) [add, left=6em of p] {};

    \draw[->] (a2) --node[above]{$u_1[t]$} (p);
    
    \draw[->] (p) -|node[above,pos=0.23]{$y_1[t]$} node[left,pos=0.9]{$+$} (a1);
    
    \draw[->] (a1) --node[above]{$y_2[t]$} (c);
    
    \draw[->] (c) -| node[above,pos=0.23]{$u_2[t]$} node[pos=0.9,right]{$-$} (a2);
    
    \draw[<-] (a2) -- node[pos=0.8,above]{$u_0[t]$} ++(-4em,0);
    
    \draw[<-] (a1) --node[pos=0.8,above]{$y_0[t]$} ++(4em,0);
       \end{tikzpicture}
\caption{Feedback control system.}
\label{fig:feedback}
\end{figure}
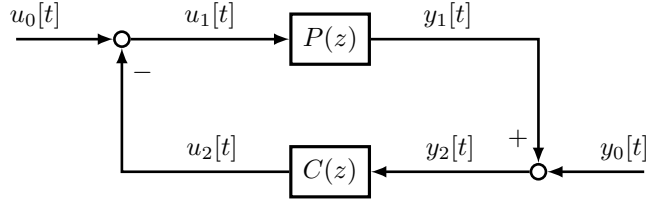

When $P$ and $C$ are scalar transfer functions, i.e., when the corresponding systems have a single input and a single output, the conditions for stability of the feedback interconnection reduce to ensuring that
the complementary sensitivity
is stable and that no unstable pole-zero cancellation takes place between $P(z)$ and $C(z)$.
The problem to design a control law by selecting a suitable transfer function $C(z)$ then reduces to determining the corresponding complementary sensitivity instead, since $C(z)$ can be readily computed from the known $P(z)$.
The benefit of doing so is that the design problem reduces to determining $T(z)$ that is
analytic and bounded in  $\bD^c$,
and satisfies the additional conditions
\begin{equation}\label{eq:intcond}
T(p_i)=1 \;\text{and}\; T(s_j)=0,
\end{equation}
where $p_i$, $s_j$
represent the poles and zeros of $P(z)$ in $\bD^c$, respectively. Additional control design specifications are often conveniently expressed in terms of the range of values of $T(z)$.


Allen Tannenbaum recognized the close linkage between control design objectives and analytic constraints, and brought center stage conformal transformations for expressing control objectives in the formalism of analytic interpolation theory (see~\cite[Chapter 11]{doyle2013feedback} and~\cite{khargonekar1985non}).
This led to the celebrated optimal gain margin problem~\cite{Tannenbaum1980,tannenbaum1982modified},~\cite[Chapter VIII]{tannenbaum2006invariance}:

\vspace*{.5ex}

\noindent
\colorbox{grey}{\begin{minipage}[]{.98\textwidth}
\noindent {\em Tannenbaum's gain margin optimization:}
    Let $P_0(z)$ be the transfer function of a scalar dynamical system. Determine the transfer function $C(z)$ for a feedback control law that stabilizes $P_0$ along with the family
\begin{equation}\label{eq:gain_uncertain}
    \mathcal{P}\coloneqq\left\{ k P_0 \colon k_1\leq k \leq k_2\right\},
\end{equation}
(where we assume that $1\in[k_1,k_2]$),
and achieves a maximal margin $20 \log(k_2/k_1)$ [dB] for the (uncertain) gain factor $k$.
   \end{minipage}}

\vspace*{.5ex}

The plant model $P_0$ is considered the nominal, and a controller is sought to achieve a maximal deviation of $k$ from the nominal value of $1$. Thus, it is assumed that $0<k_1<1<k_2$. Further, one may take $1=\sqrt{k_1k_2}$ to be the geometric mean, so that the feedback system remains stable for an equal deviation 
above and below the nominal value $k=1$.

\subsection{Solution of the gain margin problem}


\begin{figure}[tb]
\centering
\begin{tikzpicture}[>=latex, line width=1pt]
    \tikzstyle{component} = [draw, rectangle, very thick, inner sep=1ex, minimum height=1em]
     \tikzstyle{add} = [draw, circle, inner sep=2pt]

    \begin{scope}[local bounding box=scc]
     \node[component] (p) {$P_0(z)$};
     \node[right = 4em of p.east] (k0) {};
    \node[component, below right=3em and 1em of p.east] (c) {$C(z)$};
        \node(a1) [add, right=4.5em of c]  {};
      \node(a2) [add, left=2.3em of p] {};
      \node(a3) [add, right=2em of k0] {};

    \draw[->] (a3) -| node[left,pos=0.9]{$+$} (a1);
    
    \draw[->] (a1) -- (c);
    
    \draw[->] (c) -| node[pos=0.9,right]{$-$} (a2);

        \draw[->] (a2) -- (p);

    \draw[->] (p) -- (a3);
     
               \begin{pgfonlayer}{background}
                   \path[fill=gray!10, draw=black, dashed, very thick, rounded corners] ([xshift=-1.3em, yshift=1ex] scc.north west) rectangle ([xshift=1.3em, yshift=-1ex] scc.south east);
               \end{pgfonlayer}
    \end{scope}
          
    \node at([xshift=-1.6em, yshift=0.5em] scc.south west)[inner sep=0] {$-T(z)$};

     \node[component, above = 1.5em of k0.north] (k) {$k-1$};

    \draw[->] ([xshift=1.5em]p.east) |- (k);
    \draw[->] (k) -| (a3);

    \draw[<-,dotted] (a2) -- node[pos=0.8,above]{$u_0[t]$} ++(-4em,0);
    
    \draw[<-,dotted] (a1) --node[pos=0.8,above]{$y_0[t]$} ++(4em,0);
       \end{tikzpicture}
\caption{Feedback control system under gain uncertainty.}
\label{fig:gain_uncertain}
\end{figure}
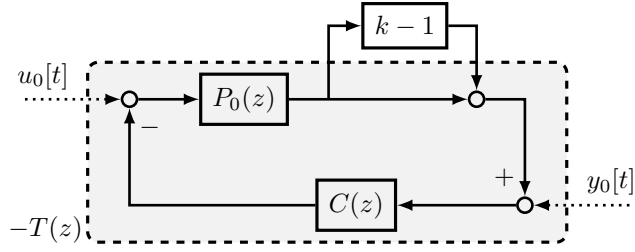

 Fig.~\ref{fig:gain_uncertain} is a re-drawing of Fig.~\ref{fig:feedback}
 that exemplifies the uncertain family as generated by an added forward path with gain $k-1$.
 The transfer function from the output of this added component, represented by the box with gain $k-1$, to its input, can be computed as being the \emph{negative of} (as highlighted in Fig.~\ref{fig:gain_uncertain})
 \begin{equation}\label{eq:T0}
    T(z)=\frac{P_0(z)C(z)}{1+P_0(z)C(z)}.
\end{equation}
 This is the nominal complementary sensitivity.
Hence, to guarantee robust stability, other than the stability of $T(z)$ (i.e., the standard notion of nominal stability of the system in Fig.~\ref{fig:gain_uncertain}),
it is necessary and sufficient that
its closed-loop characteristic function contains no zero, that is,
$1+(k-1)T(z)\neq 0$, or equivalently, $T(z)\neq \frac{-1}{k-1}$ for all $z\in\bD^c$ and $k\in[k_1,k_2]$.
%
As such, the permissible range of values of $\zeta=T(z)$ (for $z\in\mathbb D^c$) must exclude values in
\begin{equation}\label{eq:Sk1k2}
    \mathbb{S}_{k_1,k_2} \coloneqq \Bigl\{\zeta\in\IR \colon \zeta \leq \frac{-1}{k_2-1} \;\text{or}\; \zeta \geq \frac{-1}{k_1-1} \Bigr\}.
\end{equation}
It is easy to see that the composition of three conformal maps,
\begin{alignat*}{3}
   \phi_{k_1,k_2} \;&\colon  \zeta &&\mapsto\;
w={\frac{1+(k_1-1)\zeta}{1+(k_2-1)\zeta}},\\
 \sqrt{\cdot} \;&\colon w &&\mapsto\;
v=\sqrt{w},\\
 \psi \;&\colon v &&\mapsto\;
u=\frac{1-v}{1+v}
\end{alignat*}
map successively the range of permissible values, namely, the complement set $\mathbb{S}_{k_1,k_2}^c$, first into the complement  ${\left\{z \in \mathbb{C} \colon \re(z) \leq 0, \im(z) =0 \right\}}^c$ of the negative real line, then into the right half complex plane, and finally into $\mathbb D$.

This composition map
\begin{equation}\label{eq:Phi}
u=\mathbf{\Phi}_{k_1,k_2}(\zeta) := \psi\circ \sqrt{\phi_{k_1,k_2}(\zeta)},
\end{equation}
giving
\begin{equation*}
    u=
\frac{\sqrt{1+(k_2-1)\zeta} - \sqrt{1+(k_1-1)\zeta}}{\sqrt{1+(k_2-1)\zeta} + \sqrt{1+(k_1-1)\zeta}},
\end{equation*}
can be readily inverted so that
\begin{equation}\label{eq:inv_conformal}
        \zeta = \mathbf{\Phi}_{k_1,k_2}^{-1}(u)
        = {\left( \tfrac{k_2-k_1}{4} \left( u + {u}^{-1}\right) - \tfrac{k_2+k_1}{2} +1 \right)\!}^{-1}.
\end{equation}

To recap, the sought controller can be obtained from a $T(z)$, analytic in $\mathbb D^c$,  that satisfies \eqref{eq:intcond} with range that excludes $\mathbb{S}_{k_1,k_2}$. It readily follows that
\begin{equation}\label{eq:bfT}
    \mathbf{T}(z):= \mathbf{\Phi}_{k_1,k_2}\circ T(z)
\end{equation}
must also be analytic in $\mathbb D^c$ with range in $\mathbb D$, and satisfy
\begin{subequations}\label{eq:intcond1}
\begin{align}
\mathbf T(p_i) &= \mathbf{\Phi}_{k_1,k_2}(1)=g(k_1,k_2)\coloneqq\frac{\sqrt{k_2/k_1}-1}{\sqrt{k_2/k_1}+1},\\
\mathbf T(s_j) &= \mathbf{\Phi}_{k_1,k_2}(0)=0.
\end{align}
\end{subequations}
To construct such a $\mathbf T(z)$ is the classical Nevanlinna--Pick interpolation problem. We briefly review the essentials before we complete the expository on Tannenbaum's gain margin problem.

\subsection{Rudiments of analytic interpolation theory}\label{sec:interpolation}

Two pillars of the classical analytic interpolation theory are the maximum modulus theorem and the Mobius transformation. 
The first states that unless the function is constant throughout, the maximum modulus can only be achieved on the boundary.
The second pillar allows a systematic way to reduce the number of interpolation constraints via a succession of conformal transformations in the Nevanlinna--Pick algorithm. Both can be encapsulated in Schur's lemma:

\newcommand{\bT}{\mathbf{T}}
\emph{A function $\bT(z)$ is analytic in $\mathbb D^c$ (which can be replaced by any other conformally equivalent domain), satisfies $\bT(z_0)=w_0$ and maps $\mathbb D^c$ to $\mathbb D$ if and only if
\begin{equation*}
    \bT_{\rm next}(z):= \frac{1-\overbar{z_0} z}{z-z_0}\frac{\bT(z)-w_0}{1-\overbar{w_0}\overbar{\bT(z)}}
\end{equation*}
is analytic in $\mathbb D^c$ and maps $\mathbb D^c$ to $\mathbb D$.}

Any additional interpolation conditions on $\bT(z)$ are inherited by $\bT_{\rm next}(z)$, revised as per the conformal map. For example, if it is required that $\bT(z_i)=w_i$, then $\bT_{\rm next}(z_i)$ can be readily computed from their conformal relation shown above. The problem to ascertain whether such a function exists, and to construct one such, then becomes simpler at each successive step by removal of one interpolation condition until only one condition is left, in which case a constant function can be chosen. The process in reverse can be used to construct $\bT(z)$. This is the \emph{Nevanlinna--Pick algorithm} \cite[Section 9.3]{doyle2013feedback}.

Thus, successive application of Schur's lemma provides a solution to Tannenbaum's problem in the most general case when $P_0(z)$ has multiple poles and zeros in $\mathbb D^c$. For the problem at hand, where we want to assess the optimality of the heavy-ball method in the next section, we only need to consider nominal plants $P_0(z)$ that have unstable poles at $p_{i}$, $i=1,\dots,n_{p}$, and a non-minimum phase zero at $s=\infty$. In this case, and for the conditions in \eqref{eq:intcond1}, the gain margin problem has a solution if and only if
\begin{equation}\label{eq:unmistable}
 \frac{\sqrt{k_2/k_1}-1}{\sqrt{k_2/k_1}+1} =  g(k_1,k_2) < \prod_{i=1}^{n_p} \frac{1}{\left| p_i \right|} ,
\end{equation}
and is achieved by taking
\begin{equation}\label{eq:unmistable_T}
    \bT(z) =  g(k_1,k_2)
    \frac{\prod_{i=1}^{n_p} \overbar{p_{i}} ( z -  p_{i} ) - \prod_{i=1}^{n_p} ( \overbar{p_{i}} z -  1 )}{\prod_{i=1}^{n_p} (p_{i}^{-1} z - 1 ) - \prod_{i=1}^{n_p} ( \overbar{p_{i}} z -  1 )}
\end{equation}
and constructing the corresponding controller $C(z)$ via \eqref{eq:inv_conformal} and \eqref{eq:T0}.
For future reference, we also note that if $P_0(z)$ has no non-minimum phase zeros, even at $\infty$, i.e., it is causal but not strictly causal, then the achievable gain margin $k_2/k_1$ can be made arbitrarily large.

\section{First-order optimization: from Cauchy to Polyak and Nesterov}\label{sec:firstorder}

The significance of the problem to
\begin{equation}\label{eq:opt_problem}
	\mathrm{minimize} \left\{f(x)\in\mathbb R \colon x\in\IR^d\right\}
\end{equation}
is fundamental across engineering disciplines. To address such problems, there is a plethora of methods and literature, with the gradient descent method that can be traced back to Cauchy and the Newton method as perhaps the two most widely known.
However, when one considers minimization problems in very high dimensions, Newton's method is not an option and gradient descent is often plagued by very slow convergence.
To this end, accelerated methods have been devised, most notably,
Polyak's heavy-ball method for quadratic problems \cite{Polyak1964} and Nesterov's method for convex functions that followed shortly afterward \cite{nesterov2018lectures}. We herein focus on Polyak's method.

 In the sequel, we initially consider general functions $f(x)$ to show how to formulate an optimization algorithm as a feedback system, and we then specialize to
quadratic  functions in the form of
\begin{equation}\label{eq:quadratic_cost}
    f(x) = \frac{1}{2} {x}\TP Q x - {q}\TP x,
\end{equation}
where $Q\in\IR^{d\times d}$ is  symmetric and positive definite, with spectrum $\spec(Q)\subset [\mu,\ell]$.
We denote the class of such functions by $\mathcal{Q}_{\mu,\ell}$, which is a subset of $\mathcal{F}_{\mu,\ell}$ constituted by functions that are both $\mu$-strongly convex and $\ell$-Lipschitz smooth as defined below.
\begin{definition}\label{def:convex}
    A function $f\colon\IR^d\to\IR$ is said to be \emph{$\mu$-strongly convex} for some $\mu>0$ if the function $f(x)-\frac{\mu}{2} \|x\|^2$ is convex,
 and is said to be \emph{$\ell$-Lipschitz smooth} if its gradient $\nabla f$ is $\ell$-Lipschitz continuous, i.e., $\|\nabla f(x)-\nabla f(y)\|\leq \ell\|x-y\|$ for all $x,y\in\IR^d$.
\end{definition}

If $f$ is $\ell$-Lipschitz smooth, then $\frac{\ell}{2} \|x\|^2 - f(x)$ is convex \cite[Lemma 1.2.3]{nesterov2018lectures}. Hence, if $\mathcal{F}_{\mu,\ell}$ is not empty, then $\mu\leq\ell$.
Focusing on the optimization of quadratic functions $f\in\mathcal{Q}_{\mu,\ell}$, Polyak introduced the following method  \cite{Polyak1964}.

\vspace{.5ex}


\noindent
\colorbox{grey}{\begin{minipage}[]{.98\textwidth}
{\em Polyak's heavy-ball method:} It consists of the iterative algorithm
\begin{equation}\label{eq:heavy_ball}
   x[t+1] = x[t] + {\left(\frac{\sqrt{\ell}-\sqrt{\mu}}{\sqrt{\ell}+\sqrt{\mu}}\right)\!}^2 \left(x[t]-x[t-1]\right) - \frac{4}{{\bigl( \sqrt{\ell} + \sqrt{\mu} \bigr)\!}^2}  \nabla f(x[t]),
\end{equation}
shown by Polyak to achieve convergence rate 
\begin{equation}\label{eq:rho_min}
    \rmin \coloneqq
    \frac{\sqrt{\kappa}-1}{\sqrt{\kappa}+1},
\end{equation}
with $\kappa={\ell}/{\mu}$ the \emph{condition number} of $f(x)$. It was shown by Nesterov \cite[Theorem 2.1.13]{nesterov2018lectures} that this rate is in fact the best possible (see also \cite{ugrinovskii2023robust} and the discussion in Section \ref{sec:periodic_limitation}).
\end{minipage}}

\vspace{.5ex}

We formally define convergence rate below, linked to the zero-input response of a stable feedback system.
\begin{definition}\label{def:converge}
	A sequence ${\left\{{e}[t]\right\}}_{t=0}^{\infty}$ is said to {\em converge to zero at a rate} $\rho\in \left(0,1\right)$ if
 \begin{equation}\label{eq:rho_def}
 \rho = \inf \{ \gamma \in (0,1) \colon \gamma^{-t} {e}[t]\to\mathbf{0} \text{ as }  t\to\infty\},
 \end{equation}
and then, $\rho$ is referred to as the \emph{convergence rate of} ${\left\{\mathbf{e}[t]\right\}}_{t=0}^{\infty}$. Moreover, if the zero-input response of a dynamical system converges to zero at a rate $\rho$, the system is said to be \emph{$\rho$-stable}.
\end{definition}

\begin{remark}\label{rmk:converge}
    The convergence rate $\rho$, referred to as \emph{linear convergence rate} in numerical analysis, can be alternatively defined by any of the following equivalent expressions,
    \begin{subequations}
            \begin{align}
       \rho &=\inf\{\gamma\in (0,1) \colon \|e[t]\| \leq c \gamma^t \|e[0]\| \text{ for some }c\} \label{eq:rho1}\\
        &= \limsup_{t\to \infty} \|e[t+1]\| / \|e[t]\| \label{eq:rho2}\\
        &= \limsup_{t\to \infty} \|e[t]\|^{1/t} \label{eq:rho3}.
    \end{align}
    \end{subequations}
These are commonplace in the optimization literature (e.g., \cite{Polyak1964, goujaud2023heavyball}), and recently, in control-theoretic studies of optimization algorithms as well (e.g., \cite{lessard2016analysis, scherer2021convex, hu2017dissipativity, Ugrinovskii2022, ugrinovskii2023robust}).
Our choice of \eqref{eq:rho_def} as the defining expression is meant to highlight the fact that ${\left\{{e}[t]\right\}}_{t=0}^{\infty}$ converges to zero at a rate $\rho$ if and only if its $\mathcal{Z}$-transform is analytic in ${\left\{ z \in \mathbb{C} \colon |z| > \rho \right\}}$.
It should be noted that $\rho$ quantifies \emph{asymptotic convergence} and does not reflect transient behavior.
Thus, it is possible that an algorithm may have poor performance due to transient effects that are not reflected in the corresponding value for $\rho$.
    \null\hfill$\Diamond$
\end{remark}

The parallel between \eqref{eq:rho_min} and \eqref{eq:unmistable} is unmistakable. Next, we explain how optimization schemes can be seen as control problems, so as to highlight the parallels, and reconvene to re-establish the optimality of Polyak's method with time-invariant as well as periodic computing schemes. In passing, we note that the heavy-ball method has been studied for various classes of functions beyond quadratic functions (see, e.g., \cite{Ugrinovskii2022, goujaud2023heavyball}).

\subsection{Optimization methods as control systems}\label{sec:opt_control}

First-order optimization methods utilize solely gradient information on the function $f(x)$ to be minimized. The simplest case of gradient descent takes the form
\begin{equation}\label{eq:example}
x[t+1]=x[t]-\alpha \nabla f(x[t]),
\end{equation}
where $\alpha$ is a step size.
In general, assuming that $x^\star$ is the solution to the optimization problem \eqref{eq:opt_problem} and letting
$e[t] \coloneqq x^\star-x[t]$
be the optimization error, the gradient descent iteration in \eqref{eq:example} can be seen as the response of a dynamical system with transfer function
\begin{equation*}
    G(z)=\frac{\alpha}{z-1}
\end{equation*}
 and input $-\nabla f(x[t])$. The scheme is drawn in Fig.~\ref{fig:alg_feedback} so as to highlight the nature of the problem as a \emph{tracking problem}. This can be seen by rewriting the equation \eqref{eq:example} in the $\mathcal{Z}$-domain error form
\begin{equation*}
    \widehat{e}(z) = \frac{z}{z-1} x^\star -G(z) \widehat{\Delta_f(e)} (z),
\end{equation*}
where $\widehat{\Delta_f(e)} (z)$ is the $\mathcal{Z}$-transform of the sequence ${\{ \Delta_f(e[t]) \}}_{t=0}^\infty$, and
\begin{equation}\label{eq:Delta}
    \Delta_f(e[t]) \coloneqq - \nabla f(x^\star-e[t]) = - \nabla f(x[t])
\end{equation}
is memoryless with $\Delta_f(\mathbf{0})=\mathbf{0}$ and in many cases nonlinear.
Assuming that each coordinate of $-\nabla f(x[t])$ is treated equally, (i.e., the algorithm is dimension-free), the transfer function takes the form $G(z)I_d$, with $G(z)$ a scalar quantity.
The dynamics in $G(z)$ can be quite general, suitably enhanced to retain past information so as to effect suitable processing and accelerate convergence. Clearly, the feedback scheme as drawn always relies on first-order derivative information. As a special case, it includes the structure of the heavy-ball method, as we will explain shortly.

\begin{figure}[tb]
\centering
\begin{tikzpicture}[>=latex, line width=1pt]




    \tikzstyle{component} = [draw, rectangle, very thick, inner sep=1ex, minimum height=1em]
     \tikzstyle{add} = [draw, circle, inner sep=3pt]

     \node[component] (d) {$\Delta_f(\cdot)$};

      \node(a1) [add, left=3.6em of d] {};

      \node[component, right = 6.5em of d] (m) {${G}(z)I_d$};

    \draw[->] (a1) --node[above,pos=0.4]{$e[t]$} (d);
    \draw[->] (d) --node[above,pos=0.47]{$-\nabla f(x[t])$} (m);
    \draw[->] (m.east) -- node[above,pos=0.5]{$x[t]$} ++(4em,0);
    \draw[<-] (a1) -- node[pos=0.8,above]{$x^\star$} ++(-3em,0);
    \draw[->] ([xshift=2em]m.east) -- ++(0,-4em) -| node[pos=0.88,right=-0.5ex]{$-$} (a1);

       \end{tikzpicture}
\caption{First-order optimization algorithm as a feedback system.}
\label{fig:alg_feedback}
\end{figure}
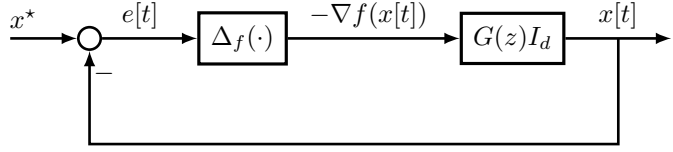

Inspecting the scheme in Fig.~\ref{fig:alg_feedback}, the problem of designing an algorithm that steers $x[t]$ towards $x^\star$ is equivalent to synthesizing a controller such that the output $x[t]$ tracks the reference step signal with \emph{unknown} magnitude $x^\star$.
It follows from the well-known \emph{internal model principle}, a classical result in control theory \cite{francis1975internal, francis_internal_1976}, that for ${x}[t]$ to track $x^\star$, $G(z)$ must contain the mode of the reference signal; that is, in the present case $G(z)$ must have an accumulator (i.e., a pole at $1$ that provides the analog of an integrator in discrete time).

The response of the feedback system to a step signal $x^\star$ is identical to the response due to an initial state of $1/(z-1)$ without external excitation.
Thus, the convergence of the optimization algorithm is equivalent to the asymptotic stability of the feedback system without exogenous input, starting from a suitable initial condition. We summarize as follows.

\begin{proposition}\label{prop:LTI}
     Let $f$ be differentiable and let the minimizer $x^\star\in\IR^d$ be a unique stationary point of $f$.
     Then, the optimization algorithm characterized by transfer function $G(z)$ converges to the unknown minimizer $x^\star$ if and only if the feedback system in Fig.~\ref{fig:alg_feedback} is asymptotically stable without exogenous input and $G(z)$ has an accumulator.
\end{proposition}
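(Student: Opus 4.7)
The plan is to prove both directions by invoking the internal model principle and reducing the step-tracking problem to a pure stability question via a shift of coordinates.

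First, I would fix a state-space realization $(A,B,C)$ of $G(z)I_d$, so that the closed loop in Fig.~\ref{fig:alg_feedback} takes the autonomous form
\begin{equation*}
\xi[t+1] \;=\; A\xi[t] - B\,\nabla f(C\xi[t]), \qquad x[t] \;=\; C\xi[t].
\end{equation*}
Since $x^\star$ is the unique stationary point of $f$, the convergence $x[t]\to x^\star$ is equivalent to $\nabla f(x[t])\to 0$; that is, the input to $G(z)$ must vanish asymptotically.

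For the forward direction, suppose $x[t]\to x^\star$. Because $x[t]=C\xi[t]\to x^\star$ while the input $-\nabla f(x[t])\to 0$, the block $G(z)$ must be able to sustain a nonzero constant output from an asymptotically vanishing input. This forces $A$ to have an eigenvalue at $1$, which is precisely the accumulator requirement. Moreover, the trajectory $\xi[t]$ must converge to an equilibrium $\xi^\star$ satisfying $A\xi^\star=\xi^\star$ and $C\xi^\star=x^\star$, and after the shift $\tilde\xi[t]\coloneqq\xi[t]-\xi^\star$ this convergence is exactly asymptotic stability of the autonomous closed loop around the origin.

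For the converse, suppose $G(z)$ contains a pole at $z=1$ and the autonomous feedback is asymptotically stable. The accumulator guarantees that the equations $A\xi=\xi$ and $C\xi=x^\star$ admit a common solution $\xi^\star$, which is an equilibrium of the closed-loop dynamics since $\nabla f(x^\star)=0$. Shifting coordinates by $\xi^\star$ recasts tracking as stabilization about the origin, and the asymptotic stability hypothesis then yields $\xi[t]\to\xi^\star$, hence $x[t]\to x^\star$.

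The main obstacle is making the internal-model argument fully rigorous for nonlinear $f$: the existence of the equilibrium $\xi^\star$ and the convergence of the full state (not merely its output) require care, particularly in separating what is forced by linear dynamics of $G(z)$ from what is imposed by the nonlinear feedback through $\nabla f$. For quadratic $f$ the argument collapses to a linear calculation via the final value theorem applied to $(I+G(z)Q)^{-1}G(z)Q\,x^\star$, where a pole of $G(z)$ at $z=1$ forces the steady-state gain to equal $I$ and hence the limit to equal $x^\star$; for general differentiable $f$, the nonlinear asymptotic stability hypothesis carries the burden, together with the uniqueness of $x^\star$ as a stationary point to rule out spurious limits.
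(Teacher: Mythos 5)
Your proof is correct in spirit and follows essentially the same route as the paper: the paper disposes of this proposition in one line by citing the internal model principle (Francis--Wonham), and your argument is an elaboration of exactly that—recasting step-tracking as stabilization about a shifted equilibrium whose existence is secured by the pole of $G(z)$ at $z=1$. You also candidly flag the remaining rigor burden (passing from output convergence to state convergence, and the nonlinearity of $\nabla f$), which the paper itself does not address; for a complete argument one would invoke minimality of the chosen realization of $G(z)I_d$ so that the accumulator mode is both controllable and observable, ensuring the equilibrium $\xi^\star$ exists for every $x^\star$ and that no hidden unstable/marginal modes escape the stability hypothesis.
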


\begin{proof}
It follows by internal model principle \cite{francis_internal_1976}. \end{proof}

We emphasize that $G(z)$ must be strictly causal in order to avoid circular dependence, i.e., using $\nabla f(x[t+1])$ in the computation of $x[t+1]$. (We note in passing that the circular dependence can however be circumvented using e.g., implicit schemes, cf.~\ref{sec:implicit}.) Thus, in Fig.~\ref{fig:alg_feedback}, the transfer function $G(z)$ constitutes a causally implementable and convergent algorithm provided that $G(z)$ can be factored as
\begin{equation}\label{eq:factorize_G}
  G(z)=  \frac{1}{z-1} \, C(z)
\end{equation}
where $C(z)$ is a proper transfer function with $C(1)\neq 0$.

Consequently, the design of an optimization algorithm amounts to synthesizing a controller $C(z)$ to stabilize the plant that consists of a linear component with transfer function $\frac{1}{z-1}$ composed with the possibly nonlinear component $\Delta_f(\cdot)$.

\subsection{Quadratic optimization as a gain margin problem}\label{sec:quadratic}

We now examine a quadratic objective function~\eqref{eq:quadratic_cost}.
The goal is to design an algorithm that converges fast for all $f\in\mathcal{Q}_{\mu,\ell}$.
In this case, $\Delta_f(\cdot)$ simplifies to $\Delta_f \colon e[t]\mapsto Qe[t]$,
thereby yielding an LTI feedback system in Fig.~\ref{fig:alg_feedback}.

The closed loop system in Fig.~\ref{fig:alg_feedback} has transfer function of the forward path $G(z)Q$, and thereby, the characteristic equation of the closed loop is $\det(I_d+G(z)Q)=0$.
We now let $\{\lambda_1,\lambda_2,\ldots,\lambda_d\}$ be the eigenvalues of $Q$. Thus, the spectral decomposition of $Q$ is $U\diag(\lambda_1,\lambda_2,\ldots,\lambda_d)U\TP$, with $U$ an orthogonal matrix, and the feedback system of
Fig.~\ref{fig:alg_feedback} decomposes into $d$ parallel negative-unity feedback loops
with loop gains $\lambda_i G(z)$, $i\in\{1,2,\ldots,d\}$.

The closed-loop poles determine the convergence rate of the free response. Specifically,
if the zero-input response $\{ \gamma^{-t}e[t]\}_{t=0}^{\infty}$ converges, the closed-loop poles
must have modulus smaller than $\gamma$ and, thereby, the feedback loop with $G(z)$ replaced by $G(\gamma z)$ must be stable. This is the case for each of the $d$ parallel feedback loops.
Thus, since the eigenvalues of $Q$ lie in the interval $[\mu,\ell]$,  $C(\gamma z)$ must stabilize the plant $\frac{\lambda_i}{\gamma z-1}$ for any choice of $\lambda_i\in[\mu,\ell]$.

This is a gain-margin problem which we formalize as follows. We choose the geometric mean of $\mu$ and $\ell$ as the nominal gain of the plant, and we let
\begin{equation}\label{eq:Pgamma}
    P_\gamma (z)\coloneqq \frac{\sqrt{\mu \ell}}{\gamma z-1} \quad\text{and}\quad C_\gamma (z) \coloneqq  C(\gamma z),
\end{equation}
represent transfer functions of the nominal plant and controller, respectively.
For a suitable choice of a controller $C_\gamma$, we seek to extend the range of allowable values for the gain symmetrically with respect to this geometric mean.
The controller needs to guarantee stability for any $P\in\mathcal{P}$, where $\mathcal{P}$ is defined in~\eqref{eq:gain_uncertain} with $P_0=P_\gamma$, $k_1=\sqrt{\kappa^{-1}}$ and $k_2=\sqrt{\kappa}$, for $\kappa=\ell/\mu$.

Thus, $P_\gamma (z)$ is a plant with a single unstable pole at $z=1/\gamma$ and a non-minimum phase zero at $z=\infty$. This aligns with the special case considered in Section~\ref{sec:interpolation}.
Then, by \eqref{eq:unmistable},
\begin{equation*}
    \gamma > \frac{\sqrt{\kappa}-1}{\sqrt{\kappa}+1}=\rmin,
\end{equation*}
being the tightest bound possible, and hence $\rmin$ is the fastest achievable convergence rate for LTI optimization algorithms.
As a final step we construct $G(z)$ that achieves $\rmin$. From \eqref{eq:unmistable_T},
\begin{equation*}
    \bT_\gamma(z) = \frac{\rmin}{\gamma z},
\end{equation*}
and from \eqref{eq:inv_conformal} and \eqref{eq:T0},
\begin{align*}
    G(\gamma z) &= (\mu \ell)^{-1/2} P_\gamma (z) C_\gamma (z) \\
    &=
    \frac{4\rmin \gamma z}{\left(\ell-\mu\right)\left(\rmin^2+\gamma^2 z^2\right) - 2\left(\ell+\mu\right)\rmin\gamma z}.
\end{align*}
In light of the expression for $\rmin$ in \eqref{eq:rho_min},
\begin{equation}\label{eq:heavy_ball_tf}
        G(z) = \frac{4\rmin}{\ell-\mu} \frac{z}{\left(  z -1 \right) \left( z -\rmin^2 \right)}
\end{equation}
is the sought transfer function that achieves the optimal rate $\rmin$. The result shows that  \eqref{eq:heavy_ball_tf} is the transfer function that gives rise to Polyak's heavy-ball method in the $\mathcal{Z}$-domain.





In the sequel we explain Nesterov's stronger result that the same bound holds for the more general time-varying optimization schemes. We will do so by exploring subsequent work by Tannenbaum and his collaborators on gain margin optimization with periodic linear control. We wish to note that Nesterov's conclusion is based on a brilliant construction of an infinite-dimensional quadratic function for which Polyak's asymptotic convergence rate cannot be exceeded (see, e.g., \cite{ugrinovskii2023robust}).
The analysis we provide next shows that Nesterov's bound persists even for finite-dimensional optimization problems.

\section{First-order optimization: periodic schemes}
\label{sec:periodic}

We herein consider schemes where the step size of first-order optimization follows a periodic schedule.
In this, we are motivated by the apparent advantages of periodic control in
extending the allowable gain margin in the context of robust control problems, as we will explain in the subsequent section.


\subsection{Gain margin optimization with periodic control}\label{sec:periodic_control}

Since the mid 1980's, it has been known that the flexibility afforded by linear periodic control has the potential to exceed the performance of LTI control for linear systems, by virtue of relaxing the analytic constraints imposed by non-minimum phase zeros. We next detail the design of periodic control and show that, for the problem at hand, the analytic constraint that arises from causality prevents periodic control from achieving gain margin beyond what can be achieved with LTI control.

The design of periodic control proceeds by ``lifting'' the time-periodic structure to a time-invariant one, albeit for a new set of inputs and outputs that are augmented in dimension, see, e.g., \cite{khargonekar1985robust,francis1988stability}. To explain this, we first define
 the backward shift operator
\begin{equation*}
    U\colon \{x[0],x[1],x[2],\dots\} \mapsto \{0,x[0],x[1],x[2],\dots\},
\end{equation*}
having transfer function $z^{-1}$.
A linear {\em causal} dynamical system $F$ is time-invariant if and only if it commutes with $U$, i.e., $FU=UF$, and it is $n$-periodic if and only if it commutes with $U^n$, i.e., $FU^n=U^nF$. The lifting technique proceeds by vectorizing $n$ successive values of signals, which is realized by introducing the linear operator
\begin{equation*}
    W \colon
    \{x[0],x[1],\dots,x[n],\dots\}
  \mapsto  \left\{  \begin{bmatrix}
        x[0] \\ x[1] \\ \vdots \\ x[n-1]
    \end{bmatrix} ,
    \begin{bmatrix}
        x[n] \\ x[n+1] \\ \vdots \\ x[2n-1]
    \end{bmatrix},\dots \right\} .
\end{equation*}
This process is reversible, hence $W^{-1}$ existing.
It is known that if $F$ is $n$-periodic, then $\widetilde{F}\coloneqq WFW^{-1}$ is time-invariant \cite{khargonekar1985robust}.

Furthermore, if the $n$-periodic system $F$ is (strictly) causal, the $n\times n$ LTI system $\widetilde{F}$ has a transfer function matrix $\widetilde{F}(z)$ that is (strictly) lower triangular matrix at $z=\infty$. To see this, it is helpful to write out the respective action of ${F}$ and $\widetilde{F}$ on time sequences. Hence, if
\begin{equation*}
    F \colon \left\{ u[0],u[1],\dots,u[n],u[n+1],\dots \right\} \mapsto 
    \left\{ y[0],y[1],\dots,y[n],y[n+1],\dots \right\},
    \end{equation*}
    then
    \begin{equation*}
    \widetilde{F} \colon \left\{ \!\begin{bmatrix}
        u[0] \\ u[1] \\ \vdots \\ u[n-1]
    \end{bmatrix},
    \begin{bmatrix}
        u[n] \\ u[n+1] \\ \vdots \\ u[2n-1]
    \end{bmatrix},\dots \right\} \mapsto  
    \left\{ \begin{bmatrix}
        y[0] \\ y[1] \\ \vdots \\ y[n-1]
    \end{bmatrix} ,
    \begin{bmatrix}
        y[n] \\ y[n+1] \\ \vdots \\ y[2n-1]
    \end{bmatrix},\dots \right\}.
\end{equation*}
Because $F$ is causal, $y[i]$ does not depend on $u[j]$ when $i<j$, and hence
the $(i,j)$-th element of the $n\times n$ transfer function matrix $\widetilde{F}(z)$ at $z=\infty$ is zero for $j<i$, i.e., $\widetilde{F}(\infty)$ {\em has a lower triangular structure}. (Likewise, $y[n\tau+i-1]$ does not depend on $u[n\tau+j-1]$, for any $\tau\in \mathbb N$.)  With the same reasoning, an $n\times n$ transfer function matrix $\widetilde{F}(z)$ {\em can be associated with a (strictly) causal linear $n$-periodic single-input single-output system if and only if $\widetilde{F}(\infty)$ is (strictly) lower triangular}.

An LTI system $P$ is trivially $n$-periodic, for any period $n$. In fact, its transfer function
$P(z)=\sum_{k=0}^{\infty} p[t] z^{-t}$ can be written as
\begin{equation*}
    P(z) =
    P_1(z^n) + z^{-1}P_2(z^n) + \dots + z^{-(n-1)}P_n(z^n)
\end{equation*}
and lifted to a multi-input multi-output (MIMO) system with the transfer function \cite{khargonekar1985robust}
\begin{equation}\label{eq:lift_LTI}
        \widetilde{P}(z) \coloneqq WP(z)W^{-1} 
        = \setlength{\arraycolsep}{0pt}
        \begin{bNiceMatrix}[columns-width=auto]
        P_1(z) & z^{-1}P_n(z) & \cdots & z^{-1}P_2(z) \\
        P_2(z) & P_1(z) & \ddots & \vdots \\
        \vdots & \ddots & \ddots & z^{-1}P_n(z) \\
        P_n(z) & \dots & P_2(z) & P_1(z)
    \end{bNiceMatrix}, 
\end{equation}
    where
\begin{equation*}
    P_i(z) = \sum_{t=0}^{\infty} p[tn+i-1] z^{-t}, \quad i=1,2,\dots,n.
\end{equation*}%
Thus, a feedback connection of a SISO LTI plant $P$ with a SISO linear $n$-periodic controller $C$, viewed as an $n$-periodic system, can be simultaneously lifted to form an LTI MIMO feedback system with $n\times n$ respective transfer function matrices.
The stability and input-output induced norms of the feedback system remain the same as for its lifted counterpart.

If $p$ is a pole of an LTI system $P$ then $p^n$ is a pole of the corresponding lifted $\widetilde{P}(z)$. The relation between zeros is substantially different. Finite zeros of $\widetilde{P}(z)$ can be arbitrarily assigned \cite[Section VI]{francis1988stability}, \cite{khargonekar1985robust}, \cite{kabamba1987control} (possibly by introducing a suitable pre-compensation).
As a consequence, periodic control can improve the gain margin beyond what is achievable by LTI control by removing the obstruction due to finite non-minimum phase zeros \cite{khargonekar1985robust,francis1988stability}.  However, zeros at infinity due to strictly causality of $P$ remain, and so does the corresponding obstruction in improving the gain margin.
By extending the theory in Section II-A
to gain margin optimization for MIMO systems, utilizing the approach in Cockburn and Tannenbaum \cite{Cockburn1996multivariable}, we obtain as a corollary the following proposition which is tailored to the case of interest of system without finite non-minimum phase zero.

\begin{proposition}\label{thm:periodic}
    Consider a nominal plant that is strictly causal  without finite non-minimum phase zero. (Due to strict causality, the plant has a non-minimum phase zero at $\infty$.) The maximum gain margin achievable by linear time-periodic controllers is the same as that achievable by LTI controllers.
\end{proposition}

\begin{proof} See Appendix~A.
\end{proof}

\subsection{Optimization algorithms with periodic schedule}

We herein consider optimization algorithms with a periodic schedule. In the simplest form, one can imagine gradient descent with periodically varying step-size, as in \cite{grimmer_provably_2024, altschuler_acceleration_2025}. More generally, optimization algorithms may take advantage of a more sophisticated dynamical periodic structure that extends momentum methods.
In a similar manner as in Section~\ref{sec:opt_control}, periodic optimization can be expressed as tracking with periodic feedback regulation. Thus, in light of the potential of periodic control to improve the gain margin of feedback systems, and thereby the speed of optimization, since the two are intertwined as explained in Section~\ref{sec:quadratic}, also motivated by the results in \cite{grimmer_provably_2024, altschuler_acceleration_2025}, our task is to explore the potential of time periodic algorithms in that respect.

In this section, we first illustrate the application of the lifting technique to express algorithms with a periodic schedule as a corresponding feedback regulation system shown in Fig.~\ref{fig:multi_grad}.
We explain the structure of the ``lifted'' dynamics first for the case of gradient descent with $n$-periodic schedule, and then the structure of a $2$-periodic momentum method.
We show how these can be cast in the configuration shown in Fig.~\ref{fig:multi_grad}.

\begin{figure}[tb]
\centering
	\begin{tikzpicture}[>=latex, line width=0.75pt]
    \tikzstyle{component} = [draw, rectangle, very thick, inner sep=1ex, minimum height=1em]
     \tikzstyle{add} = [draw, circle, inner sep=3pt]

     \node[component] (g) {$\widetilde{G}(z)\otimes I_d$};
     \node[component, above = 3em of g] (d) {$\begin{smallmatrix}
      \Delta_f(\cdot) &  &  \\
       & \ddots &  \\
       &  & \Delta_f(\cdot) \\
     \end{smallmatrix}$};

      \node(a2) [add, left=5em of d] {};

    \draw[->] (a2) --node[above,pos=0.45]{$\begin{bsmallmatrix}
      e_1[\tau] \\ \vdots \\ e_n[\tau]
    \end{bsmallmatrix}$} (d);
    \draw[->] (d) --  ++(10em,0) |- node[above,pos=0.73]{$\begin{bsmallmatrix}
      -\nabla f(x_1[\tau]) \\ \vdots \\ -\nabla f(x_n[\tau])
    \end{bsmallmatrix}$} (g);
    \draw[->] (g) -| node[above,pos=0.27]{$\begin{bsmallmatrix}
      x_1[\tau] \\ \vdots \\ x_n[\tau]
    \end{bsmallmatrix}$} node[pos=0.92,right=-0.3ex]{$-$} (a2);
    \draw[<-] (a2) -- node[pos=0.9,above]{$\begin{bsmallmatrix}
      {x}^\star \\ \vdots \\ {x}^\star
    \end{bsmallmatrix}$} ++(-4em,0);
       \end{tikzpicture}
	\caption{First-order periodic algorithm as a lifted feedback system.}
	\label{fig:multi_grad}
\end{figure}
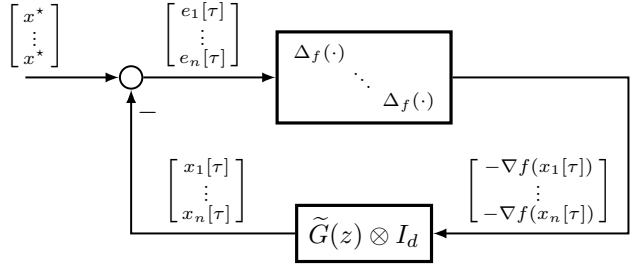

Just as in the LTI algorithms discussed earlier, we assume that at each iteration during the cycle the coordinates $-\nabla f(x[\tau])$ are treated in the same way. Thus, the linear component in the feedback loop has the form of a Kronecker structure $\widetilde{G}(z)\otimes I_d$, in which $\widetilde{G}(z)$ is the $n\times n$ transfer function matrix that characterizes the lifted $n$-periodic algorithm. Note that the time index for the dynamics in Fig.~\ref{fig:multi_grad} is $\tau$, as time indexing contracts accordingly for the lifted system. The nonlinear component in the feedback loop stacks up $n$-copies of $\Delta_f$ defined in \eqref{eq:Delta}.



\subsubsection*{Gradient descent with $n$-periodic scheduling}
	Successive steps of the algorithm make use of step size from a predetermined list of $n$-values, $\{\alpha_1,\ldots,\alpha_n\}$, selected in a periodic manner.
 Thus, the algorithm takes the form
 \begin{equation}\label{eq:periodic}
     x[t+1] = x[t] - \alpha[t] \nabla f(x[t]), \text{ for } t=0,1,2,\ldots
 \end{equation}
 with $\alpha[t]=\alpha_{t\,\mathrm{mod}\,n+1}$.
 In order to transition to a time-invariant structure via the lifting technique, we group $n$ successive values of the state
 \begin{equation*}
     x[t] = x[i-1+n \tau ] \eqqcolon x_i[\tau], \;\text{for}\; i =1,\ldots,n,
 \end{equation*}
 in blocks, that advance in time $\tau$ that represents the integer part of ${t}/{n}$. Thus, we can rewrite the time-periodic relation \eqref{eq:periodic} as the system of time-invariant relations
	\begin{align*}
		x_1[\tau+1] &= x_n [\tau] - \alpha_1 \nabla f(x_n [\tau]) \\
            x_2[\tau+1] &= x_1[\tau+1] - \alpha_2 \nabla f(x_1[\tau+1]) \\
            & \vdots \\
            x_n[\tau+1] &= x_{n-1}[\tau+1] - \alpha_n \nabla f(x_{n-1}[\tau+1]).
	\end{align*}
Referring to Fig.~\ref{fig:multi_grad}, the periodic algorithm can be expressed as a time-invariant feedback loop by taking
	\begin{equation*}
		\widetilde{G}(z) = \frac{1}{z-1} \begin{bmatrix}
		    \alpha_2 & \alpha_3 & \cdots & \alpha_{n} & \alpha_1 \\
                \alpha_2 z & \alpha_3 & \cdots & \alpha_n & \alpha_1 \\
                \vdots & \vdots & \ddots & \vdots & \vdots \\
                \alpha_2 z & \alpha_3 z & \cdots & \alpha_n  & \alpha_1 \\
                \alpha_2 z & \alpha_3 z & \cdots & \alpha_n z & \alpha_1
		\end{bmatrix}.
	\end{equation*}

\subsubsection*{The 2-periodic momentum algorithm}
We further exemplify the process of grouping successive steps where the algorithmic parameters change in a periodic manner by analyzing the momentum method from this point of view, selecting the period to be $2$. Specifically, in Fig.~\ref{fig:multi_grad}, we consider
 \begin{equation*}
     x[t+1] = x[t] - \alpha[t] \nabla f( x[t]) - \eta[t] \nabla f(x[t-1])
     + \beta[t] \left(x[t] - x[t-1] \right)
 \end{equation*}
 with parameters taking values from predetermined lists $\{\alpha_1,\alpha_2\}$, $\{\beta_1,\beta_2\}$, $\{\eta_1,\eta_2\}$ according the periodic scheme
 \begin{equation*}
\alpha[t]=\alpha_{t\,\mathrm{mod}\, 2 +1},\quad
\beta[t]=\beta_{t\,\mathrm{mod}\, 2 +1},\quad
\eta[t]=\eta_{t\,\mathrm{mod}\, 2 +1}.
 \end{equation*}
Once again, setting
 $x_i[\tau]=x[i-1+2 \tau]$, we have
	\begin{align*}
		x_1[\tau+1] &= x_2[\tau]  - \alpha_1 \nabla f( x_2[\tau]) - \eta_1 \nabla f(x_1[\tau])  + \beta_1 \left( x_2[\tau] - x_1[\tau] \right) \\
  x_2[\tau+1] &= x_1[\tau+1]  - \alpha_2 \nabla f( x_1[\tau+1]) - \eta_2 \nabla f(x_2[\tau])  + \beta_2 \left( x_1[\tau+1] - x_2[\tau] \right).
	\end{align*}
Applying the $\mathcal Z$-transform on the above, we readily see that
 \begin{equation*}
     \widetilde{G}(z) = \frac{1}{(z-1) (z-\beta_1 \beta_2 )}
  \begin{bmatrix}
      z+\beta_2 & 1+\beta_1 \\
      z (1+\beta_2) & z+\beta_1
  \end{bmatrix}
  \begin{bmatrix}
      \eta_1 &\!\! \alpha_1 \\ \alpha_2 z &\!\! \eta_2
  \end{bmatrix}.
 \end{equation*}






\subsection{Limitation of algorithms with periodic schedules}\label{sec:periodic_limitation}

In spite of potential advantages of periodic control in improving robustness of control systems,  periodic scheduling fails to improve convergence in optimization algorithms.
%
To see this, as before, the control problem in Fig.~\ref{fig:multi_grad} can be viewed as a tracking problem for a  step reference that is now vector-valued with magnitude
\begin{equation}\label{eq:xstar_repeat}
	\mathbf{x}^\star \coloneqq \allone{n} \otimes x^\star.
\end{equation}
Once again, $x^\star$ represents the solution to \eqref{eq:opt_problem}.
Similar to the analysis of LTI algorithms, $\widetilde{G}(z)$ must have an accumulator.

However, at present, $\widetilde{G}(z)$ is a MIMO transfer function and, thereby, has poles and zeros that are manifested possibly along specific directions. The directionality of poles and zeros can be captured using coprime fractional representations of $\widetilde{G}(z)$ (see, e.g., \cite{vidyasagar2011control}, \cite{chen_logarithmic_2000}). To this end, consider the ``left fraction''
\begin{equation}\label{eq:coprime_factorization}
    \widetilde{G}(z)={\widetilde{M}(z)\!}^{-1} \widetilde{N}(z),
\end{equation}
with ``denominator'' and ``numerator'' matrices $\widetilde{M}(z), \widetilde{N}(z)$ of suitable dimensions and entries in $\mathcal{RH}_\infty$, which are said to be ``coprime'' by virtue of the fact that they satisfy the equation
$\widetilde{N}X+\widetilde{M}Y=I$ for suitable matrices $X$, $Y$ with entries in $\mathcal{RH}_\infty$. The matrix $\widetilde{G}(z)$ that we employ will be square and of full normal rank, in which case,
any pole $p$ of $\widetilde{G}(z)$ satisfies $\widetilde{M}(p)v=\mathbf{0}$ for some vector $v$ \cite[Section II-C]{chen_logarithmic_2000}; this vector $v$ represents a direction that excites $p$-related dynamics.


The following proposition is the counterpart of Proposition~\ref{prop:LTI} for periodic algorithms. It explains that the accumulator, that corresponds to a pole at $z=1$, must act along a direction $\allone{n}$ that is dictated by the nature of the reference signal in \eqref{eq:xstar_repeat}.

\begin{proposition}\label{prop:Minv_1}
Let $f$ be differentiable and let the minimizer $x^\star\in\IR^d$ be a unique stationary point of $f$.
     Then, the $n$-periodic optimization algorithm characterized by the $n\times n$ transfer function matrix $\widetilde{G}(z)$ converges to the (unknown) minimizer  $x^\star$ if and only if the feedback system in Fig.~\ref{fig:multi_grad} is asymptotically stable for exogenous input set to zero, and $\widetilde{M}(z)$ in the coprime factorization \eqref{eq:coprime_factorization} satisfies the interpolation condition
	\begin{equation}\label{eq:Minv_1}
 {\widetilde{M}(1)}\allone{n}=\mathbf{0}_{n}.
	\end{equation}
\end{proposition}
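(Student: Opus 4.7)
The plan is to extend the proof of Proposition~\ref{prop:LTI} via the internal model principle, now applied to the lifted MIMO feedback system of Fig.~\ref{fig:multi_grad}. The iteration is recast as a tracking problem whose reference is the vector-valued step $\mathbf{x}^\star=\mathbf{1}_n\otimes x^\star$, so convergence $x[t]\to x^\star$ is equivalent to the lifted error $\mathbf{e}[\tau]\to\mathbf{0}$. Two ingredients are required: (a) asymptotic stability of the zero-input feedback system, so that transients decay toward the equilibrium $\mathbf{x}_{\mathrm{eq}}=\mathbf{1}_n\otimes x^\star$ (which is an equilibrium because $\nabla f(x^\star)=\mathbf{0}$ and is unique by hypothesis); (b) perfect steady-state tracking of the step reference, which demands an ``accumulator'' in the loop. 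The distinguishing feature, compared with the SISO case, is that the reference is non-decaying only along the specific direction $\mathbf{1}_n\in\IR^n$, so the accumulator must be oriented along this direction.

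For necessity, stability follows because convergence for every admissible $f$ and every initial condition forces the closed loop around $\mathbf{x}_{\mathrm{eq}}$ to absorb perturbations with zero exogenous input. To extract the interpolation condition, I would apply the final-value theorem to the tracking error: writing the sensitivity from $\mathbf{x}^\star$ to $\mathbf{e}$ via the left coprime factorization $\widetilde{G}(z)=\widetilde{M}(z)^{-1}\widetilde{N}(z)$, its behaviour at $z=1$ is governed by $\widetilde{M}(1)$. Since $\mathbf{x}^\star$ enters with $\mathcal{Z}$-transform proportional to $(\mathbf{1}_n\otimes x^\star)\frac{z}{z-1}$, zero steady-state error along the direction $\mathbf{1}_n$ is equivalent to $\widetilde{M}(1)\mathbf{1}_n=\mathbf{0}_n$, which, by the discussion preceding the proposition, is precisely the statement that $\widetilde{G}(z)$ has a pole at $z=1$ with direction $\mathbf{1}_n$. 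For sufficiency, asymptotic stability makes the equilibrium attractive, and the interpolation condition cancels the pole of the step input along $\mathbf{1}_n$ inside the sensitivity, so no persistent component of the error can remain.

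The main obstacle is the rigorous bookkeeping that connects ``tracking a step in direction $\mathbf{1}_n$'' with the algebraic condition $\widetilde{M}(1)\mathbf{1}_n=\mathbf{0}_n$, in the presence of two complications: the Kronecker factor $\otimes I_d$, which must be peeled off because $\Delta_f$ acts identically and block-diagonally across the $n$ lifted coordinates, and the fact that $\Delta_f$ is in general nonlinear. For the latter I would linearise around the equilibrium using $\nabla^2 f(x^\star)$ and invoke the linear internal model principle of~\cite{francis1975internal}; alternatively, one observes that only the steady-state behaviour of $\Delta_f$ near zero matters, and uniqueness of $x^\star$ as a stationary point guarantees that $\Delta_f(\mathbf{0})=\mathbf{0}$ is the only algebraic consistency solution at equilibrium. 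Once these issues are handled, both implications reduce to the same internal model argument that underlies Proposition~\ref{prop:LTI}.
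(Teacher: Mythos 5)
Your proposal is essentially the paper's argument: write the loop equations in the $\mathcal{Z}$-domain, pre-multiply by $(z-1)\bigl(\widetilde{M}(z)\otimes I_d\bigr)$ to exploit the coprime factorization, send $z\to1$, and read off $\widetilde{M}(1)\mathbf{1}_n=\mathbf{0}_n$ from the fact that the identity must hold for every $x^\star\in\IR^d$, with the converse supplied by the internal model principle. One small caution: the linearization via $\nabla^2 f(x^\star)$ that you float as a first option is not available under the stated hypothesis that $f$ is merely differentiable, but your fallback route—arguing directly that $\mathbf{e}[\tau]\to\mathbf{0}$ and continuity of $\Delta_f$ force $\widehat{\Delta_f(\mathbf{e})}(z)$ and $\widehat{\mathbf{e}}(z)$, after multiplication by $(z-1)$, to vanish as $z\to1$ since $\widetilde{M},\widetilde{N}\in\mathcal{RH}_\infty$—is precisely how the paper sidesteps this and is the one to use.
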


\begin{proof}
The feedback equations for Fig.~\ref{fig:multi_grad} in the $\mathcal{Z}$-domain read
\begin{equation*}
     \widehat{\mathbf{e}} (z) = \tfrac{z}{z-1} \mathbf{x}^\star -
    \bigl(\widetilde{G}(z)\otimes I_d\bigr)
    \widehat{\Delta_f(\mathbf{e})} (z),
\end{equation*}%
where $\widehat{\mathbf{e}} (z)$ and $\widehat{\Delta_f(\mathbf{e})} (z)$ respectively denote the $\mathcal{Z}$-transform of the input and output of the nonlinear component, i.e.,
\begin{equation*}
    \widehat{\Delta_f(\mathbf{e})} (z) \coloneqq \begin{bmatrix}
        \widehat{\Delta_f(e_1)}(z) \\ \vdots \\ \widehat{\Delta_f(e_n)}(z)
    \end{bmatrix},\quad
    \widehat{\mathbf{e}} (z) \coloneqq
    \begin{bmatrix}
        \widehat{e}_1(z) \\ \vdots \\ \widehat{e}_n(z)
    \end{bmatrix}.
\end{equation*}
	Pre-multiplying both sides by $\left( z - 1 \right)\bigl({\widetilde{M}(z)}\otimes I_d\bigr)$, we get
 \begin{equation*}
              (z-1)
    \Bigl( \bigl(\widetilde{N}(z)\otimes I_d\bigr)
    \widehat{\Delta_f(\mathbf{e})} (z) +
    \bigl({\widetilde{M}(z)}\otimes I_d\bigr)
   \widehat{\mathbf{e}} (z) \Bigr) = {z} \bigl({\widetilde{M}(z)}\otimes I_d\bigr) \mathbf{x}^\star.
     \end{equation*}
 Since $\widetilde{M}(z), \widetilde{N}(z)\in\mathcal{RH}_\infty$,
 if ${e}_i[\tau]$ converges to zero for all $i=1,\dots,n$, then the left-hand side of the above equation tends to $\mathbf{0}_{nd}$ as $z\to 1$. Hence,
 \begin{equation*}
	\mathbf{0}_{nd} = {\bigl( \widetilde{M}(1) \otimes I_d \bigr)} \left(\allone{n} \otimes x^\star\right) = \left( {M(1)}\allone{n}  \right)\otimes x^\star
 \end{equation*}
 holds for all ${x}^\star\in\IR^d$, which establishes \eqref{eq:Minv_1}. The converse is a consequence of the internal model principle \cite{francis1975internal, francis_internal_1976}.
\end{proof}

Similar to the analysis of LTI algorithms, any periodic algorithm must avoid circular dependence as well, and accordingly, $\widetilde{G}(z)$ must assume a structure that enforces causality in carrying out needed computations. This is stated next.

\begin{proposition}\label{prop:strictly_G}
	An $n$-periodic algorithm characterized by the $n\times n$ transfer function matrix $\widetilde{G}(z)$ avoids circular dependence if and only if
$\widetilde{G}(\infty)$ is a strictly lower triangular matrix.
\end{proposition}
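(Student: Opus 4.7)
The plan is to reduce the claim to the structural characterization already recorded in Section~\ref{sec:periodic_control}: an $n\times n$ transfer function matrix $\widetilde{F}(z)$ represents a strictly causal linear $n$-periodic SISO system under the lifting $W$ if and only if $\widetilde{F}(\infty)$ is strictly lower triangular. My task then reduces to showing that ``avoids circular dependence'' in the feedback loop of Fig.~\ref{fig:multi_grad} is precisely strict causality, in the original time index $t$, of the operator that $\widetilde{G}$ encodes.

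First I would unpack the block indexing. With $x_i[\tau] = x[n\tau + i - 1]$ and the corresponding stacking of $-\nabla f(x_i[\tau])$ as inputs, the $(i,j)$-entry of $\widetilde{G}(\infty)$ is exactly the coefficient of the instantaneous (same-$\tau$) dependence of $x_i[\tau]$ on $-\nabla f(x_j[\tau])$, i.e.\ on the gradient evaluated at original time $n\tau + j - 1$. Since $x[n\tau + i - 1]$ must be available before $\nabla f(x[n\tau + i - 1])$ can be computed, any use of $\nabla f(x_j[\tau])$ with $j > i$ is acausal, and use with $j = i$ is a self-reference; in both cases closing the outer loop through $\Delta_f$ forces the algorithm to solve an algebraic equation in which $x_i[\tau]$ appears on both sides through $\nabla f$, which is the very definition of circular dependence. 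Conversely, use only for $j < i$ or for earlier block times $\tau' < \tau$ is strictly causal and implementable step by step.

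Having made this correspondence precise, the proof collapses to a direct application of the lifting characterization. For the ``if'' direction, strictly lower triangular $\widetilde{G}(\infty)$ means that the underlying SISO $n$-periodic dynamics are strictly causal, so the iterates $x[t]$ may be computed sequentially, each using only gradients at strictly earlier times. For the ``only if'' direction, a nonzero entry of $\widetilde{G}(\infty)$ on or above the diagonal produces, after closing with $\Delta_f$, an instantaneous coupling $x_i[\tau] = \cdots - \widetilde{G}(\infty)_{ij}\nabla f(x_j[\tau]) + \cdots$ with $j \geq i$, preventing purely forward computation.

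I expect no substantive obstacle: the content of the proposition is really just the feedback-closure interpretation of the previously stated lifting fact. The only step that warrants care is making the bookkeeping between $t$ and $\tau$ unambiguous, and pointing out that it is the presence of the nonlinear feedback through $\Delta_f$ (rather than the forward block $\widetilde{G}$ in isolation) that turns any direct feedthrough on or above the diagonal into a genuine algebraic loop in the gradient evaluations.
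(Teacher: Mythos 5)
Your proof is correct and follows essentially the same route as the paper's: both identify the $(i,j)$-entry of $\widetilde{G}(\infty)$ with the within-block dependence of $x_i$ on $\nabla f(x_j)$, observe that the block entries $x_1,\dots,x_n$ are produced in increasing order of $i$ so that any dependence with $j\ge i$ (including $j=i$ through the feedback closure with $\Delta_f$) creates an algebraic loop in the gradient evaluations, and then invoke the lifting characterization from Section~\ref{sec:periodic_control}. The minor difference in bookkeeping ($\tau$ versus $\tau+1$) is cosmetic, and your extra emphasis on the role of $\Delta_f$ in turning a feedthrough into a genuine circular dependence is a helpful clarification rather than a departure.
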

\begin{proof}
The proof is similar to the reasoning in Section~\ref{sec:periodic_control} that explains the lower triangular structure of $\widetilde{F}(\infty)$ for a causal periodic system $F$.
From the construction of the lifted system, $x_i[\tau+1]$ is computed before $x_j[\tau+1]$ for $1\leq i < j\leq n$. Thus, to avoid circular dependence, the computation of $x_i[\tau+1]$ cannot exploit $x_j[\tau+1]$ for $i \leq j$. The proof is complete by noting that the
$\left( i,j\right)$-th element of $\widetilde{G}(\infty)$ is nonzero if and only if computing $x_i[\tau+1]$ requires $\nabla f(x_j[\tau+1])$.
\end{proof}


For the case of a quadratic cost function \eqref{eq:quadratic_cost}, as in Section~\ref{sec:quadratic}, the design of an optimization algorithm for all $f\in\mathcal{Q}_{\mu,\ell}$ amounts to synthesizing a controller $C_\gamma$ that achieves gain margin $\kappa=\ell/\mu$ for the nominal plant $P_\gamma$ in \eqref{eq:Pgamma}. As for designing $n$-periodic algorithms, the controller $C_\gamma$ is $n$-periodic. The flexibility afforded by periodic control has the promise to exceed the gain margin of time-invariant control. However, Proposition~\ref{thm:periodic} states that if the nominal plant is strictly causal without no finite non-minimum phase zero, the maximum gain margin achievable by linear time-periodic controllers is the same as that achievable by LTI controllers. The nominal plant $P_\gamma$ shares the same structure.

In various existing works, the optimality of the heavy-ball method is established from different perspectives.
In \cite[Theorem 2.1.13]{nesterov2018lectures}, a ``worst-case'' quadratic function is constructed to show that any algorithm that generates a sequence of points $\{x[t] \}$ such that
\begin{equation*}
    x[t+1] - x[0] \in  \mathrm{span} \{ \nabla f(x[0]),\dots, \nabla f(x[t])\},
\end{equation*}
which includes any linear time-varying algorithm, cannot exceed the convergence rate $\rmin$ defined in \eqref{eq:rho_min}. The fastest convergence rate \eqref{eq:rho_min} is achieved by Polyak's heavy-ball method, which is a time-invariant algorithm. However, the dimension of the ``worst-case'' quadratic function grows with the number of iterations to be considered. In \cite{arjevani_lower_2016}, a dimension-independent lower bound of convergence rate is established for any algorithm that only uses current and one-step past information, which includes the heavy-ball method. The lower bound is shown to be $\rmin$.

Our analysis explains why time-varying algorithms fail to improve the convergence rate---the fundamental obstacle being the zero at infinity due to causality. Thus, a faster rate than $\rmin$ can only be achieved by a nonzero direct feedthrough gain $G(\infty)$, as in the algorithmic structure discussed in Section~\ref{sec:implicit_section} on implicit algorithms.

\subsection{Online optimization}

As a final note on periodic schemes, we discuss online optimization.
To this end, we consider the time-varying function
\begin{equation*}
    f_t(x) = \frac{1}{2} x\TP Q x - q_t\TP x,
\end{equation*}
where $Q=Q\TP$ with $\spec(Q)\subset[\mu,\ell]$, as before, and $q_t\in\IR^d$ varies with time. We consider the practically important case where ${\{q_{t} \}}_{t=0}^{\infty}$ has a rational $\mathcal{Z}$-transform $\widehat{q}(z)$ with poles $\pi_{1},\dots,\pi_{r}$ on $\partial\mathbb{D}$, and seek algorithms that asymptotically track the instantaneous minimizer $x_t^{\star}=Q^{-1} q_t$, see \cite{bastianello_internal_2024, wu2025fundamental, van2025temporal}.
As in Section~\ref{sec:opt_control}, the algorithm-design problem can be cast as a robust tracking problem with reference signal $x^{\star}[t]=x_t^{\star}$, which inherits the same modal content as $q_t$.

By the internal model principle, asymptotic tracking of $x_t^\star$ requires that $G(z)$ has poles at $\pi_{1},\dots,\pi_{r}$. The nominal plant $P_{\gamma}(z)$ in \eqref{eq:Pgamma} is accordingly augmented to
\begin{equation*}
    P_{\gamma}(z) = \frac{\sqrt{\mu\ell}}{z} \prod_{i=1}^{r} \frac{z}{\gamma z-\pi_i},
\end{equation*}
which has unstable poles at $z=\pi_i/\gamma$ for $i=1,\dots,r$ and matches the special case treated in Section~\ref{sec:interpolation}.
Then, by \eqref{eq:unmistable},
\begin{equation*}
    \gamma^{r} = \prod_{i=1}^{r} {\Bigl|\frac{\gamma}{\pi_i}\Bigr|} > \frac{\sqrt{\kappa}-1}{\sqrt{\kappa}+1},
\end{equation*}
and therefore the fastest achievable convergence rate is
\begin{equation}\label{eq:rhoT}
    \rho_{\mathrm{T}} \coloneq {\left( \frac{\sqrt{\kappa}-1}{\sqrt{\kappa}+1} \right)}^{{1}/{r}}.
\end{equation}
Mutatis mutandis, as in Section~\ref{sec:quadratic}, we obtain $G(z)$ that attains this fastest rate $\rho_{\mathrm{T}}$:
from \eqref{eq:unmistable_T},
\begin{equation*}
    \bT_\gamma(z) = \frac{\rho_{\mathrm{T}}^{r}}{\gamma^{r}} \frac{\prod_{i=1}^{r} ( \gamma z -  \pi_{i} ) - \gamma^{r} \prod_{i=1}^{r} ( z -  \gamma \pi_{i} )}{\gamma^{r} \prod_{i=1}^{r} (\gamma z - \pi_{i} ) - \prod_{i=1}^{r} ( z -  \gamma \pi_{i} )}.
\end{equation*}
In light of \eqref{eq:inv_conformal} and \eqref{eq:T0}, with $\gamma = \rho_{\mathrm{T}}$, a lengthy calculation gives
\begin{equation*}
        G(z) = \frac{\rho_{\mathrm{T}}^{2r} \prod_{i=1}^r{( z -\pi_{i}  )}^{2} - (1+\rho_{\mathrm{T}}^{2r}) \prod_{i=1}^{r}{(z -\pi_{i}) (z -\rho_{\mathrm{T}}^2 \pi_{i}) } + \prod_{i=1}^r{( z -\rho_{\mathrm{T}}^2 \pi_{i}  )}^{2}}{4\ell\mu {\bigl(\sqrt{\ell}+\sqrt{\mu}\bigr)\!}^{-2} \prod_{i=1}^{r}{(z -\pi_{i}) (z -\rho_{\mathrm{T}}^2 \pi_{i})} }.
\end{equation*}
Note that when $r=1$ and $\pi_{1}=1$, $G(z)$ reduces to \eqref{eq:heavy_ball_tf}.

In \cite{wu2025fundamental}, a similar approach was used to analyze the convergence rate for the online optimization problem when $x_{t}^\star$ is polynomial in $t$, i.e., $\pi_{i}=1$, $i=1,\dots,r$, and obtained the same expression \eqref{eq:rhoT}.
Specializing $G(z)$ given above to $\pi_{i}=1$ gives the expression proposed in \cite{wu2025fundamental} as well.
We finally note that Proposition~\ref{thm:periodic} implies that time-periodic algorithms cannot exceed the rate $\rho_{\mathrm{T}}$ in this online optimization problem.

\section{First-order optimization: implicit algorithms}\label{sec:implicit_section}

As highlighted in the previous section, the strict causality of the nominal $P$ limits the achievable convergence rate of optimization algorithms.
Thus, we now explore the possibility of 
bypassing this obstacle
by allowing the optimization to utilize information in a causal, but not strictly causal, manner.
%
%
This amounts to allowing the corresponding transfer function $G (z)$, from $u [t] \coloneqq -\nabla f (x [t])$ to $x [t]$, to have a  non-vanishing direct feedthrough gain $\delta = G (\infty)$, thereby 
requiring a suitable adaptation of Tannenbaum's gain margin analysis.
We refer to the corresponding algorithms as \emph{implicit},
to highlight the usage in numerical analysis literature of the term when both $x [t+1]$ and $\nabla f (x [t+1])$ appear simultaneously in iterative steps.

\subsection{Implicit algorithms for quadratic functions}\label{sec:bi_quadratic}

As in Section~\ref{sec:quadratic}, the design of an implicit algorithm requires synthesizing controller $C_\gamma$ to achieve gain margin $\kappa=\ell/\mu$ for the nominal plant $z P_\gamma(z)$ which is proper, but no longer strictly proper.
Since $z P_\gamma(z)$ is minimum phase, the achievable gain margin can now be made arbitrarily large. Thereby, for quadratic functions, implicit algorithms can achieve convergence rates that depend on the direct feedthrough gain $\delta=G(\infty)$, and can in principle be made arbitrarily fast.

\begin{theorem}\label{thm:bi_quadratic}
    Consider the optimization algorithm characterized by the transfer function $G(z)$ with the direct feedthrough gain $\delta=G(\infty)$. The algorithm achieves $\rho$-convergence for all quadratic functions $f\in\mathcal{Q}_{\mu,\ell}$ if and only if
\begin{equation}\label{eq:bi_quadratic}
        \rho \geq
        \frac{\sqrt{\kappa + \ell\delta } - \sqrt{1+\ell\delta}}{\sqrt{\kappa+\ell\delta } + \sqrt{1+\ell\delta}}.
    \end{equation}
\end{theorem}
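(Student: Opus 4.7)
My plan is to reduce the question to a two-point Nevanlinna--Pick interpolation problem exactly along the lines of Section~\ref{sec:quadratic}, the sole new ingredient being that fixing the feedthrough $\delta = G(\infty)$ imposes a second interpolation condition at $z=\infty$. First I would substitute $z \mapsto \rho z$ and diagonalize $Q = U\Lambda U\TP$ with $\lambda_i \in [\mu,\ell]$; $\rho$-convergence over $f\in\mathcal{Q}_{\mu,\ell}$ is then equivalent to stability of the $d$ parallel SISO loops $1 + \lambda\,\tilde G(z) = 0$ for every $\lambda\in[\mu,\ell]$, where $\tilde G(z) \coloneqq G(\rho z)$. Treating this as Tannenbaum's problem with nominal gain $\lambda_0 = \sqrt{\mu\ell}$ and uncertainty factor $k\in[1/\sqrt\kappa,\sqrt\kappa]$, the nominal complementary sensitivity $T(z) = \lambda_0\tilde G(z)/(1+\lambda_0\tilde G(z))$ must be analytic in $\overbar{\bD}^c$, avoid $\mathcal{S}_{1/\sqrt\kappa,\sqrt\kappa}$, and satisfy
\[
T(1/\rho) = 1,\qquad T(\infty) = \frac{\lambda_0\delta}{1 + \lambda_0\delta}.
\]
The first condition is forced by the accumulator pole in $G$; the second---the novelty with respect to Section~\ref{sec:quadratic}---replaces the zero-interpolation $T(\infty) = 0$ of the strictly causal case with a prescribed interior value dictated by $\delta$.

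Next I would apply the conformal map $\mathbf{\Phi}_{k_1,k_2}$ from \eqref{eq:Phi} with $(k_1,k_2)=(1/\sqrt\kappa,\sqrt\kappa)$, obtaining $\mathbf{T} \coloneqq \mathbf{\Phi}_{k_1,k_2}\circ T \colon \overbar{\bD}^c \to \bD$ with interpolation data
\[
\mathbf{T}(1/\rho) = \rmin, \qquad \mathbf{T}(\infty) = w_0^* \coloneqq \mathbf{\Phi}_{k_1,k_2}\!\left(\tfrac{\lambda_0\delta}{1+\lambda_0\delta}\right).
\]
After the change of variable $w = 1/z$ mapping $\overbar{\bD}^c$ onto $\bD$, this becomes a two-point Nevanlinna--Pick problem on the unit disc with data $(\rho,\rmin)$ and $(0,w_0^*)$. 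Schur's lemma applied at the origin (Section~\ref{sec:interpolation}) collapses the two-point problem to the one-point solvability condition
\[
\rho \;\ge\; \frac{\rmin - w_0^*}{1 - \rmin\, w_0^*},
\]
which is equivalent to positivity of the $2\times 2$ Pick matrix; the extremal (Blaschke-product) interpolant furnishes achievability and, after inverting through \eqref{eq:inv_conformal} and \eqref{eq:T0} and undoing the $z\mapsto\rho z$ shift, produces an admissible $G$ with $G(\infty) = \delta$ and an accumulator, so that Proposition~\ref{prop:LTI} certifies it as a valid implicit algorithm.

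What remains is to verify that the Schur bound above reduces to the symmetric form in \eqref{eq:bi_quadratic}. A direct calculation using \eqref{eq:Phi} together with the identities $\sqrt\kappa\,\sqrt{\mu\ell} = \ell$ and $\sqrt{\mu\ell}/\sqrt\kappa = \mu$ gives $w_0^* = (\sqrt{1+\ell\delta}-\sqrt{1+\mu\delta})/(\sqrt{1+\ell\delta}+\sqrt{1+\mu\delta})$; substituting $\rmin = (\sqrt\kappa-1)/(\sqrt\kappa+1)$ into the Schur bound and using $\sqrt{\kappa(1+\mu\delta)} = \sqrt{\kappa+\ell\delta}$ to eliminate the $\sqrt\mu$ factors yields the right-hand side of \eqref{eq:bi_quadratic}. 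Because each conceptual step above is a direct analogue of the strictly causal argument in Section~\ref{sec:quadratic}, the main obstacle I anticipate is purely algebraic: carrying the book-keeping through the composition $\psi\circ\sqrt{\cdot}\circ\phi_{k_1,k_2}$ and the two-point Pick identity without losing the symmetry that produces the clean closed form in \eqref{eq:bi_quadratic}.
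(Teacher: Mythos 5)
Your proposal is correct and follows essentially the same route as the paper's proof: both reduce the problem to a two-point Nevanlinna--Pick interpolation for the conformally transformed complementary sensitivity, with the new feedthrough-induced interior constraint $\mathbf{T}(\infty) = \mathbf{\Phi}_{k_1,k_2}\bigl(\tfrac{\sqrt{\mu\ell}\,\delta}{1+\sqrt{\mu\ell}\,\delta}\bigr)$ replacing the strict-causality condition $\mathbf{T}(\infty)=0$, and the Schur/Pick solvability criterion then collapses to the stated bound. Your algebraic verification (via $a=\sqrt{\kappa}$, $b=\sqrt{1+\ell\delta}$, $c=\sqrt{1+\mu\delta}$, so the bound becomes $(ac-b)/(ac+b)$ with $ac=\sqrt{\kappa+\ell\delta}$) checks out and matches the paper's closed form.
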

\vspace{1ex}
\begin{proof}
    The proof is similar to the derivation in Section~\ref{sec:quadratic} that constructs the heavy-ball method using the gain margin maximizer. The key difference lies in the interpolation condition on the nominal complementary sensitivity $T_\gamma(z)$. Since $G(\infty)=\delta$, we have
    \begin{equation*}
        T_\gamma(\infty)=\frac{\delta\sqrt{\mu\ell}}{1+\delta\sqrt{\mu\ell}}
    \end{equation*}
    and consequently, via \eqref{eq:bfT},
    \begin{equation*}
        \mathbf{T}_\gamma (\infty) = \frac{\sqrt{1+k_2 \delta \sqrt{\mu\ell}} - \sqrt{1+ k_1 \delta \sqrt{\mu\ell}}}{\sqrt{1+ k_2 \delta \sqrt{\mu\ell}} + \sqrt{1+k_1 \delta \sqrt{\mu\ell}}}.
    \end{equation*}
    Other requirements on $\mathbf{T}_\gamma(z)$ remain the same as those for strictly causal algorithms, i.e., $\mathbf{T}_\gamma (z)$ must be analytic in $\mathbb{D}^c$ with range in $\mathbb{D}$ and satisfy
    $\mathbf{T}_\gamma (\gamma^{-1}) = g(k_1,k_2)$. Using the Nevanlinna--Pick theory reviewed in Section~\ref{sec:interpolation}, such a $\mathbf{T}_\gamma(z)$ exists if and only if
    \begin{equation*}
        \gamma >
        \frac{\sqrt{k_2+ k_1 k_2 \delta \sqrt{\mu\ell}} - \sqrt{k_1+k_1 k_2 \delta \sqrt{\mu\ell}}}{\sqrt{k_2+k_1 k_2 \delta \sqrt{\mu\ell}} + \sqrt{k_1+ k_1 k_2 \delta \sqrt{\mu\ell}}}.
    \end{equation*}
    With the right-hand side of the above inequality being the tightest bound possible, the proof is completed by substituting $k_1=\sqrt{\kappa^{-1}}$ and $k_2=\sqrt{\kappa}$, for $\kappa=\ell/\mu$.
\end{proof}

Note that the statement of Theorem~\ref{thm:bi_quadratic} allows for $G(z)$ to be strictly causal ($\delta=0$).
In this case,  Theorem~\ref{thm:bi_quadratic} recovers the fastest achievable convergence rate $\rmin$ defined in \eqref{eq:rho_min}, i.e., the convergence rate of the heavy-ball method. As $\delta$ increases, the convergence rate can be made arbitrarily fast. Negative values for $\delta$ degrade performance and even lead to instability.
For a specified convergence rate $\rho\in(0,1)$, Theorem~\ref{thm:bi_quadratic} provides the minimum direct feedthrough gain needed,
\begin{equation}\label{eq:delta_rho}
     \delta_{\rho} \coloneqq \frac{{\left(1-\rho\right)}^2 \kappa - {\left(1+\rho\right)}^2 }{4\rho\ell}.
\end{equation}
Note that $\delta_{\rho}$ is a decreasing function of $\rho\in(0,1)$ and that $\delta_{\rho}=0$ at $\rho=\rmin$. Thus, to achieve a convergence rate $\rho$ faster than $\rmin$, a strictly positive direct feedthrough gain $\delta$ is required, and that a faster convergence rate $\rho$ demands a larger direct feedthrough gain $\delta$. In other words, to better the heavy-ball rate $\rmin$, not only is it sufficient, but also necessary for $G(z)$ to be causal but not strictly causal.

When $\delta=\delta_{\rho}$, the transfer function of the algorithm that achieves the $\rho$-convergence can be synthesized using the Nevanlinna--Pick algorithm.
The resulting transfer function $G(z)$ is unique and is given by
\begin{equation}\label{eq:bi_heavy_ball_tf}
    G(z) = \frac{\delta_{\rho} z^2 + \beta z + \delta_{\rho}\rho^2}{\left(z-1\right)\left(z-\rho^2\right)},
\end{equation}
where
\begin{equation*}
    \beta = \frac{4+2\delta_{\rho}\left(\ell+\mu\right)}{{\left( \sqrt{\ell+\mu\ell\delta_\rho } + \sqrt{\mu+\mu\ell\delta_\rho } \right)\hspace{-0.1em}}^2}.
\end{equation*}
The corresponding time-domain iteration is
\begin{equation}\label{eq:implicit_QP}
     x[t+1] = x[t] + \rho^2 \left(x[t]-x[t-1]\right) - \delta_{\rho} \nabla f(x[t+1]) 
     - \beta \nabla f(x[t]) - \delta_{\rho} \rho^2 \nabla f(x[t-1]).
\end{equation}
Substituting the expression $\nabla f(x) = Qx-q$ for the gradient into \eqref{eq:implicit_QP} provides a causally
implementable form for the numerical scheme, namely, the
\emph{implicit heavy-ball algorithm}
\begin{equation}\label{eq:bi_heavy_ball}
    x[t+1] = x[t] + \rho^2 \left(x[t]-x[t-1]\right) 
     - (\delta_{\rho}+\delta_{\rho} \rho^2+\beta) {\left(I_d + \delta_{\rho}Q\right)}^{-1} \left(Qx[t]-q\right).
\end{equation}


In fact, the algorithm \eqref{eq:bi_heavy_ball} can be seen to interpolate the heavy-ball method and Newton's method with the choice of the parameters $\delta_\rho$ or $\rho$, that are coupled via \eqref{eq:delta_rho}. Clearly, when $\rho=\rmin$, the direct feedthrough gain $\delta_{\rho}=0$, \eqref{eq:bi_heavy_ball_tf} reduces to the transfer function of the heavy-ball method \eqref{eq:heavy_ball_tf}, and the algorithm \eqref{eq:bi_heavy_ball} reduces to the corresponding method \eqref{eq:heavy_ball}. On the other hand, it is well-known that Newton's method only needs one iteration to obtain the minimizer of a quadratic function, which is reflected in the fact that the algorithm \eqref{eq:bi_heavy_ball} approaches Newton's method as $\rho\to 0$. To see this, we rewrite the iteration \eqref{eq:bi_heavy_ball} as
\begin{equation}\label{eq:bi_heavy_ball_inv}
    x[t+1] = x[t] + \rho^2 \left(x[t]-x[t-1]\right) - (1 + \rho^2+\beta \delta_{\rho}^{-1}) {\left( \delta_{\rho}^{-1} I_d +Q\right)}^{-1} \left(Qx[t]-q\right),
\end{equation}
and therefore, as $\rho\to 0$, we have that $\delta_{\rho}^{-1}\to 0$, $\beta \delta_{\rho}^{-1}\to 0$. Consequently, the above iteration turns into Newton's method. As such, the heavy-ball algorithm and Newton's method can be regarded as limiting cases of the implicit algorithm \eqref{eq:bi_heavy_ball}.

\subsection{Applications to ill-conditioned quadratic functions}\label{sec:bi_ill_quadratic}

When the condition number $\kappa$ of $Q$ exceeds a value $\kappa_{\mathrm{m}}$ that allows acceptable accuracy in solving $Qx=q$, Newton's method fails while the heavy-ball method converges slowly. In such cases, \eqref{eq:bi_heavy_ball} proves a viable alternative. It draws benefits from the condition number of $I_d + \delta_{\rho}Q$, which is significantly improved, for a suitably small $\delta_{\rho}$, and at the same time, the rate of convergence can be made substantially faster than that of the heavy-ball method.
Thus while \eqref{eq:bi_heavy_ball} or \eqref{eq:bi_heavy_ball_inv} can be viewed as an \emph{iterative refinement with momentum} to solve $Qx=q$,
it goes beyond other types of iterative refinement that mostly focus on addressing the ill-condition of $Q$, e.g., \cite{wilkinson_iterative_1971,beik_iterative_2018}.

Specifically, to achieve a convergence rate $\rho$, we  select $\delta_{\rho}$ so that
\begin{equation*}
    \frac{1+\delta_{\rho} \ell}{1+\delta_{\rho} \mu} \leq \kappa_{\mathrm{m}} \leq \kappa = \frac{\ell}{\mu}.
\end{equation*}
Direct computation shows that $\delta_\rho\geq 0$ needs to satisfy
\begin{equation}\label{eq:delta_rho_bound}
    0\leq \delta_{\rho} \leq \frac{\kappa_{\mathrm{m}}-1}{\ell - \mu \kappa_{\mathrm{m}}} = \frac{1}{\mu} \frac{\kappa_{\mathrm{m}}-1}{\kappa - \kappa_{\mathrm{m}}}.
\end{equation}
Then, from \eqref{eq:delta_rho}, the achievable convergence rate $\rho$ satisfies
\begin{equation*}
    \rho \geq \rho_{\mathrm{m}} \coloneqq \frac{\sqrt{\kappa / \kappa_{\mathrm{m}}} - 1}{\sqrt{\kappa / \kappa_{\mathrm{m}}} + 1}.
\end{equation*}
In fact, the bound $\rho_{\mathrm{m}}$ is the fastest rate that can be achieved by implicit algorithms for ill-conditioned quadratic functions with $\kappa\geq\kappa_{\mathrm{m}}$, and is attained when $\delta_{\rho}$ equals the upper bound in \eqref{eq:delta_rho_bound}.

We finally note that both the heavy-ball method and Newton's method can be directly contrasted with the above.  The heavy-ball method can be seen as the special case where $\kappa_{\mathrm{m}}=1$, as it avoids computing matrix inverse.
In this case, the upper bound in \eqref{eq:delta_rho_bound} is $0$ and $\rho_{\mathrm{m}}=\rmin$.
On the other hand, Newton's method can be seen as the special case where $\kappa_{\mathrm{m}}=\kappa$, since it requires the inverse of $Q$. In this case, the upper bound in \eqref{eq:delta_rho_bound} is $+\infty$ and $\rho_{\mathrm{m}}=0$.

In summary, redefine
\begin{equation*}
    \rho_{\mathrm{m}} \coloneqq
    \max\left\{ 0, \frac{\sqrt{\kappa / \kappa_{\mathrm{m}}} - 1}{\sqrt{\kappa / \kappa_{\mathrm{m}}} + 1} \right\}.
\end{equation*}
This is the fastest convergence rate that can be achieved by an algorithm
to minimize a quadratic function $f$ with condition number $\kappa$, to be
implemented with a solver that can reliably compute inverses or Cholesky factorization of matrices with condition number less than $\kappa_{\mathrm{m}}$.

\section{Non-quadratic functions}\label{sec:nonquadratic}

For general cost functions, the $\Delta_f(\cdot)$ in Fig.~\ref{fig:alg_feedback} is a memoryless nonlinear element, but Proposition~\ref{prop:LTI} still holds. Moreover, if $\Delta_f(\cdot)$ is sector-bounded as defined below, the feedback scheme in Fig.~\ref{fig:alg_feedback} is a Lur'e system and its stability can be ascertained using the well-known \emph{circle criterion} \cite{khalil2002nonlinear}. We show in this section that for such cost functions, the non-strict causality can still be exploited to accelerate convergence rate, by making use of implicit algorithms.
We begin by defining two important types of nonlinear structure, see also \cite{lessard2016analysis}.

\begin{definition}
    A map $\phi\colon \IR^d\to\IR^d$ is \emph{sector-bounded} in $\left[ k_1,k_2 \right]$, where $ k_2 > k_1 \geq 0$, if
 \begin{equation}\label{eq:sector_def}
 \left< \phi(x) - k_1 x , \phi(x) - k_2 x \right> \leq 0 \quad \forall x \in\IR^d,
 \end{equation}
and is {\em slope-restricted} in $\left[ k_1,k_2 \right]$, where $k_2>k_1\geq 0$, if
  \begin{displaymath}
  \left<  \delta y - k_1 \delta x , \delta y - k_2 \delta x \right>
     \leq 0 \quad \forall x_1,x_2 \in\IR^d,
 \end{displaymath}
 where $\delta y \coloneqq \phi(x_1)-\phi(x_2)$ and $\delta x \coloneqq x_1-x_2$.
\end{definition}

These two types of nonlinearity are closely related as explained in the following two lemmas that follow easily from the definitions (cf.\ \cite{lessard2016analysis}).

\begin{lemma}\label{lem:slope_sector}
    A map $\phi\colon \IR^d\to\IR^d$ is slope-restricted in $\left[ k_1,k_2\right]$, if and only if $D_y\phi(x)\coloneqq \phi(y) - \phi(y -x)$ is sector-bounded in $\left[ k_1,k_2 \right]$ for all $y\in \IR^d$.
\end{lemma}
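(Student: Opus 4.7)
The plan is to observe that, after a simple change of variables, the two conditions are literally the same inequality, so the lemma reduces to a bookkeeping identity rather than any nontrivial analysis.

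First I would fix notation and write out what ``$D_y\phi$ is sector-bounded in $[k_1,k_2]$ for every $y$'' actually says: for each $y\in\IR^d$ and each $x\in\IR^d$,
\begin{equation*}
\bigl\langle \phi(y)-\phi(y-x)-k_1 x,\;\phi(y)-\phi(y-x)-k_2 x\bigr\rangle \leq 0.
\end{equation*}
Then I would write out the slope-restricted condition: for each $x_1,x_2\in\IR^d$, with $\delta x=x_1-x_2$ and $\delta y=\phi(x_1)-\phi(x_2)$,
\begin{equation*}
\bigl\langle \delta y-k_1\delta x,\;\delta y-k_2\delta x\bigr\rangle \leq 0.
\end{equation*}

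For the ``only if'' direction, given $y$ and $x$, I would set $x_1=y$ and $x_2=y-x$; then $\delta x=x$ and $\delta y=\phi(y)-\phi(y-x)=D_y\phi(x)$, so the slope-restricted inequality applied to this pair is exactly the sector-bounded inequality for $D_y\phi$ at $x$. For the ``if'' direction, given any $x_1,x_2$, I would set $y=x_1$ and $x=x_1-x_2$, so that $y-x=x_2$; then sector-boundedness of $D_{x_1}\phi$ at $x_1-x_2$ is exactly the slope-restricted inequality for the pair $(x_1,x_2)$. Since $y$ and $x$ range independently over $\IR^d$, the two universal quantifications match up perfectly, proving the equivalence.

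There is no real obstacle here — the only thing to be careful about is verifying that the change of variables $(x_1,x_2)\leftrightarrow(y,x)$ with $y=x_1$, $x=x_1-x_2$ is a bijection of $\IR^d\times\IR^d$ onto itself, which is obvious, and that both quantifications are over all of $\IR^d\times\IR^d$, so no boundary or domain issues arise.
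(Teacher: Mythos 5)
Your proof is correct and is the only reasonable approach; the paper states this lemma without proof precisely because it is the bookkeeping identity you identify. The change of variables $(x_1,x_2)\leftrightarrow(y,x)$ with $y=x_1$, $x=x_1-x_2$ makes the two inequalities literally identical, and your verification of both directions is complete.
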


Bounding the slope of nonlinear elements has been key in stability analysis of nonlinear control systems \cite{zames_stability_1968}, and this is closely related to conditions imposed in optimization theory and stated in Definition~\ref{def:convex}.

\begin{lemma}\label{lem:slope_convex}
    A differentiable function $f\colon \IR^d\to\IR$ is both $\mu$-strongly convex and $\ell$-Lipschitz smooth (i.e., $f\in\mathcal{F}_{\mu,\ell}$) if and only if its gradient $\nabla f$ is slope-restricted in $[\mu,\ell]$.
\end{lemma}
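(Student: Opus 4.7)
The plan is to reduce the equivalence to a single algebraic inequality. Setting $\delta x \coloneqq x_1 - x_2$ and $\delta y \coloneqq \nabla f(x_1) - \nabla f(x_2)$, a direct expansion shows that the slope-restricted condition $\langle \delta y - \mu \delta x, \delta y - \ell \delta x \rangle \leq 0$ is equivalent to
\begin{equation*}
    \|\delta y\|^2 + \mu\ell\,\|\delta x\|^2 \;\leq\; (\mu + \ell)\,\langle \delta y, \delta x \rangle \qquad (\star)
\end{equation*}
holding for every $x_1, x_2 \in \IR^d$. The rest of the argument establishes each implication from $(\star)$.

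For the ``if'' direction (slope-restricted $\Rightarrow f \in \mathcal{F}_{\mu,\ell}$), I would extract the two standard characterizations from $(\star)$ by completing squares. Adding $\|\delta y - \mu \delta x\|^2 \geq 0$ to $(\star)$ and simplifying yields $\langle \delta y, \delta x\rangle \geq \mu\|\delta x\|^2$, i.e., the monotonicity of $\nabla f - \mu I$, which is equivalent to convexity of $f - \tfrac{\mu}{2}\|\cdot\|^2$ and hence to $\mu$-strong convexity of $f$. For the Lipschitz bound, I would apply Cauchy--Schwarz to the right-hand side of $(\star)$: letting $r = \|\delta y\|/\|\delta x\|$ (for $\delta x \neq 0$), the inequality becomes $r^2 - (\mu+\ell)r + \mu\ell \leq 0$, which factors as $(r-\mu)(r-\ell) \leq 0$, giving $\mu \leq r \leq \ell$. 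In particular, $\|\delta y\| \leq \ell\|\delta x\|$, which is $\ell$-Lipschitz smoothness of $\nabla f$.

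For the ``only if'' direction, the key ingredient is the Baillon--Haddad co-coercivity theorem: if $g\colon\IR^d\to\IR$ is convex and $L$-Lipschitz smooth, then $\langle \nabla g(x_1) - \nabla g(x_2), x_1 - x_2\rangle \geq \tfrac{1}{L}\,\|\nabla g(x_1) - \nabla g(x_2)\|^2$. I would apply this to
\begin{equation*}
    g(x) \coloneqq f(x) - \tfrac{\mu}{2}\|x\|^2,
\end{equation*}
which is convex by $\mu$-strong convexity of $f$ and $(\ell-\mu)$-Lipschitz smooth because $f$ is $\ell$-Lipschitz smooth. Substituting $\nabla g(x) = \nabla f(x) - \mu x$ into co-coercivity gives $\|\delta y - \mu\delta x\|^2 \leq (\ell-\mu)\,\langle \delta y - \mu \delta x, \delta x\rangle$, which after expansion and rearrangement is precisely $(\star)$.

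The main obstacle is the invocation of Baillon--Haddad, which is the non-elementary step. Its standard proof proceeds via Fenchel conjugacy (the conjugate of an $L$-smooth convex function is $(1/L)$-strongly convex, and co-coercivity of $\nabla g$ follows from firm nonexpansiveness of the proximal operator associated with the conjugate). For an expository treatment, I would cite it as in Nesterov's monograph rather than reprove it; alternatively, the entire lemma can be quoted as Nesterov's Theorem~2.1.12, which is essentially this statement for the strongly convex case.
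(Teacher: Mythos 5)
The paper states this lemma without proof, treating it as a standard fact from convex analysis (it is essentially Nesterov's Theorem 2.1.9/2.1.11/2.1.12), so there is no paper route to compare against; I assess your proof on its own.

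Your ``if'' direction is correct: the factored form $\langle \delta y - \mu\delta x,\, \delta y - \ell\delta x\rangle = \|\delta y - \mu\delta x\|^2 - (\ell-\mu)\langle\delta y - \mu\delta x,\,\delta x\rangle \leq 0$ gives strong monotonicity directly, and the Cauchy--Schwarz step gives the Lipschitz bound. (Your phrase ``adding $\|\delta y - \mu\delta x\|^2\geq 0$ to $(\star)$'' is loose as stated — a $\leq$ and a $\geq$ do not add — but the intended chain $2\mu\langle\delta y,\delta x\rangle - \mu^2\|\delta x\|^2 \leq \|\delta y\|^2 \leq (\mu+\ell)\langle\delta y,\delta x\rangle - \mu\ell\|\delta x\|^2$ works and yields $\langle\delta y,\delta x\rangle\geq\mu\|\delta x\|^2$ after dividing by $\ell-\mu$.)

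The genuine gap is in the ``only if'' direction, at the claim that $g=f-\tfrac{\mu}{2}\|\cdot\|^2$ is $(\ell-\mu)$-Lipschitz smooth ``because $f$ is $\ell$-Lipschitz smooth.'' That implication does not follow from $\ell$-smoothness alone: the triangle inequality only gives $\|\nabla g(x_1)-\nabla g(x_2)\|\leq(\ell+\mu)\|\delta x\|$, and even combining $\|\delta y\|\leq\ell\|\delta x\|$ with $\langle\delta y,\delta x\rangle\geq\mu\|\delta x\|^2$ yields the weaker constant $\sqrt{\ell^2-\mu^2}$. The correct constant $\ell-\mu$ requires using the \emph{convexity} of $g$: from $\ell$-smoothness of $f$ one has that $\tfrac{\ell}{2}\|\cdot\|^2-f = \tfrac{\ell-\mu}{2}\|\cdot\|^2 - g$ is convex, and then one invokes the equivalence — valid \emph{only for convex functions} — between convexity of $\tfrac{L}{2}\|\cdot\|^2 - g$ and $L$-Lipschitz continuity of $\nabla g$ (Nesterov, Thm.~2.1.5), which is precisely the ``other direction'' of Baillon--Haddad. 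So your step is true, but its justification needs both hypotheses on $f$, not just Lipschitz smoothness, and needs to invoke that convex-function-specific equivalence rather than direct subtraction of Lipschitz constants. Once this is repaired, the rest of the argument (co-coercivity of $\nabla g$ being a rewriting of the slope-restriction) goes through.

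A minor organizational point: since Baillon--Haddad is the nontrivial ingredient in \emph{both} directions of the equivalence you need for the ``only if'' part, it may be cleaner to phrase the whole lemma as a single chain of equivalences: slope-restriction of $\nabla f$ in $[\mu,\ell]$ $\iff$ $(\ell-\mu)$-co-coercivity of $\nabla g$ $\iff$ $g$ convex with $(\ell-\mu)$-Lipschitz gradient $\iff$ $f\in\mathcal{F}_{\mu,\ell}$, citing the middle equivalence to Baillon--Haddad and checking the two outer equivalences by algebra.
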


\begin{definition}
    The set of functions whose gradient differences $\Delta_f(\cdot)$ defined in \eqref{eq:Delta} are sector-bounded in $[\mu,\ell]$ is denoted by $\mathcal{S}_{\mu,\ell}$.
\end{definition}

From Lemmas~\ref{lem:slope_sector} and~\ref{lem:slope_convex}, the set $\mathcal{S}_{\mu,\ell}$ is a proper superset of $\mathcal{F}_{\mu,\ell}$, which may contain non-convex functions that may or may not be Lipschitz smooth, a fact observed in \cite{fazlyab_analysis_2018, Ugrinovskii2022} in analyzing optimization algorithms for nonconvex functions. Consider, for example, the univariate function $f$ given by
\begin{equation}\label{eq:sector_1d}
    f(x) = 0.5 \alpha x^2 - 0.5 \beta g(x), \quad
\end{equation}
where $\alpha>\beta>0$, and $g$ is defined as
\begin{equation*}
    g(x) = \begin{cases}
        {\left| x-1 \right|}^{{1+\nu}} + (1+\nu) x - 1 & \text{if } x \geq 0, \\
        0.5 x^{2} +  0.5 \sin(x^{2}) & \text{if } x< 0,
    \end{cases}
\end{equation*}
with $\nu\in(0,1)$.
It is a simple exercise to show that
\begin{equation*}
    \Delta_f(e) = - f^{\prime} (x^\star-e) = \alpha e + 0.5 \beta g^{\prime}(-e)
\end{equation*}
is sector-bounded in $\left[\alpha-\beta,\alpha\right]$. As such,
$f\in \mathcal{S}_{\mu,\ell}$. On the other hand, it is also easy to see from the graphs of $f$ and $\Delta_f$, as depicted in Fig.~\ref{fig:sector_1d}, that $f$ is neither
star-convex (a relaxation of convexity introduced by Nesterov and Polyak in \cite{nesterov_cubic_2006}) nor Lipschitz smooth.

\begin{figure}[tb]
    \centering
        \begin{subfigure}[T]{0.493\linewidth}
     \centering
            \begin{tikzpicture}[]
    \begin{axis}[scale=0.6, axis lines=middle, xlabel=$x$, ylabel=$f(x)$, xmin=-5, xmax=5, ymax=70, ymin=-5,
    ticks=none,
x label style={anchor=north, xshift=-0.7ex},
     y label style={anchor=east, yshift=-1ex, xshift=0.3ex}]


    \addplot[domain=-5:0, smooth, thick, samples=200]{8/2*x^2 - 6/2*(0.5*x^2 + 0.5*sin(deg(x^2)))};
    \addplot[domain=0:5, smooth, thick, samples=200]{8/2*x^2 - 6/2*(abs(x-1)^1.2 + 1.2*x - 1)};

    \addplot[domain=-5:5, smooth, thick, dashed]{4*x^2};

    \addplot[domain=-5:5, smooth, thick, dashed]{x^2};

    \node[inner sep=0.1ex, anchor=south east, yshift=-1.3ex] at(axis cs:5,6) (m) {$\frac{\mu}{2} x^2$};

    \node[inner sep=0.1ex, anchor=south east, xshift=0.1ex] at(axis cs:3,35) (l) {$\frac{\ell}{2} x^2$};

    \end{axis}
\end{tikzpicture}
        \end{subfigure}
     \hfill
         \begin{subfigure}[T]{0.493\linewidth}
     \centering
            \begin{tikzpicture}[]
 \begin{axis}[scale=0.6, axis lines=middle, xlabel=$e$, ylabel=$\Delta_f(e)$, xmin=-5, xmax=5, ymax=40, ymin=-40,
    ticks=none,
x label style={anchor=north, xshift=-0.7ex},
     y label style={anchor=east, yshift=-1ex, xshift=0.3ex},]

    \addplot[domain=-5:0, smooth, thick, samples=200]{8*x-6/2*(x+x*cos(deg(x^2)))};
    \addplot[domain=0:5, smooth, thick, samples=200]{8*x-6/2*1.2*(abs(x-1)^0.2 * sign(x-1) + 1)};

    \addplot[domain=-5:5, thick,dashed]{2*x};

    \addplot[domain=-5:5, thick,dashed]{8*x};

    \node[inner sep=0.1ex] at(axis cs:1.5,-20) (m) {$\mu e$};
    \draw[->, thick] (m) to[out=0,in=-80] (axis cs:4,8);

    \node[inner sep=0.1ex] at(axis cs:1,30) (l) {$\ell e$};
    \draw[->, thick] (l) to[out=0,in=120] (axis cs:3.4,27.2);
    \end{axis}
\end{tikzpicture}
        \end{subfigure}
        \caption{The univariate function $f(x)$ given in \eqref{eq:sector_1d} with $\alpha=8$, $\beta=6$, and $\nu=0.2$, where $\mu=\alpha-\beta$ and $\ell=\alpha$.}
        \label{fig:sector_1d}
    \end{figure}

\subsection{Implicit algorithms}\label{sec:implicit}

We now consider the optimization problem \eqref{eq:opt_problem} with $f\in\mathcal{S}_{\mu,\ell}$. The feedback system in Fig.~\ref{fig:alg_feedback} is a Lur'e system where nonlinear component $\Delta_f(\cdot)$ is sector-bounded in $[\mu,\ell]$. From Proposition~\ref{prop:LTI}, the optimization algorithm characterized by $G(z)$ achieves $\rho$-convergence if and only if the feedback system in Fig.~\ref{fig:alg_feedback} is $\rho$-stable and $G(z)$ has an accumulator. As a Lur'e system, the system's stability can be ascertained using the circle criterion. Before stating the circle criterion, we introduce the concept of strict positive realness.

\begin{definition}
    A square rational matrix function $\Psi(z)$ is said to be \emph{strictly positive real} if $\Psi(z)$ is analytic in $\bD^c\setminus\partial\bD$ with strictly positive definite Hermitian part, i.e.,  for some $\epsilon>0$,
    \begin{equation*}
         \Psi(z) + \Psi\HP(z) \succeq \epsilon I, \quad \forall z\in \bD^c\setminus\partial\bD.
    \end{equation*}
\end{definition}

\vspace*{1ex}

The right half of the complex plane can be conformally mapped onto the unit disk $\mathbb{D}$ by the transformation
\begin{equation}\label{eq:conformal}
    \psi \colon v \mapsto u=\frac{1-v}{1+v}.
\end{equation}
Thus, a matrix $\Psi$ has positive definite Hermitian part if and only if the matrix $\Phi = {\left(I+\Psi\right)}^{-1}\left(I-\Psi\right)$ is contractive, i.e., $\overbar{\sigma}(\Phi)<1$. And since the inverse map of $\psi$ is itself, the matrix $\Psi$ is contractive if and only if the matrix $\Phi$ has positive definite Hermitian part. With this, the Nevanlinna--Pick interpolation theory in Section~\ref{sec:interpolation} can be converted to interpolating strictly positive real functions.

With the notion of positive realness, the ``$\rho$-stability'' version of the circle criterion is stated below.

\begin{lemma}\label{lem:circle}
    Consider a Lur'e system where the memoryless nonlinearity $\Delta:\IR^d\to\IR^d$ is sector-bounded in $[0,\infty)$ and let $\Psi(z)$ denote the $d\times d$ transfer function matrix of the LTI subsystem. The Lur'e system is $\rho$-stable if the transfer function matrix $\Psi(\gamma z)$ is strictly positive real for all $\gamma\in(\rho,1)$.
\end{lemma}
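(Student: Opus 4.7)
The plan is to reduce $\rho$-stability to the standard (discrete-time) circle criterion via an exponential time-rescaling that trades the target rate $\rho$ against the Schur-stability radius of a rescaled system.

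Fix any $\gamma\in(\rho,1)$ for which $\Psi(\gamma z)$ is strictly positive real, and consider the zero-input Lur'e system with the LTI block $\Psi(z)$ interconnected with $u[t]=-\Delta(y[t])$. Introduce the rescaled signals $\tilde{x}[t]\coloneqq \gamma^{-t}x[t]$, $\tilde{y}[t]\coloneqq \gamma^{-t}y[t]$, $\tilde{u}[t]\coloneqq \gamma^{-t}u[t]$. A direct state-space computation (or equivalently, the substitution $z\mapsto\gamma z$ applied to the partial-fraction expansion of $\Psi$) shows that the LTI subsystem mapping $\tilde{u}$ to $\tilde{y}$ has transfer function $\Psi(\gamma z)$, while the feedback law transforms into $\tilde{u}[t]=-\tilde{\Delta}_t(\tilde{y}[t])$ with the time-dependent nonlinearity $\tilde{\Delta}_t(y)\coloneqq \gamma^{-t}\Delta(\gamma^t y)$.

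Next, the sector bound is preserved uniformly in $t$: using $\langle \Delta(z),z\rangle\geq 0$ with $z=\gamma^t y$ gives
\begin{equation*}
\langle \tilde{\Delta}_t(y),y\rangle \;=\; \gamma^{-2t}\langle \Delta(\gamma^t y),\gamma^t y\rangle \;\geq\; 0,
\end{equation*}
so the rescaled closed loop is again a Lur'e system, with SPR LTI part $\Psi(\gamma z)$ and a memoryless (possibly time-varying) nonlinearity that is sector-bounded in $[0,\infty)$ uniformly in $t$. The standard discrete-time circle criterion now applies: the KYP lemma produces, from the SPR certificate of $\Psi(\gamma z)$, a quadratic storage function $V(\xi)=\xi\TP P\xi$ with $P\succ 0$, and combining the associated one-step dissipation inequality with the uniform sector bound forces $V$ to strictly decrease along trajectories, whence $\tilde{x}[t]\to\mathbf{0}$. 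Unwinding the rescaling yields $\gamma^{-t}x[t]\to\mathbf{0}$, so by Definition~\ref{def:converge} the original system converges at a rate no worse than $\gamma$. Since the hypothesis supplies this conclusion for every $\gamma\in(\rho,1)$, taking the infimum gives $\rho$-stability.

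The delicate point is the verification that the classical KYP-based Lyapunov argument survives the time-index $t$ acquired by the nonlinearity; this is routine once the one-step change of $V$ is written using only the uniform sector inequality, but care is needed in the MIMO setting and in extracting a uniform (in $t$) decrement from an SPR condition stated only on $\overbar{\bD}^c$, which may require a mild continuity step in $\gamma$ or an appeal to a slightly strengthened SPR margin afforded by picking $\gamma$ strictly larger than the infimum of admissible rescalings.
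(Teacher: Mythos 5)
Your proposal is correct, and its core mechanism (exponentially rescale the state by $\gamma^{-t}$ so the LTI block becomes $\Psi(\gamma z)$, observe that the induced time-varying memoryless nonlinearity inherits the $[0,\infty)$ sector bound, then invoke the standard KYP/circle-criterion Lyapunov argument) is exactly the standard route to a "$\rho$-stability" version of the circle criterion, and it matches what the paper defers to via the citation to Zhang et al. One small remark: the caution in your last paragraph is unnecessary — for each fixed $\gamma\in(\rho,1)$ the hypothesis already supplies a strict SPR margin $\epsilon_\gamma>0$ for $\Psi(\gamma z)$, which via KYP yields a uniform-in-$t$ Lyapunov decrement directly, so no continuity-in-$\gamma$ step is needed; taking $\gamma$ over $(\rho,1)$ at the very end is just reading off Definition~\ref{def:converge}.
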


\begin{proof}
    It follows using the approach in \cite{Zhang2022convergence}.
\end{proof}

\begin{remark}
     As we are only concerned with asymptotic convergence rates (Definition~\ref{def:converge} and Remark~\ref{rmk:converge}), one may conjecture that, whenever the Lur'e system is stable for all $\Delta$'s sector-bounded in $[k_1,k_2]$, then
     the worst-case convergence rate of the Lur'e system for
     such $\Delta$'s is
    equal to
    that where $\Delta$ is replaced by a static gain $k\in[k_1,k_2]$. While this seems plausible since the stability guarantees ensure that the system state converges to the origin where asymptotic performance may depend on linearized dynamics, examples given in \cite{Zhang2022convergence,wang_numerical_2021} invalidate such a conjecture.
    \null\hfill$\Diamond$
\end{remark}

Following the typical approach in Lur'e systems, we apply a \emph{loop transformation} \cite{khalil2002nonlinear} on $\Delta_f$ to convert this into sector-bounded in $[0,\infty)$, which accordingly transforms $G(z)$ into
\begin{equation*}
    \Psi(z) = {\left(1+\mu G(z)\right)}^{-1} \left(1+\ell G(z)\right).
\end{equation*}
The inverse loop transformation is
\begin{equation}\label{eq:inv_loop}
    G(z) = {(\ell-\mu \Psi(z))}^{-1} \left(\Psi(z)-1\right).
\end{equation}
Then, from Lemma~\ref{lem:circle}, the system in Fig.~\ref{fig:alg_feedback} is $\rho$-stable provided that
\begin{equation}\label{eq:positive_psi}
    \Psi(\gamma z) \text{ is strictly positive real for all } \gamma\in(\rho,1).
\end{equation}
Moreover, Proposition~\ref{prop:LTI} requires that $G(z)$ have a pole at $z=1$, i.e.,
\begin{equation}\label{eq:inter_1}
    \Psi(1) = \kappa = \ell/\mu.
\end{equation}
Denote  $G(\infty)=\alpha$ the direct feedthrough gain. Then
\begin{equation}\label{eq:inter_inf}
    \Psi(\infty) = \frac{1+\ell \alpha}{1+\mu \alpha}.
\end{equation}
Utilizing the conformal map \eqref{eq:conformal}, the following theorem readily follows from the interpolation theory in Section~\ref{sec:interpolation}.
\begin{theorem}\label{thm:circle_bi}
    There exists a function $\Psi(z)$ that satisfies conditions \eqref{eq:positive_psi}, \eqref{eq:inter_1}, and \eqref{eq:inter_inf},
    if and only if
    \begin{equation}\label{eq:rho_bi}
        \rho \geq \rho_{\mathrm{C}} \coloneqq \frac{\kappa - 1}{\kappa + 1 + 2\ell \alpha}.
    \end{equation}
For $\rho=\rho_{\mathrm{C}}$, the function $\Psi(z)$ is unique and is given by
\begin{equation}\label{eq:Psi_bi}
    \Psi(z) = \kappa \frac{\left( 1-\rho \right) \left( z + \rho \right)}{\left( 1+\rho \right) \left( z - \rho \right)}.
\end{equation}
\end{theorem}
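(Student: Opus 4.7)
The plan is to reduce the problem to a two-point Nevanlinna--Pick interpolation on $\overbar{\bD}^c$, compute the Pick bound in closed form, and then recover the extremal function.

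\emph{Step 1 (reduction to Schur interpolation).} Using the involutive conformal map $\psi(v)=(1-v)/(1+v)$ from \eqref{eq:conformal}, I would set
\[
\Phi(z)\;\coloneqq\;\psi\bigl(\Psi(\rho z)\bigr)\;=\;\frac{1-\Psi(\rho z)}{1+\Psi(\rho z)}.
\]
Because $\psi$ maps the open right half-plane bijectively onto $\bD$, condition \eqref{eq:positive_psi} is equivalent to $\Phi$ being analytic on $\overbar{\bD}^c$ with $|\Phi(z)|<1$ there. The interpolation conditions \eqref{eq:inter_1} and \eqref{eq:inter_inf} translate to
\[
\Phi(1/\rho)=w_1\coloneqq\tfrac{1-\kappa}{1+\kappa},\qquad \Phi(\infty)=w_2\coloneqq-\tfrac{(\ell-\mu)\alpha}{2+(\ell+\mu)\alpha},
\]
both lying in $(-1,0]$.

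\emph{Step 2 (Pick bound and identification of $\rc$).} Applying the Nevanlinna--Pick theorem (equivalently two successive applications of Schur's lemma from Section~\ref{sec:interpolation}) to the two real data $(1/\rho,w_1)$ and $(\infty,w_2)$ on $\overbar{\bD}^c$, a Schur interpolant $\Phi$ exists if and only if the Pick matrix is positive semidefinite, which for two real points reduces to
\[
\rho^2\,(1-w_1 w_2)^2\;\geq\;(w_1-w_2)^2.
\]
Substituting the expressions for $w_1,w_2$, clearing denominators, and using the elementary identity $(\ell+\mu)^2-(\ell-\mu)^2=4\ell\mu$, I would compute
\[
\frac{|w_1-w_2|}{1-w_1 w_2}\;=\;\frac{\ell-\mu}{(\ell+\mu)+2\ell\mu\alpha}\;=\;\frac{\kappa-1}{\kappa+1+2\ell\alpha}\;=\;\rc,
\]
establishing \eqref{eq:rho_bi}.

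\emph{Step 3 (extremal solution).} When $\rho=\rc$ the Pick matrix is singular, so by standard Nevanlinna--Pick theory the solution $\Phi$ is unique and is a degree-one Blaschke product. Writing out this Blaschke factor explicitly and reversing the transformation via $\Psi(z)=\psi\bigl(\Phi(z/\rho)\bigr)$, I would simplify to the closed form in \eqref{eq:Psi_bi}. As a sanity check I would verify the formula directly: $\Psi(1)=\kappa$ is immediate, and the identity $\kappa(1-\rc)/(1+\rc)=(1+\ell\alpha)/(1+\mu\alpha)$---which follows by plugging the closed form of $\rc$ into $1\pm\rc$ and cancelling---confirms $\Psi(\infty)$. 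Finally, strict positive realness of $\Psi(\gamma z)$ for $\gamma\in(\rc,1)$ follows because the Möbius map $(z+\rc)/(z-\rc)$ sends $\{|z|>\rc\}$ bijectively onto the right half-plane, so $\Psi$ has strictly positive real part on $\{|z|>\rc\}$.

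\emph{Main obstacle.} Conceptually, the argument is a textbook application of analytic interpolation once the setup is correct. The real difficulties are: (i) tracking the rescaling $z\mapsto\rho z$ so that the Schur problem lives on the $\rho$-independent domain $\overbar{\bD}^c$ while the $\rho$-dependence is pushed into the interpolation node $1/\rho$; (ii) handling the interpolation point at $z=\infty$ as an appropriate limit in Schur's lemma; and (iii) the algebraic simplification in Step 2, where the apparent asymmetry between $(\ell+\mu)$ and $2\ell\mu\alpha$ in the answer for $\rc$ emerges only after exploiting $(\ell+\mu)^2-(\ell-\mu)^2=4\ell\mu$.
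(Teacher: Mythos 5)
Your proof is correct and follows exactly the route the paper intends: the text immediately preceding the theorem says it "readily follows from the interpolation theory in Section IV," using the conformal map $\psi$ to pass from strict positive realness to the Schur class and then invoking Nevanlinna--Pick; you have simply filled in the details the paper omits. The key computations check out: with $w_1=\tfrac{1-\kappa}{1+\kappa}$, $w_2=-\tfrac{(\ell-\mu)\alpha}{2+(\ell+\mu)\alpha}$ one has $w_1-w_2 = -\tfrac{2(\ell-\mu)}{(\ell+\mu)(2+(\ell+\mu)\alpha)}$ and $1-w_1w_2 = \tfrac{2(\ell+\mu)+4\mu\ell\alpha}{(\ell+\mu)(2+(\ell+\mu)\alpha)}$, giving $\tfrac{|w_1-w_2|}{1-w_1w_2}=\tfrac{\kappa-1}{\kappa+1+2\ell\alpha}=\rc$ as you claim, and the verification that \eqref{eq:Psi_bi} meets the two interpolation conditions and is positive real on $\{|z|>\rc\}$ is correct.

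\end{document}
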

\vspace{1ex}

For a given direct feedthrough gain $\alpha$, Theorem~\ref{thm:circle_bi} establishes the fastest convergence rate ascertained by the circle criterion (Lemma~\ref{lem:circle}) for functions in $\mathcal{S}_{\mu,\ell}$.
Conversely, for a specified convergence rate $\rho\in(0,1)$, it provides the minimum direct feedthrough gain
\begin{equation}\label{eq:alpha_rho}
     \alpha_{\rho} \coloneqq  \frac{\left(1-\rho\right) \kappa - \left(1+\rho\right)}{2\rho \ell}.
\end{equation}
The transfer function $G(z)$ that achieves the $\rho$-convergence with $G(\infty)=\alpha_{\rho}$ can be synthesized by substituting \eqref{eq:Psi_bi} into \eqref{eq:inv_loop}, yielding
\begin{equation}\label{eq:tf_bi_gd}
    G(z) = \frac{\alpha_\rho z + \beta}{z - 1},
\end{equation}
where
\begin{equation*}
    \beta = \frac{2 + \alpha_\rho \left(\ell+\mu\right)}{\ell + \mu + 2\mu \ell \alpha_\rho} .
\end{equation*}
The corresponding implicit time-domain iteration is
\begin{equation}\label{eq:bi_gd_iter}
    x[t+1] = x[t] - \beta \nabla f(x[t]) - \alpha_\rho \nabla f(x[t+1]).
\end{equation}

When $\alpha=0$, the fastest rate $\rho_{\mathrm{C}}$ in \eqref{eq:rho_bi} reduces to
\begin{equation}\label{eq:rho_gd}
    \rho_{\mathrm{GD}} \coloneqq \frac{\kappa-1}{\kappa+1} = \frac{\ell-\mu}{\ell+\mu}.
\end{equation}
This rate coincides with the proven rate of the gradient descent method with optimal step size discovered by Polyak in \cite{Polyak1963} for all functions in $\mathcal{F}_{\mu,\ell}$ (see also \cite[Theorem 2.1.15]{nesterov2018lectures}), namely,
\begin{equation}\label{eq:gd}
    x[t+1] = x[t] - \frac{2}{\mu+\ell} \nabla f(x[t]),
\end{equation}
which follows from \eqref{eq:bi_gd_iter} by letting $\rho=\rho_{\mathrm{GD}}$.
Note that the gradient descent \eqref{eq:gd} is the fastest strictly causal algorithm ascertained by the circle criterion for all functions in $\mathcal{S}_{\mu,\ell}$, which might suggest that algorithms with momentum, such as Polyak's heavy-ball method \eqref{eq:heavy_ball}, cannot accelerate on functions in $\mathcal{S}_{\mu,\ell}$.

\begin{remark}
In various previous works, larger classes of functions are considered, including, e.g., those that satisfy respectively the inequalities
\begin{subequations}\label{eq:sector123}
    \begin{align}
        \ell {\left\| x-x^{\star} \right\|} & \geq {\left\| \nabla f(x) \right\|} \geq \mu {\left\| x-x^{\star} \right\|}, \label{eq:sector1}\\
        {\left< \nabla f(x), x-x^{\star} \right>} &\geq \mu {\left\| x-x^{\star} \right\|}^{2} , \label{eq:sector2}\\
        \tfrac{1}{2} {\left\| \nabla f(x) \right\|}^{2} &\geq \tfrac{\mu^{2}}{\ell} {\bigl( f(x)-f(x^{\star}) \bigr)}. \label{eq:sector3}
    \end{align}
\end{subequations}
Here the inequalities in \eqref{eq:sector1} are called \emph{error bounds} \cite{guille-escuret_gradient_2022}, \eqref{eq:sector2} is known as \emph{restricted secant inequality} (RSI) \cite{guille-escuret_gradient_2022},
and \eqref{eq:sector3} the \emph{Polyak--\L ojasiewicz (PL) inequality} with constant ${\mu^{2}}/{\ell}$. It can be readily shown that any $f\in\mathcal{S}_{\mu,\ell}$ satisfies these three inequalities, and further, the PL inequality defines the largest of the four function sets including $\mathcal{S}_{\mu,\ell}$.
It is known \cite{guille-escuret_gradient_2022} that gradient descent is optimal for functions satisfying \eqref{eq:sector1} and \eqref{eq:sector2}.
This, along with the fact that gradient descent \eqref{eq:gd} is optimal among all algorithms achievable by the circle criterion, suggests that gradient descent may indeed be the fastest for all four function classes, despite their varying generality and characteristics (e.g., functions satisfying \eqref{eq:sector1} may not be Lipchitz smooth, and those satisfying \eqref{eq:sector2} and \eqref{eq:sector3} need not be convex).
\null\hfill$\Diamond$
\end{remark}


The convergence rate $\rho_\mathrm{C}$ can be made arbitrarily fast by increasing $\alpha>0$ sufficiently. However, unlike the case of quadratic functions in Section~\ref{sec:bi_quadratic}, implicit algorithms for non-quadratic functions cannot be easily translated into an implementable causal form, since \eqref{eq:bi_gd_iter} involves solving a nonlinear equation for $x[t+1]$.
Alternatively, the problem
\begin{equation*}
    x[t+1] = \argmin_{\xi\in\IR^d} \biggl\{\alpha_\rho f(\xi) + \frac{1}{2}{\Bigl\| \xi - \bigl(x[t] - \beta \nabla f(x[t]) \bigr) \Bigr\|}^2 \biggr\},
\end{equation*}
solves the equation, and can be rewritten concisely as
\begin{equation}\label{eq:bi_gd_prox}
    x[t+1] = \mathrm{prox}_{\alpha_\rho f} \bigl(x[t] - \beta \nabla f(x[t]) \bigr),
\end{equation}
using the proximal operator $\mathrm{prox}_f\colon\IR^d\to\IR^d$ defined by \cite{parikh_proximal_2014}:
\begin{equation*}
    \mathrm{prox}_f(x) \coloneqq \argmin_{\xi\in\IR^d} \left\{f(\xi) + \frac{1}{2}{\|\xi-x\|}^2\right\}.
\end{equation*}

Each iteration in \eqref{eq:bi_gd_prox} involves solving a causal optimization problem, so the algorithm would be especially useful when the proximal operator is computationally easier to evaluate; the practical importance of proximal algorithms is widely known for its broad applications to, e.g., optimization of nonsmooth functions (see, e.g., \cite{parikh_proximal_2014, taylor_exact_2018}), as also evidenced by our development in the forthcoming section. Similar to the application of implicit algorithms for quadratic functions in Section~\ref{sec:bi_ill_quadratic}, \eqref{eq:bi_gd_prox} can accelerate convergence in problems with ill-conditioned functions, in the sense that $\kappa=\ell/\mu$ is large. Indeed, in such a case, direct optimization via gradient-based methods
converges slowly. On the other hand,  \eqref{eq:bi_gd_prox} can achieve a fast convergence rate $\rho$ for a suitably large $\alpha_\rho > 0$. Evaluating the proximal operator may be implemented via gradient descent, which now requires only few iterations as
\begin{equation}\label{eq:k_sub}
    \kappa_{\mathrm{sub}} =  \frac{\ell_{\mathrm{sub}}}{\mu_{\mathrm{sub}}} = \frac{1+\alpha_\rho \ell}{1+\alpha_\rho \mu},
\end{equation}
the condition number of the sub-problem, can be made significantly small
with a choice of $\alpha_\rho$.
This results in a tradeoff where a faster convergence rate for the main iteration requires a larger $\alpha_\rho$, while a faster convergence rate for solving the sub-problem requires a smaller $\alpha_\rho$. The optimal choice of $\alpha_\rho$ can be determined by balancing these two rates. A detailed analysis on this tradeoff is beyond the scope of this work.

The algorithm \eqref{eq:bi_gd_prox} can be generalized to optimize functions $f\in\mathcal{S}_{\mu,\infty}$.
For $\ell\to\infty$, $\rho_\mathrm{C}$ in \eqref{eq:rho_bi} becomes
$\rho_\mathrm{C} = \frac{1}{1 + 2\mu\alpha}$.
Accordingly, to achieve a rate $\rho$, the minimum direct feedthrough gain from \eqref{eq:alpha_rho} is $\alpha_\rho = \frac{1-\rho}{2\rho\mu}$.
Then, \eqref{eq:tf_bi_gd} gives $G(z) = \alpha_\rho \frac{z + \rho}{z - 1}$,
resulting in the proximal algorithm
\begin{equation}\label{eq:bi_gd_prox_mu}
    x[t+1] = \mathrm{prox}_{\alpha_\rho f} \bigl(x[t] - \alpha_\rho \rho \nabla f(x[t]) \bigr).
\end{equation}




\subsection{Algorithms with splitting}

Implicit algorithms for non-quadratic functions can be converted into a causal form with the help of proximal operators.
However, in many cases, evaluating the proximal operator may be as difficult as the original problem.
Nevertheless, proximal operators do provide appealing advantages when the cost function $f$
can be split into the form
$f(x) = h(x) + g(x)$
with $g$ proximable (i.e., its proximal operator is computationally efficient to evaluate) \cite{parikh_proximal_2014}. Intuitively, the idea is to, where appropriate, split a function into a ``well-behaving'' part and a more problematic part which nonetheless is proximable. This generalizes the ideas of implicit algorithms for gain margin improvement to considerably broader classes of functions. We demonstrate below how an efficient algorithm can be devised using such splitting.

We assume that $h\in\mathcal{F}_{\mu_{1},\ell_{1}}$ and $g\in\mathcal{F}_{\mu_{2},\ell_{2}}$.
%
Just as in algorithms without splitting $f$, an LTI algorithm with splitting can be cast as a feedback system as in Fig.~\ref{fig:prox_feedback}, where $G(z)$ is the $2\times 2$ transfer function matrix that characterizes the algorithm. In this, $\Delta_h(\cdot)$ is defined as:
\begin{equation*}
    \Delta_h(e_1[t]) \coloneqq \nabla h(x^\star) - \nabla h(x^\star-e_1[t]) = \nabla h(x^\star) - \nabla h(x_1[t]),
\end{equation*}
and similarly for $\Delta_g(\cdot)$. Since, $x^\star$ is the minimizer of $f$,
\begin{equation*}
    \nabla f(x^\star) = \nabla h(x^\star) + \nabla g(x^\star) = \mathbf{0}_d,
\end{equation*}
and thus
\begin{equation}\label{eq:split_grad}
    \begin{bmatrix}
        \nabla h(x^\star) \\ \nabla g(x^\star)
      \end{bmatrix} =  \begin{bmatrix}
        1 \\ -1
    \end{bmatrix} \otimes \nabla h(x^\star),
\end{equation}
where typically $\nabla h(x^\star) \neq \mathbf{0}_d$. This explains the exogenous signal $\begin{bsmallmatrix}
        \nabla h({x}^\star) \\ \nabla g({x}^\star)
      \end{bsmallmatrix}$
in Fig.~\ref{fig:prox_feedback} which is absent in the corresponding Fig.~\ref{fig:alg_feedback} and Fig.~\ref{fig:multi_grad} for algorithms without splitting.

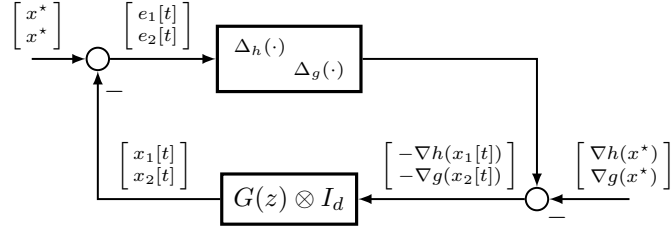
\begin{figure}[tb]
    \centering
    \begin{tikzpicture}[>=latex, line width=0.75pt]
    \tikzstyle{component} = [draw, rectangle, very thick, inner sep=1ex, minimum height=1em]
     \tikzstyle{add} = [draw, circle, inner sep=3pt]

     \node[component] (g) {${G}(z)\otimes I_d$};
     \node[component, above = 3em of g] (d) {$\begin{smallmatrix}
      \Delta_h(\cdot) &  \\
        & \Delta_g(\cdot) \\
     \end{smallmatrix}$};

    \node(a1) [add, right=6.3em of g]  {};
      \node(a2) [add, left=4em of d] {};

    \draw[->] (a2) --node[above,pos=0.45]{$\begin{bsmallmatrix}
      e_1[t] \\ e_2[t]
    \end{bsmallmatrix}$} (d);
    \draw[->] (d) -| (a1);

    \draw[->]  (a1) -- node[above,pos=0.45]{$\begin{bsmallmatrix}
      -\nabla h(x_1[t]) \\  -\nabla g(x_2[t])
    \end{bsmallmatrix}$} (g);

    \draw[->] (g) -| node[above,pos=0.27]{$\begin{bsmallmatrix}
      x_1[t] \\ x_2[t]
    \end{bsmallmatrix}$} node[pos=0.92,right=-0.3ex]{$-$} (a2);

    \draw[<-] (a2) -- node[pos=0.9,above=0.73ex, inner sep=0]{$\begin{bsmallmatrix}
      {x}^\star \vphantom{e_1[t]} \\ {x}^\star \vphantom{e_1[t]}
    \end{bsmallmatrix}$} ++(-2.5em,0);

    \draw[<-] (a1) -- node[pos=0.1, below]{$-$} node[pos=0.93,above]{$\begin{bsmallmatrix}
        \nabla h({x}^\star) \\ \nabla g({x}^\star)
      \end{bsmallmatrix}$} ++(3.5em,0);
\end{tikzpicture}
    \caption{Algorithm with splitting as a feedback system.}
    \label{fig:prox_feedback}
\end{figure}

It is clear from Fig.~\ref{fig:prox_feedback} that $G(z)$ must have an accumulator.
Consider the coprime factorization of $G(z)$ as $G(z) = {M(z)}^{-1}N(z)$, where $M(z),N(z)$ are coprime $2\times 2$ transfer function matrices with entries in $\mathcal{RH}_\infty$. Analogous to (\ref{eq:Minv_1}),
this leads to a necessary condition for the convergence of the algorithm characterized by $G(z)$.

\begin{theorem}
    For any $x^\star\in\IR^d$ that satisfies \eqref{eq:split_grad}, the algorithm characterized by the $2\times 2$ transfer function matrix $G(z)$ converges to the solution $x^\star$ only if $G(z)$ satisfies the interpolation conditions:
    \begin{equation}\label{eq:G_1}
        M(1) \begin{bmatrix}
            1 \\ 1
        \end{bmatrix} = \begin{bmatrix}
            0 \\ 0
        \end{bmatrix}, \quad\text{and}\quad N(1) \begin{bmatrix}
            1 \\ -1
        \end{bmatrix} = \begin{bmatrix}
            0 \\ 0
        \end{bmatrix}.
    \end{equation}
\end{theorem}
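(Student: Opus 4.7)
The plan is to follow the template of Proposition~\ref{prop:Minv_1}, adjusted to account for the extra exogenous gradient offset that is present in Fig.~\ref{fig:prox_feedback} but absent from Fig.~\ref{fig:multi_grad}. First I would write out the feedback equations in the $\mathcal{Z}$-domain, letting $\widehat{\mathbf{e}}(z)$ and $\widehat{\Delta}(z)$ denote the stacked $\mathcal{Z}$-transforms of $(e_1,e_2)$ and of $(\Delta_h(e_1),\Delta_g(e_2))$, respectively. Set $\mathbf{v}_\pm\coloneqq[1,-1]^\TP$. The step reference contributes $\tfrac{z}{z-1}(\allone{2}\otimes x^\star)$, while by~\eqref{eq:split_grad} the constant gradient bias contributes $\tfrac{z}{z-1}(\mathbf{v}_\pm\otimes\nabla h(x^\star))$ fed through $G(z)\otimes I_d$; reading off Fig.~\ref{fig:prox_feedback} yields
\[
\widehat{\mathbf{e}}(z)+\bigl(G(z)\otimes I_d\bigr)\widehat{\Delta}(z)=\tfrac{z}{z-1}\bigl(\allone{2}\otimes x^\star\bigr)+\bigl(G(z)\otimes I_d\bigr)\tfrac{z}{z-1}\bigl(\mathbf{v}_\pm\otimes\nabla h(x^\star)\bigr).
\]

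Next, I would substitute the coprime factorization $G(z)=M(z)^{-1}N(z)$ and pre-multiply both sides by $(z-1)\bigl(M(z)\otimes I_d\bigr)$, exactly as in the proof of Proposition~\ref{prop:Minv_1}, obtaining
\[
(z-1)\bigl[(M(z)\otimes I_d)\widehat{\mathbf{e}}(z)+(N(z)\otimes I_d)\widehat{\Delta}(z)\bigr]=z(M(z)\otimes I_d)(\allone{2}\otimes x^\star)+z(N(z)\otimes I_d)(\mathbf{v}_\pm\otimes\nabla h(x^\star)).
\]
Because $M(z),N(z)\in\mathcal{RH}_\infty$ are analytic at $z=1$, and convergence of the algorithm forces the bracketed left-hand side to vanish as $z\to 1$, the Kronecker mixed-product rule collapses this to
\[
\bigl(M(1)\allone{2}\bigr)\otimes x^\star+\bigl(N(1)\mathbf{v}_\pm\bigr)\otimes\nabla h(x^\star)=\mathbf{0}_{2d}.
\]

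The final step would be to separate the two summands by varying $x^\star$ and $\nabla h(x^\star)$ independently. For any prescribed pair $(\xi,v)\in\IR^d\times\IR^d$, an admissible $h\in\mathcal{F}_{\mu,\ell}$ and convex differentiable $g$ with minimizer $\xi$ and $\nabla h(\xi)=v$ can be exhibited by the quadratic choice $h(x)=\tfrac{\mu+\ell}{4}\|x-a\|^2$, $g(x)=\tfrac{\mu+\ell}{4}\|x-b\|^2$ with $a=\xi-\tfrac{2v}{\mu+\ell}$ and $b=\xi+\tfrac{2v}{\mu+\ell}$. Since $(\xi,v)$ then ranges over all of $\IR^d\times\IR^d$, each Kronecker summand in the previous identity must vanish separately, which is precisely~\eqref{eq:G_1}.

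The step I expect to demand the most care is the analytic justification that $(z-1)\widehat{\mathbf{e}}(z)$ and $(z-1)\widehat{\Delta}(z)$ tend to $\mathbf{0}$ as $z\to 1$. One needs the convergence hypothesis in Definition~\ref{def:converge} to guarantee that the relevant $\mathcal{Z}$-transforms extend analytically across $|z|=1$ so that the limit at $z=1$ is well-defined; this is handled in exactly the same way as in the proof of Proposition~\ref{prop:Minv_1}, after which the remainder of the argument is purely algebraic.
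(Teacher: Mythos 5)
Your proposal is correct and follows exactly the route the paper has in mind: the paper explicitly omits the proof, stating it is ``analogous to that of Proposition~3,'' and your argument is a faithful execution of that template — writing the $\mathcal{Z}$-domain loop equations for Fig.~\ref{fig:prox_feedback}, clearing denominators with $(z-1)(M(z)\otimes I_d)$, letting $z\to 1$, and then varying the problem data to separate the two Kronecker summands. The one piece you supply beyond what the paper sketches — the explicit quadratic pair $h,g$ realizing any prescribed $(x^\star,\nabla h(x^\star))\in\IR^d\times\IR^d$ with $h\in\mathcal{F}_{\mu,\ell}$ — is a welcome addition, since the separation of the two conditions in \eqref{eq:G_1} genuinely requires the freedom to choose $x^\star$ and $\nabla h(x^\star)$ independently, and that construction checks out (minimizer at $\xi$, $\nabla h(\xi)=v$, and Hessian $\tfrac{\mu+\ell}{2}I$ so $h\in\mathcal{F}_{\mu,\ell}$).
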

\vspace{1ex}
\begin{proof}
   The feedback equation for Fig.~\ref{fig:prox_feedback} in the $\mathcal{Z}$-domain reads
   \begin{equation*}
       \tfrac{z}{z-1}\begin{bsmallmatrix}
          x^\star \vphantom{\nabla h({x}^\star)} \\
          x^\star \vphantom{\nabla h({x}^\star)}
      \end{bsmallmatrix} - \left(G(z) \otimes I_d\right) \left( \widehat{\Delta(\mathbf{e})} (z) - \tfrac{z}{z-1}
      \begin{bsmallmatrix}
          \nabla h({x}^\star) \\ \nabla g({x}^\star)
        \end{bsmallmatrix}\right)
        = \widehat{\mathbf{e}} (z).
  \end{equation*}
   where $\widehat{\mathbf{e}} (z)$ and $\widehat{\Delta_f(\mathbf{e})} (z)$ respectively denote the $\mathcal{Z}$-transform of the input and output of the nonlinear component, i.e.,
   \begin{equation*}
       \widehat{\Delta(\mathbf{e})} (z) \coloneqq \begin{bmatrix}
           \widehat{\Delta_h(e_1)}(z) \\ \widehat{\Delta_g(e_2)}(z)
       \end{bmatrix},\quad
       \widehat{\mathbf{e}} (z) \coloneqq
       \begin{bmatrix}
           \widehat{e}_1(z) \vphantom{\widehat{\Delta_h(e_1)}(z)}  \\
           \widehat{e}_2(z) \vphantom{\widehat{\Delta_h(e_1)}(z)}
       \end{bmatrix}.
   \end{equation*}
       Pre-multiplying both sides by $\frac{z-1}{z}{{M}(z)}\otimes I_d$ and substituting \eqref{eq:split_grad}, we get
    \begin{equation*}
        \bigl({{M}(z)} \otimes I_d\bigr)
      \bigl( \begin{bsmallmatrix}
       1 \\ 1
      \end{bsmallmatrix} \otimes {x}^\star \bigr) +
      \bigl({{N}(z)} \otimes I_d\bigr)
      \bigl( \begin{bsmallmatrix}
       1 \\ -1
      \end{bsmallmatrix} \otimes \nabla h({x}^\star) \bigr)
        =
       \frac{z-1}{z}
      \Bigl( \bigl({N}(z)\otimes I_d\bigr)
      \widehat{\Delta(\mathbf{e})} (z) +
      \bigl({{M}(z)}\otimes I_d\bigr)
     \widehat{\mathbf{e}} (z) \Bigr).
        \end{equation*}
    Since ${M}(z), {N}(z)\in\mathcal{RH}_\infty$,
    if ${e}_1[t]$ and ${e}_2[t]$ converges to zero, then the right-hand side of the above equation tends to 
    \begin{equation*}
       \mathbf{0}_{2d} = \left( {M(1)} \begin{bsmallmatrix}
           1 \\ 1
          \end{bsmallmatrix}  \right) \otimes x^\star + \left( {N(1)} \begin{bsmallmatrix}
           1 \\ -1
             \end{bsmallmatrix}  \right) \otimes \nabla h({x}^\star)
    \end{equation*}
    holds for all ${x}^\star,\nabla h({x}^\star)\in\IR^d$, which establishes \eqref{eq:G_1}.
    \end{proof}

The interpolation conditions \eqref{eq:G_1} can be interpreted from the perspective of two standard control problems: tracking and regulation, among which the former leads to the singularity of $M(1)$ for tracking a step reference with direction $\begin{bsmallmatrix}
    1 \\ 1
\end{bsmallmatrix}$, and the latter leads to the singularity of $N(1)$ for regulating a step disturbance with direction $\begin{bsmallmatrix}
    1 \\ -1
\end{bsmallmatrix}$.

As we assume that $g$ is proximable, $G(z)$ can be not strictly causal, and specifically $G(\infty)$ can take the form:
\begin{equation}\label{eq:G_infty}
    G(\infty) = \begin{bmatrix}
        0 & \eta_1 \\ 0 & \eta_2
    \end{bmatrix},
\end{equation}%
where $\eta_1$ and $\eta_2$ are tunable parameters.

Since $h\in\mathcal{F}_{\mu_{1},\ell_{1}}$, its gradient difference $\Delta_h(\cdot)$ is slope-restricted in $[\mu_{1},\ell_{1}]$, and for the same reason, $\Delta_g(\cdot)$ is slope-restricted in $[\mu_{2},\ell_{2}]$.
We now utilize a loop transformation to convert $\Delta_h(\cdot)$ and $\Delta_g(\cdot)$ to nonlinearities slope-restricted in $[0,\infty)$, which accordingly transform $G(z)$ to
\begin{equation*}
        \Psi(z) = {\left(I_{2} + \begin{bmatrix}
            \ell_{1} & {0} \\ {0} & \ell_{2}
        \end{bmatrix} G(z)\right)} {\left(I_{2} + \begin{bmatrix}
            \mu_{1} & {0} \\ {0} & \mu_{2}
        \end{bmatrix} G(z)\right)\!\!}^{-1}.
\end{equation*}
The inverse transformation is
\begin{equation}\label{eq:inv_loop2}
    \setlength{\arraycolsep}{0.9ex}
    \hspace{-0.7em}   G(z) = {\biggl( \begin{bmatrix}
            {\ell}_{1} & {0} \\ {0} & \ell_{2}
            \end{bmatrix} - \Psi(z) \begin{bmatrix}
            {\mu}_{1} & {0} \\ {0} & \mu_{2}
            \end{bmatrix} \biggr)\!\!}^{-1} 
            \biggl(\Psi(z) - I_{2} \biggr). \hspace{-0.2em}
\end{equation}
The interpolation conditions \eqref{eq:G_1} and \eqref{eq:G_infty} on $G(z)$ can be converted to those on $\Psi(z)$ as
    \begin{equation}\label{eq:Psi_inter}
    \Psi(1) = \begin{bmatrix}
         \frac{\ell_{1}+\mu_{2}}{\mu_{1}+\mu_{2}}
        & \frac{\ell_{1}-\mu_{1}}{\mu_{1}+\mu_{2}} \\
        \frac{\ell_{2}-\mu_{2}}{\mu_{1}+\mu_{2}} & \frac{\ell_{2}+\mu_{1}}{\mu_{1}+\mu_{2}}
    \end{bmatrix}, \quad
    \Psi(\infty) =  \begin{bmatrix}
        1 &
        \frac{\ell_1-\mu_1}{1+\mu_{2} \eta_{2}} \eta_1 \\ 0 &
        \frac{1+\ell_{2} \eta_{2}}{1+\mu_{2} \eta_{2}}
    \end{bmatrix}.
\end{equation}
Note that after loop transformation, the nonlinear component still possesses the diagonal structure as in Fig.~\ref{fig:prox_feedback}. We may abandon this structure and directly employ the circle criterion, which, however, suffers from certain conservatism. The following lemma, inspired by the analysis of \emph{structured singular value} \cite{zhou1998essentials} in robust control, paves the way to a less conservative result.

\begin{lemma}\label{lem:sector}
    If $\Delta_i\colon \IR^{d_i} \to \IR^{d_i}$ is sector-bounded in $[k_1,k_2]$, $i=1,\dots,n$, then $\Delta\colon \IR^{d_1+\dots+d_n} \to \IR^{d_1+\dots+d_n}$ defined by
    \begin{equation*}
        \Delta = \begin{bsmallmatrix}
            w_1 I_{d_1} &  & \\  & \ddots & \\
            & & w_n I_{d_n}
        \end{bsmallmatrix} \circ
        \begin{bsmallmatrix}
            \Delta_1(\cdot) &  & \\  & \ddots & \\
            & & \Delta_n(\cdot)
        \end{bsmallmatrix} \circ
        \begin{bsmallmatrix}
            \frac{1}{w_1} I_{d_1} & & \\  & \ddots & \\
            & & \frac{1}{w_n} I_{d_n}
        \end{bsmallmatrix}
    \end{equation*}
    is sector-bounded in $[k_1,k_2]$ for any $w_i \neq 0$, $i=1,\dots,n$.
\end{lemma}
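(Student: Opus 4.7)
The plan is to reduce everything to verifying the scalar-sector inequality $\left< \Delta(\xi), \xi\right>\geq 0$ after noting that being sector-bounded in $[0,\infty)$ is exactly the limiting form of \eqref{eq:sector_def} as $k_1=0$ and $k_2\to\infty$, namely $\left<\phi(x),x\right>\geq 0$ for every $x$. The argument should be very short because the composed map $\Delta$ is block-diagonal after all the left/right scaling, so the inner product will naturally decouple across blocks.

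Concretely, first I would partition the input $\xi\in\IR^{d_1+\cdots+d_n}$ conformably, as $\xi=(\xi_1\TP,\dots,\xi_n\TP)\TP$. Reading off the composition in the lemma statement, the $i$-th block of $\Delta(\xi)$ is $w_i\,\Delta_i(\xi_i/w_i)$. The key computation is then
\[
\left<\Delta(\xi),\xi\right> \;=\; \sum_{i=1}^n \left< w_i\Delta_i(\xi_i/w_i),\,\xi_i\right> \;=\; \sum_{i=1}^n w_i^2\,\left<\Delta_i(\xi_i/w_i),\,\xi_i/w_i\right>,
\]
where I have used $\xi_i=w_i(\xi_i/w_i)$ and pulled the scalar $w_i$ out of each inner product twice, producing the factor $w_i^2$. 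Since $w_i\neq 0$ forces $w_i^2>0$ and each $\Delta_i$ is sector-bounded in $[0,\infty)$ by hypothesis, every summand is nonnegative, and hence so is the total, which is exactly the sector-bound claimed for $\Delta$.

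There is no real obstacle in this proof; the result is essentially a change-of-variables observation that diagonal scaling by nonzero scalars preserves block-diagonal sector-boundedness. The only subtlety to state cleanly is the passage from the finite-$k_2$ definition \eqref{eq:sector_def} to its limiting $[0,\infty)$ form, which I would either invoke as standard or derive in a single line by dividing \eqref{eq:sector_def} through by $k_2$ and letting $k_2\to\infty$. As the paragraph preceding the lemma hints, this mirrors the classical $D$-scaling that preserves bounds on structured singular values in linear robust control, which is precisely the analogy motivating the result.
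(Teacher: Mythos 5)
Your proof is correct and follows essentially the same route as the paper: partition the vector into blocks, pull the scalars $w_i$ out of each inner product to obtain the factor $w_i^2>0$, and conclude that the sum of nonnegative terms is nonnegative. The only extra touch you add---explicitly noting how the $[0,\infty)$ sector condition arises as the $k_1=0$, $k_2\to\infty$ limit of \eqref{eq:sector_def}---is a fine clarification but not something the paper spells out.
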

\vspace{1ex}
\begin{proof}
Since $\Delta_i$ is sector-bounded in $[0,\infty)$, then for any $w_i \neq 0$ and for all $x_i\in\IR^{d_i}$, $i=1,\dots,n$, we have
\begin{equation*}
   \bigl<\Delta_i (\tfrac{1}{w_i} x_i) , \tfrac{1}{w_i} x_i \bigr> \geq 0 .
\end{equation*}
By the definition of $\Delta$, we have
\begin{align*}
   \hspace{-1em}
   \left<\Delta \Bigl(\begin{bsmallmatrix}
       x_1 \\ \vdots \\ x_n
   \end{bsmallmatrix} \Bigr) , \begin{bsmallmatrix}
   x_1 \\ \vdots \\ x_n
\end{bsmallmatrix} \right> &= \left<\begin{bsmallmatrix}
   w_1 \Delta_1(\tfrac{1}{w_1} x_1) \\ \vdots \\
   w_n \Delta_n(\tfrac{1}{w_n} x_n)
\end{bsmallmatrix} , \begin{bsmallmatrix}
   x_1 \\ \vdots \\ x_n
\end{bsmallmatrix} \right> \\
&= \sum_{i=1}^{n} w_i^2 \bigl<\Delta_i (\tfrac{1}{w_i} x_i) , \tfrac{1}{w_i} x_i \bigr> \geq 0.
\end{align*}
This completes the proof.
\end{proof}

From Lemma~\ref{lem:sector} and the circle criterion (Lemma~\ref{lem:circle}), the Lur'e system in Fig.~\ref{fig:prox_feedback} is $\rho$-stable if
\begin{equation}\label{eq:positive_psi_prox}
    \text{there exists } w\neq 0 \text{ such that }
    \begin{bmatrix}
        w & 0 \\ 0 & 1
    \end{bmatrix} \Psi(\gamma z)
    \begin{bmatrix}
        1/{w} & 0 \\ 0 & 1
    \end{bmatrix}
    \text{is strictly positive real for all } \gamma \in (\rho,1).
\end{equation}

\begin{theorem}\label{thm:circle_prox}
    There exists a function $\Psi(z)$ that satisfies \eqref{eq:Psi_inter} and \eqref{eq:positive_psi_prox}, only if
    \begin{equation}\label{eq:rho_s}
        \rho \geq \rho_{\mathrm{S}} \coloneqq \frac{\ell_{1}-\mu_{1}}{\ell_{1}+\mu_{1} + 2\mu_{2}}.
    \end{equation}
    Moreover, when $\rho = \rho_{\mathrm{S}}$, such a function $\Psi(z)$ exists if and only if
    \begin{equation*}
    w^2 = \frac{\ell_{2}-\mu_{2}}{{\ell_{1}-\mu_{1}} } \rho_{\mathrm{S}}^{-1}.
    \end{equation*}
    %
    One 
    such function is given by
    \begin{equation}\label{eq:Psi_prox}
        \Psi(z) = \frac{1}{z - \rho_{\mathrm{S}}} \begin{bmatrix}
            z + \rho_{\mathrm{S}} & 2 \rho_{\mathrm{S}} z \\
            \frac{2 (\ell_2-\mu_2) }{\ell_{1}+\mu_{1} + 2\mu_{2}} &
            \frac{\ell_1+\mu_1+2\ell_2}{\ell_{1}+\mu_{1} + 2\mu_{2}} z - \rho_{\mathrm{S}}
        \end{bmatrix}.
    \end{equation}
\end{theorem}
\vspace{1ex}
\begin{proof}
    See Appendix B.
\end{proof}

Theorem~\ref{thm:circle_prox} provides the fastest convergence rate $\rho_{\mathrm{S}}$ of LTI algorithms with splitting that can be established using the circle criterion.
The transfer function matrix $G(z)$ that achieves the fastest convergence rate $\rho_{\mathrm{S}}$ can be synthesized by substituting \eqref{eq:Psi_prox} into \eqref{eq:inv_loop2}, yielding
\begin{equation*}
    G(z) = \frac{2}{\mu_1+\ell_1} \frac{1}{z-1}
    \begin{bmatrix}
        1 \\ 1
    \end{bmatrix}
    \begin{bmatrix}
        1 & z
    \end{bmatrix}.
\end{equation*}
Its minimal realization corresponds to the proximal gradient method \cite{parikh_proximal_2014, taylor_exact_2018} given by
\begin{equation}\label{eq:gd_prox}
    x[t+1] = \mathrm{prox}_{\frac{2}{\mu_1+\ell_1} g} \bigl(x[t] - \tfrac{2}{\mu_1+\ell_1} \nabla h(x[t]) \bigr).
\end{equation}
Note that $G(z)$ satisfies the interpolation condition (\ref{eq:G_infty}) with $\eta_1=\eta_2=\tfrac{2}{\mu_1+\ell_1}$.

Common assumptions on the proximable part $g$ are that it is convex, closed, and proper (see, e.g. \cite{parikh_proximal_2014, taylor_exact_2018}), which implies $\mu_{2}=0$ and $\ell_{2}=\infty$. In this case, $\rho_{\mathrm{S}}=\frac{\ell_{1}-\mu_{1}}{\ell_{1}+\mu_{1}}$, which is the fastest convergence rate of the proximal gradient method \cite{taylor_exact_2018}, and \eqref{eq:gd_prox} is the proximal gradient method with the optimal step size $\frac{2}{\mu_1+\ell_1}$.
Our synthesis of the proximal algorithm \eqref{eq:gd_prox} shows that if $g$ is further strongly convex, then the convergence rate can be improved without changing the step size. Furthermore, it also shows that the Lipschitz smoothness of $g$ does not improve the convergence rate.




\subsection{Simulation Examples}

\subsubsection{Implicit Algorithm}
To verify the implicit algorithm \eqref{eq:bi_gd_prox}, we consider the cost function $h\colon \IR^d\to\IR$ given by
\begin{equation*}
    h(x) = \sum_{i=1}^d \phi(a_i\TP x - b_i)  \;\text{where}\;
    \phi(\nu) = \begin{cases}
        \frac{\ell}{2} \nu^2 & \text{if } \nu \geq 0, \\
        \frac{\mu}{2} \nu^2 & \text{if } \nu < 0.
    \end{cases}
\end{equation*}
Here, $a_i,b\in\IR^d$ are chosen randomly so that $A=\bigl[
    a_1 \, \cdots \, a_d \bigr] \in\IR^{d\times d}$ is unitary, and thus $h\in\mathcal{F}_{\mu,\ell}$. It is easy to see that the minimizer of $h$ is $x^\star=A b$. We set $d=100$, $\mu=0.01$, and $\ell=100$. The $\mathrm{prox}_{\alpha_\rho h}(x)$ is computed using gradient descent \eqref{eq:gd} with $\mu_{\mathrm{sub}}$ and $\ell_{\mathrm{sub}}$ given in \eqref{eq:k_sub}, until the iterative variable $\xi[k]$ satisfies $\left\|\xi[k]-\xi[k-1]\right\|\leq 0.01\left\|x\right\|$ with $\xi[-1]=\mathbf{0}_d$ and $\xi[0]=x$.


For a fair comparison of the effectiveness of the implicit algorithm \eqref{eq:bi_gd_prox} for different $\alpha_\rho$, we count the number of times the gradient $\nabla h$ is computed, including those computed for solving $\mathrm{prox}_{\alpha_\rho h}(x)$, until the error $\left\|x[k]-x^\star\right\|\leq {10}^{-10}$ with initial $x[0]=\mathbf{0}$.
    The simulation results are shown in Fig.~\ref{fig:simulation_bi}.
    For this ill-conditioned function $h$, the gradient descent method ($\alpha_\rho=0$) requires $1.38\times 10^5$ times of computing the gradient $\nabla h$ to achieve the specified accuracy. In contrast, the implicit algorithm \eqref{eq:bi_gd_prox} with $\alpha_\rho=10$ requires only $784$ times of computing $\nabla h$ to achieve the same accuracy. This demonstrates the effectiveness of the implicit algorithm in reducing computational complexity.

    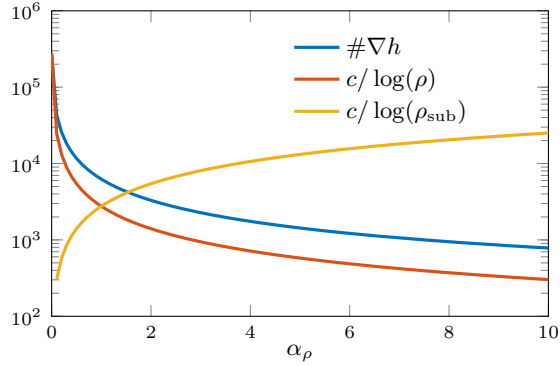
\begin{figure}[tb]
        \centering
        \definecolor{mycolor1}{rgb}{0.00000,0.44700,0.74100}%
\definecolor{mycolor2}{rgb}{0.85000,0.32500,0.09800}%
\definecolor{mycolor3}{rgb}{0.92900,0.69400,0.12500}%
\begin{tikzpicture}

\begin{axis}[%
width=0.6\textwidth,
height=1.2in,
at={(1.053in,0.668in)},
scale only axis,
xmin=0,
xmax=10,
ymode=log,
ymin=600,
ymax=200000,
yminorticks=true,
xlabel style={font=\footnotesize, yshift=0.9ex},
scaled ticks = false,
ticklabel style = {font=\scriptsize},
xlabel={$\alpha_\rho$},
legend pos=north east,
legend style={draw=none, fill=none, yshift=-0.5ex, xshift=-2em, font=\footnotesize},
legend cell align={left}
]
\addplot [color=mycolor1, very thick]
  table[row sep=crcr]{%
0	138001\\
0.1010101010101	43310.0000000001\\
0.202020202020202	25398\\
0.303030303030303	18106\\
0.404040404040405	14114\\
0.505050505050505	11588\\
0.606060606060606	9842\\
0.707070707070708	8564.00000000002\\
0.808080808080808	7584.00000000001\\
0.909090909090908	6811.99999999999\\
1.01010101010101	6185.99999999999\\
1.11111111111111	5666\\
1.21212121212121	5232.00000000001\\
1.31313131313131	4860.00000000001\\
1.41414141414141	4538.00000000001\\
1.51515151515152	4260\\
1.61616161616162	4014\\
1.71717171717172	3794\\
1.81818181818182	3600\\
2.02020202020202	3266\\
2.22222222222222	2992.00000000001\\
2.42424242424242	2762\\
2.62626262626263	2566\\
2.92929292929293	2322.00000000001\\
3.23232323232323	2122\\
3.53535353535354	1956\\
3.73737373737374	1860\\
4.14141414141414	1694\\
4.54545454545455	1558\\
4.94949494949495	1444\\
5.35353535353535	1346\\
5.85858585858586	1242\\
6.56565656565657	1124\\
7.57575757575758	992.000000000002\\
7.87878787878788	960.000000000001\\
8.08080808080808	938.000000000001\\
8.48484848484848	900.000000000001\\
8.78787878787879	874\\
8.98989898989899	858.000000000001\\
9.19191919191919	840.000000000001\\
9.39393939393939	826.000000000001\\
9.7979797979798	796\\
10	784.000000000001\\
};
\addlegendentry{$\# \nabla h$}

\end{axis}
\end{tikzpicture}%
        \caption{Simulation results of implicit algorithm \eqref{eq:bi_gd_prox}, where $\# \nabla h$ is the number of times the gradient $\nabla h$ is computed (counting sub-iterations).}
        %
        \label{fig:simulation_bi}
    \end{figure}

    \subsubsection{Proximal Algorithm}
To verify the proximal algorithm \eqref{eq:gd_prox}, we consider the cost function $f(x) = h(x) + g(x)$ with $h$ identical to the previous example and
\begin{equation*}
    g(x) = \begin{cases}
        0.5 \sum_{i=1}^d w_i x_i^2 & \text{if } x_i \geq 0 \; \forall i=1,\dots,d, \\
        +\infty & \text{otherwise}.
    \end{cases}
\end{equation*}
While non-differentiable, $g$ is $\min(w)$-strongly convex and proximable with proximal operator being
\begin{equation*}
    {\Bigl(\mathrm{prox}_{\lambda g}(x)\Bigr)\!}_i =
    \max \Bigl\{\frac{x_{i}}{1+\lambda w_i}, 0 \Bigr\} , \; i=1,\dots,d.
\end{equation*}
We set $d=10^3$, $\ell=100$, $\mu=0.1$, and $\min(w)=1$. The simulation results are shown in Fig.~\ref{fig:simulation}, which clearly demonstrates that the error $\left\|e[t]\right\|$ and gradient norm $\min_{\xi\in\partial g(x[t])}\|\nabla h(x[t]) + \xi\|$ converge to zero at a rate not slower than $\rho_{\mathrm{S}}$ under different $\max(w)$. A counterintuitive observation is that larger $\max(w)$ leads to faster convergence in this setting, which, to some extent, validates the theoretical finding that the Lipschitz smoothness of the proximable part does not improve the convergence rate.

Even though our deduction to \eqref{eq:gd_prox} is based on the assumption that $g$ is differentiable, in this example, $g$ is not differentiable and $\Delta_g$ in Fig.~\ref{fig:prox_feedback} is defined with the subgradient $\partial g$ rendering the feedback system in Fig.~\ref{fig:prox_feedback} set-valued. Tools for analyzing set-valued Lur'e systems \cite{brogliato_dynamical_2020} promise a rigorous extension of the present frequency-domain framework for algorithms with splitting to the case where $g$ is not differentiable.

\begin{figure}[tb]
    \centering
 \definecolor{mycolor1}{rgb}{0.00000,0.447,1}%
\definecolor{mycolor2}{rgb}{1,0,1}%
\pgfplotsset{%
my legend/.style={legend image code/.code={%
\draw[mycolor1] (0ex,0.3ex) -- (3.8ex,0.3ex);
\draw[mycolor2] (0ex,-0.3ex) -- (3.8ex,-0.3ex);
}},%
}
\begin{tikzpicture}

\begin{axis}[%
width=0.6\textwidth,
height=1.4in,
scale only axis,
xmin=0,
xmax=1200,
ymode=log,
ymin=1e-11,
ymax=1e4,
yminorticks=true,
xlabel style={font=\footnotesize, yshift=0.9ex},
scaled ticks = false,
ticklabel style = {font=\scriptsize},
xlabel={iterations $t$},
legend pos=north east,
legend style={draw=none, fill=none, yshift=0.1ex, xshift=2ex, font=\footnotesize},
legend cell align={left}
]

\addplot [color=black, dotted, very thick]
  table[row sep=crcr]{%
0	45.4098296777094\\
1200	2.01764725015175e-10\\
};
\addlegendentry{$\left\|e[0]\right\|\rho_{\mathrm{S}}^t$}

\addlegendimage{my legend, very thick, solid}
\addlegendentry{$\left\|e[t]\right\|$}

\addlegendimage{my legend, very thick, dashed}
\addlegendentry{$\displaystyle \min_{\xi\in\partial g(x[t])}\|\nabla h(x[t]) + \xi\|$}

\addplot [color=mycolor1, very thick]
  table[row sep=crcr]{%
0	45.4098296777094\\
1	28.8630776946493\\
2	24.0578145425911\\
3	21.3111001097452\\
5	17.7879262894822\\
7	15.3356072811175\\
10	12.699141577748\\
14	10.2083329093175\\
18	8.41220174764508\\
24	6.45590127466307\\
32	4.63997466123567\\
41	3.28478385038315\\
53	2.12811575264331\\
67	1.30880775729063\\
87	0.67051830693068\\
107	0.352209742827247\\
126	0.19545966335253\\
153	0.086705728129294\\
184	0.0350504499370768\\
221	0.0121751313538617\\
255	0.00472031627121171\\
291	0.00177437690436644\\
330	0.000629943833321301\\
373	0.000206152017817578\\
420	6.23280853730465e-05\\
472	1.70061214995273e-05\\
530	4.09364736415902e-06\\
596	8.30094810957083e-07\\
671	1.38820334749167e-07\\
758	1.78765065146882e-08\\
868	1.37557208202859e-09\\
1051	1.99650254149168e-11\\
1052	1.92100589274425e-11\\
1053	1.9059308251908e-11\\
1054	1.83240890403475e-11\\
1055	1.81946342667051e-11\\
1056	1.74781689230463e-11\\
1057	1.73686599383695e-11\\
1058	1.66703872802199e-11\\
1059	1.65795971165816e-11\\
1060	1.58982154223179e-11\\
1061	1.58257201370516e-11\\
1062	1.51608514723923e-11\\
1063	1.51061686669695e-11\\
1064	1.44568532943827e-11\\
1065	1.44192546262994e-11\\
1066	1.37847980776622e-11\\
1067	1.37631232789747e-11\\
1068	1.31429147168624e-11\\
1069	1.31373787044126e-11\\
1070	1.25304908045057e-11\\
1071	1.25399086229107e-11\\
1072	1.19447674965918e-11\\
1073	1.19677369853978e-11\\
1074	1.13853234934992e-11\\
1075	1.1421173856234e-11\\
1076	1.0850958345103e-11\\
1077	1.09005128582604e-11\\
1078	1.03408395735547e-11\\
1079	1.04023230720895e-11\\
1080	9.85366351469617e-12\\
1081	9.92847787007407e-12\\
1082	9.38989615270784e-12\\
1083	9.47553858977372e-12\\
1084	8.94592355911929e-12\\
1085	9.04167786525173e-12\\
1086	8.51986700127826e-12\\
1087	8.62640212372934e-12\\
1088	8.11350117008597e-12\\
1089	8.2314720343587e-12\\
1090	7.72670844592445e-12\\
1091	7.85460528298006e-12\\
1092	7.35742477004528e-12\\
1093	7.49459867019744e-12\\
1094	7.00342541438878e-12\\
1095	7.14921489505682e-12\\
1096	6.66561325009116e-12\\
1097	6.82092800353306e-12\\
1098	6.34347939513263e-12\\
1099	6.50744546401658e-12\\
1100	6.03532633445133e-12\\
1101	6.20781350612627e-12\\
1102	5.74158211046326e-12\\
1103	5.92158221752472e-12\\
1104	5.46014315387739e-12\\
1105	5.64892494768451e-12\\
1106	5.1925071600729e-12\\
1107	5.3881229807349e-12\\
1108	4.93556814081758e-12\\
1109	5.13811781580055e-12\\
1110	4.69084855998617e-12\\
1111	4.9010233838867e-12\\
1112	4.45744733555774e-12\\
1113	4.67533988919096e-12\\
1114	4.23500024146303e-12\\
1115	4.45859199494485e-12\\
1116	4.02089745412892e-12\\
1117	4.2519230167071e-12\\
1118	3.81733373960528e-12\\
1119	4.05383892227846e-12\\
1120	3.62209460560944e-12\\
1121	3.8650720526768e-12\\
1122	3.43605029818563e-12\\
1123	3.68556728050871e-12\\
1124	3.25825708306794e-12\\
1125	3.51480508191689e-12\\
1126	3.08959824238302e-12\\
1127	3.35088169553349e-12\\
1128	2.92697528945545e-12\\
1129	3.19480967902917e-12\\
1130	2.77217003559763e-12\\
1131	3.04580645481522e-12\\
1132	2.62476618864568e-12\\
1133	2.90406735267118e-12\\
1134	2.48392582602661e-12\\
1135	2.76928793220773e-12\\
1136	2.34907088022625e-12\\
1137	2.6400638225768e-12\\
1138	2.22146066639706e-12\\
1139	2.51761426087683e-12\\
1140	2.09820342781733e-12\\
1141	2.40079384226721e-12\\
1142	1.9817160619883e-12\\
1143	2.28888785700603e-12\\
1144	1.86893336938735e-12\\
1145	2.1830931928512e-12\\
1146	1.76202032338915e-12\\
1147	2.08223874248623e-12\\
1148	1.66018949267672e-12\\
};

\addplot [color=mycolor1, dashed, very thick]
  table[row sep=crcr]{%
0	1551.38172402755\\
1	655.780766698432\\
2	376.540483640398\\
3	267.518757593618\\
4	210.440579703027\\
5	175.023891144316\\
6	149.662292720963\\
8	115.875585090103\\
10	96.0054321069229\\
12	82.5304659748129\\
15	67.6312459560984\\
22	44.4949770396728\\
26	36.454039168392\\
32	28.181732683077\\
37	23.3307679485064\\
61	9.98089249103312\\
69	7.76264189477034\\
102	2.89554651712369\\
126	1.47115223861107\\
145	0.895497314596632\\
169	0.49023500809803\\
202	0.219714740977828\\
254	0.0638155931857063\\
312	0.016511860535878\\
392	0.0026229141403349\\
494	0.000257636550830719\\
621	1.46942684964587e-05\\
780	4.17652190540611e-07\\
979	4.97044827788881e-09\\
1200	3.72072066758924e-11\\
};

\addplot [color=mycolor2, very thick]
  table[row sep=crcr]{%
0	31.3225978495329\\
1	15.6001354691814\\
2	10.6222677120239\\
3	8.02644174486716\\
4	6.48375365140621\\
5	5.31720527386679\\
7	3.79761680510333\\
9	2.81613400131329\\
12	1.87173346378088\\
15	1.28036909000204\\
19	0.794749925461414\\
24	0.451150341545974\\
29	0.263179662470911\\
35	0.141703662881233\\
42	0.0706782255784262\\
50	0.0326857078996167\\
60	0.0127827876052058\\
74	0.00353254881763136\\
96	0.000484893391031939\\
132	1.96749690474131e-05\\
194	8.3592011265473e-08\\
291	1.72958435061624e-11\\
334	4.10396888867919e-13\\
344	1.74438821183751e-13\\
350	1.05912555844483e-13\\
351	9.84856940771201e-14\\
};

\addplot [color=mycolor2, dashed, very thick]
  table[row sep=crcr]{%
0	1551.38172402755\\
1	643.304113016299\\
2	348.336435329319\\
3	226.550549541636\\
4	164.134698705329\\
6	103.288084441524\\
8	71.4092525599431\\
10	51.6356959314961\\
13	33.478053609668\\
15	25.599434801853\\
17	19.7755342666206\\
20	13.719165659693\\
27	6.09075527367208\\
32	3.51848315314042\\
40	1.51029295109734\\
51	0.489686422358734\\
60	0.199263995154691\\
72	0.0615356775990491\\
83	0.0213241234085617\\
105	0.00266376620548964\\
143	7.95357393991391e-05\\
158	2.02754107779877e-05\\
167	8.94385704763661e-06\\
180	2.75922451670128e-06\\
191	1.0209696473836e-06\\
204	3.16840070232477e-07\\
216	1.07727835787719e-07\\
229	3.35224662137714e-08\\
246	7.31668579792965e-09\\
263	1.59953944061291e-09\\
290	1.43773750351525e-10\\
304	4.14699894688383e-11\\
318	1.21024908341567e-11\\
326	6.10386736354556e-12\\
327	5.54844112332544e-12\\
333	3.39475684534957e-12\\
336	2.86757942248792e-12\\
339	2.52082123284548e-12\\
343	1.91476582607644e-12\\
346	1.71386076539283e-12\\
348	1.63118451450811e-12\\
349	1.61315900086897e-12\\
350	1.62495151494964e-12\\
351	1.59347975197054e-12\\
352	1.6035357269207e-12\\
354	1.51207624337031e-12\\
355	1.45587049730419e-12\\
356	1.45165674559377e-12\\
357	1.41318410625803e-12\\
358	1.40694107638326e-12\\
360	1.44000735833051e-12\\
361	1.41407792415773e-12\\
362	1.33121667534659e-12\\
363	1.3216889512765e-12\\
364	1.25117962447471e-12\\
};

\end{axis}
\end{tikzpicture}%
    \caption{Simulation results of proximal algorithm \eqref{eq:gd_prox} with $\max(w)=1$ (blue) and $\max(w)=10$ (magenta).}
    \label{fig:simulation}
\end{figure}

\section{Epilogue}\label{sec:epilogue}

The link between algorithmic design in optimization and control synthesis provides a frame for tackling a host of questions on speed, accuracy, distributed computations, and so forth. It also offers a rich arsenal of control concepts and techniques that can be brought to bear. The scope of the present work has been to highlight basic results and insights that can be gained, as for instance, the frequency-domain recasting of optimization algorithms, the nature of analytic obstructions in performance as revealed by the gain margin theory, and how implicit algorithm may circumvent such.

\section*{Dedication}

During the past year, 2023, Allen Tannenbaum and Boris Polyak, founding pioneers of their respective fields of robust control theory and mathematical optimization, bode farewell to their subjects and science that they loved and contributed throughout their lives.
It would have been our greatest joy to have had the chance to share with them the insights gained in this past year that ultimately linked two of their defining contributions.
It is our hope that the present work that aims to highlight those links serves as a tribute and a celebration of their remarkable and lasting legacy.

\section*{Appendix~A: Proof of Proposition~\ref{thm:periodic}}\label{sec:proof_periodic}

Consider the configuration in Fig.~\ref{fig:gain_uncertain} where
$P_0(z)$ represents the lifted nominal plant and $C(z)$ a lifted $n$-periodic controller. Thus, both $P_0(z)$ and $C(z)$ are $n\times n$ transfer function matrices.
In order to ensure nominal stability, the complementary sensitivity $T(z)$ must satisfy the standard analytic constraints as follows.
 Assume that the plant has poles at $p_{i}$, $i=1,\dots,n_{p}$, $P_0(z)$ has poles at $p_{i}^n$, $i=1,\dots,n_{p}$. Therefore, $T(p_{i}^n)$ must have an eigenvalue equal to $1$, in that $T(p_{i}^n) v_{i} = v_{i}$ for a suitable nonzero $v_{i}\in\CI^n$, $i=1,\dots,n_{p}$.
Due to the causality of the controller, $C(\infty)$ must be lower triangular as explained earlier. Moreover, for similar reasons $P_0(\infty)$ is strictly lower triangular. Thus, $T(\infty)$ is also strictly lower triangular.

    Further, to guarantee robust stability for a range of values of an uncertain gain factor, as explained earlier for the case of LTI control, it is necessary and sufficient that $\det (I_n + (k-1)T(z))\neq 0$ for all $z\in\bD^c$ and $k\in[k_1,k_2]$. The permissible range of values of $\zeta\in\spec(T(z))$ (for $z\in\mathbb D^c$) should not intersect $\mathbb{S}_{k_1,k_2}$ defined in \eqref{eq:Sk1k2}. Using the same conformal map $\mathbf{\Phi}_{k_{1},k_{2}}$ in \eqref{eq:Phi}, the resulting $\mathbf{T}(z)$ as in \eqref{eq:bfT} must be analytic in $\bD^c$ with spectrum in $\bD$, and satisfy
\begin{subequations}\label{eq:intcond_mimo}
\begin{align}
&\mathbf{T}(p_{i}^n) v_{i} = \mathbf{\Phi}_{k_1,k_2}(1) v_{i} = g(k_1,k_2) v_{i} ,\; i=1,\dots,n_{p},\\
&\mathbf{T}(\infty) = \mathbf{\Phi}_{k_1,k_2}(T(\infty)) \text{ is strictly lower triangular}.
\end{align}
\end{subequations}
From discrete-time Lyapunov theory, $\spec(\mathbf{T}(z))\subset\bD$ if and only if there exists a positive definite matrix $R(z)$ such that
\begin{equation}\label{eq:spec_Ly}
    {\mathbf{T}(z)}\HP R(z) \mathbf{T}(z) - R(z) \prec \mathbf{0}.
\end{equation}
Consider the spectral factorization $R(z)={X(z)}\HP X(z)$. Then, \eqref{eq:spec_Ly} holds if and only if  $\overbar{\sigma}({X(z)} \mathbf{T}(z) X(z)^{-1}) < 1$ for all $z\in\bD^c$. Such a matrix function exists if and only if the Pick matrix \cite[Theorem 18.2.2]{ball_interpolation_1990}
\begin{equation*}
    \Lambda \coloneqq
\begin{bmatrix}
    \bigl(1-{g(k_1,k_2)}^2\bigr) \Xi_{1} &
    \Xi_{2}\HP  D_{g}\\
    D_{g}\HP  \Xi_{2} &
    I_n - D\HP  D
\end{bmatrix} \succ \mathbf{0},
\end{equation*}
where $D\coloneqq X(\infty)\mathbf{T}(\infty){X(\infty)}^{-1}$, $D_{g}\coloneqq I_n-g(k_1,k_2) D$, and, with $\xi_{i} \coloneqq X(p_{i}^n) v_{i}\in\CI^n$, $i=1,\dots,n_{p}$,
\begin{equation*}
    \Xi_{1} \coloneqq
    {\begin{bmatrix}
        \frac{\xi_{i}\HP \xi_{j}}{1- {\left( \overbar{p_{i}} p_{j} \right)}^{-n}}
    \end{bmatrix}}_{1 \leq i,j \leq n_{p}} , \quad
    \Xi_{2} \coloneqq {\left[ \xi_{1}  \;\cdots\;  \xi_{n_{p}} \right]}.
\end{equation*}
We claim that $\Xi_{1}\succ \mathbf{0}$.
To see this, consider any nonzero $a\in\CI^{n_{p}}$. Then,
\begin{align*}
    a\HP \Xi_{1} a &= \sum_{i=1}^{n_{p}} \sum_{j=1}^{n_{p}} \frac{\xi_{i}\HP \xi_{j} \overbar{a_{i}} a_{j} }{1- {( \overbar{p_{i}} p_{j} )}^{-n}} \\
    &= \sum_{k=0}^{\infty} \sum_{i=1}^{n_{p}} \sum_{j=1}^{n_{p}} \xi_{i}\HP \xi_{j} \overbar{a_{i}} a_{j} {( \overbar{p_{i}} p_{j} )}^{-nk} \\
    &= \sum_{k=0}^{\infty} {\left( \sum_{i=1}^{n_{p}} \frac{a_{i}}{p_{i}^{nk}} \xi_{i} \right)\!\!}\HP {\left( \sum_{i=1}^{n_{p}} \frac{a_{i}}{p_{i}^{nk}} \xi_{i} \right)\!} > 0.
\end{align*}
Thus, $\Xi_{1}\succ \mathbf{0}$, which, together with the Schur complement, equates $\Lambda \succ \mathbf{0}$ to
\begin{equation}\label{eq:pick2}
    {\bigl( 1- {g(k_1,k_2)}^2 \bigr)} \bigl(I_n - D\HP  D\bigr) \succ
    D_{g}\HP \Xi_{2} \Xi_{1}^{-1} \Xi_{2}\HP  D_{g}
\end{equation}
A direct calculation shows that
\begin{equation}\label{eq:1g}
    \bigl(1-{g(k_1,k_2)}^2\bigr) \bigl(I_n - {{D}}\HP  {D} \bigr) =
    D_{g}\HP {\bigl( I_{n} - {\widetilde{D}}\HP  \widetilde{D} \bigr)} D_{g} ,
\end{equation}
for $\widetilde{D}=\bigl(g(k_1,k_2) I_n- {{D}}\bigr) D_{g}^{-1}$, where the invertibility of $D_{g}$ follows from $\det \bigl(D_{g}\bigr) = \det \bigl(I_n-g(k_1,k_2) \mathbf{T}(\infty)\bigr) = 1$ as $g(k_1,k_2)$ is a scalar and $D$ is similar to the strictly lower diagonal matrix $\mathbf{T}(\infty)$. Due to the same reason, we have
\begin{equation}\label{eq:detD}
    \det (\widetilde{D}) = \frac{\det(g(k_1,k_2) I_n- \mathbf{T}(\infty))}{\det( D_{g})} ={g(k_1,k_2)}^{n}.
\end{equation}
Using the identity \eqref{eq:1g}, the inequality \eqref{eq:pick2} is equivalent to
$I_{n} - \Xi_{2} \Xi_{1}^{-1} \Xi_{2}\HP \succ {\widetilde{D}}\HP  \widetilde{D}$,
for which a necessary condition is
\begin{equation}\label{eq:det1}
    \det \bigl(I_{n} - \Xi_{2} \Xi_{1}^{-1} \Xi_{2}\HP \bigr) > {\bigl|\det (\widetilde{D})\bigr|}^2.
\end{equation}
Invoking the matrix determinant lemma and the identity
\begin{equation*}
    \Xi_{1} - \Xi_{2}\HP  \Xi_{2} = {\diag (p_{1}^{-n},\dots,p_{n_p}^{-n})}\HP  \Xi_{1} \diag (p_{1}^{-n},\dots,p_{n_p}^{-n}),
\end{equation*}
we obtain
\begin{equation*}
    \det (I_{n} - \Xi_{2} \Xi_{1}^{-1} \Xi_{2}\HP ) = \frac{\det (\Xi_{1} - \Xi_{2}\HP  \Xi_{2} )}{\det (\Xi_{1})} = \prod_{i=1}^{n_{p}} {\left| p_{i} \right|}^{-2n}.
\end{equation*}
Substituting this and \eqref{eq:detD} into \eqref{eq:det1} yields
\begin{equation*}
    \prod_{i=1}^{n_{p}} {\left| p_{i} \right|}^{-2n} > {g(k_1,k_2)}^{2n},
\end{equation*}
which is equivalent to \eqref{eq:unmistable} and thus completes the proof.



\section*{Appendix~B: Proof of Theorem~\ref{thm:circle_prox}}\label{sec:proof_prox}

Since $\eta_{1}$ and $\eta_{2}$ are free to tune, the condition about $\Psi(\infty)$ in \eqref{eq:Psi_inter} is equal to the tangential interpolation condition $\Psi(\infty) \xi = \xi$ with $\xi = {\left[ 1 \; 0 \right]}\TP$. Let
\begin{equation*}
    \setlength{\arraycolsep}{0.7ex}
    P(z) \!=\! \begin{bmatrix}
    w & 0 \\ 0 & 1
\end{bmatrix} \! \Psi(z) \!
\begin{bmatrix}
    {1}/{w} & 0 \\ 0 & 1
\end{bmatrix} , \quad
P_{1} \!=\! \begin{bmatrix}
    w & 0 \\ 0 & 1
\end{bmatrix} \! \Psi(1) \!
\begin{bmatrix}
    {1}/{w} & 0 \\ 0 & 1
\end{bmatrix}.
\end{equation*}
Then $P(z)$ must satisfy the interpolation conditions $P(1) = P_{1}$ and $P(\infty) \xi = \xi$, and $P(\gamma z)$ is strictly positive real for all $\gamma\in(\rho,1)$.
From the interpolation theory for positive real functions, the existence of such a function $P(z)$ is equivalent to the positive definiteness of the Caratheodory--Pick matrix \cite[Theorem 22.2.1]{ball_interpolation_1990}
\begin{equation*}
    \Lambda =
    \begin{bmatrix}
        \frac{1}{1-\gamma^2} \bigl( P_1 + P_1\TP \bigr) &
        \xi + P_1\TP \xi \\
        \xi\TP + \xi\TP P_1  & 2 \xi\TP \xi
    \end{bmatrix}\quad \forall \gamma\in(\rho,1).
\end{equation*}
We apply a congruence transformation to $\Lambda$:
\begin{equation*}
    \begin{bmatrix}
        1 & 0 & 0 \\ 0 & 0 & 1 \\ 0 & 1 & 0
    \end{bmatrix}
    \Lambda
    \begin{bmatrix}
        1 & 0 & 0 \\ 0 & 0 & 1 \\ 0 & 1 & 0
    \end{bmatrix}
    = \begin{bmatrix}
        \Lambda_{11} & \Lambda_{12} \\ \Lambda_{12}\TP & \Lambda_{22}
    \end{bmatrix},
\end{equation*}
where
\begin{equation*}
    \Lambda_{11} = \begin{bmatrix}
        \frac{\ell_{1} + \mu_{2}}{\mu_{1}+\mu_{2}} \frac{2}{1-\gamma^2} & \frac{\ell_{1}+\mu_{2}}{\mu_{1}+\mu_{2}} + 1 \\
        \frac{\ell_{1}+\mu_{2}}{\mu_{1}+\mu_{2}} + 1 & 2
    \end{bmatrix}, \quad
    \Lambda_{12} = \begin{bmatrix}
        \frac{(\ell_{1}-\mu_{1}) w + (\ell_{2}-\mu_{2}) / w}{(\mu_{1}+\mu_{2}) (1-\gamma^2)} \\
        \frac{\ell_{1}-\mu_{1}}{\mu_{1} + \mu_{2}} w
    \end{bmatrix}, \quad
    \Lambda_{22} =
        \frac{\ell_{2} + \mu_{1}}{\mu_{1} + \mu_{2}}\frac{2}{1-\gamma^2}.
\end{equation*}
Necessary conditions for $\Lambda \succ \mathbf{0}$ include $\Lambda_{11} \succ \mathbf{0}$ and $\Lambda_{22} > 0$, which amount to $\rho_{\mathrm{S}}^2 < \gamma^2 < 1$, where $\rho_{\mathrm{S}}$ is defined in \eqref{eq:rho_s}.
From the construction of $\Lambda$, to make $\Lambda \succ \mathbf{0}$ for all $\gamma\in(\rho_{\mathrm{S}},1)$ it suffices that $\Lambda\succeq \mathbf{0}$ at $\gamma=\rho_{\mathrm{S}}$. When $\gamma = \rho_{\mathrm{S}}$, $\Lambda_{11}$ can be factored as $\Lambda_{11} = 2 \nu \nu\TP$ where
\begin{equation*}
    \nu =  {\begin{bmatrix}
        {\left( 1-\rho_{\mathrm{S}} \right)}^{-1} & 1
    \end{bmatrix}}\TP.
\end{equation*}
Since $\Lambda_{{22}}>0$ when $\gamma = \rho_{\mathrm{S}}$, using the Schur complement, $\Lambda\succeq \mathbf{0}$ is equivalent to
\begin{equation}\label{eq:split_pick2}
    \Lambda_{11} -  \Lambda_{12} \Lambda_{22}^{-1} \Lambda_{21}\TP  =
    2\nu \nu\TP - \Lambda_{12} \Lambda_{22}^{-1} \Lambda_{21}\TP \succeq \mathbf{0}.
\end{equation}
A necessary condition for this is that $\Lambda_{12}=k \nu$ for some scalar $k$. It is equivalent to
\begin{equation*}
    \frac{(\ell_{1}-\mu_{1}) w + (\ell_{2}-\mu_{2}) / w}{(\mu_{1}+\mu_{2}) (1-\rho_{\mathrm{S}}^2)} = \frac{\ell_{1}-\mu_{1}}{\mu_{1}+\mu_{2}} \frac{w}{1-\rho_{\mathrm{S}}},
\end{equation*}
which leads to $w^2 = \frac{\ell_{2}-\mu_{2}}{\ell_{1}-\mu_{1}} \rho_{\mathrm{S}}^{-1}$ and $k^{2}=\frac{(\ell_{1}-\mu_{1}) (\ell_{2}-\mu_{2})}{{( \mu_{1}+\mu_{2} )}^{2} \rho_{\mathrm{S}}}$. In this case, \eqref{eq:split_pick2} holds because of
\begin{equation*}
    2 - \frac{k^{2}}{\Lambda_{22}} = \frac{2(\ell_{1}+\ell_{2}+\mu_{1}+\mu_{2}) (\mu_{1}+\mu_{2})}{(\ell_{2}+\mu_{1}) (\ell_{1}+\mu_{1}+2\mu_{2})} > 0.
\end{equation*}
We now dive into the construction of the function $\Psi(z)$ for $\rho = \rho_{\mathrm{S}}$.
Using the interpolation theory for positive real functions \cite[Theorem 22.2.1]{ball_interpolation_1990}, we obtain a satisfactory function $P(z)$ as
\begin{equation*}
    P(z) = \frac{1}{z - \rho_{\mathrm{S}}}
    \begin{bmatrix}
            z + \rho_{\mathrm{S}} &
            2 \sqrt{\frac{\ell_2-\mu_2}{\ell_1-\mu_1}\rho_{\mathrm{S}}}  z \\
            \frac{2 \sqrt{(\ell_1-\mu_1) (\ell_2-\mu_2) \rho_{\mathrm{S}}} }{\ell_{1}+\mu_{1} + 2\mu_{2}} &
            \frac{\ell_1+\mu_1+2\ell_2}{\ell_{1}+\mu_{1} + 2\mu_{2}} z - \rho_{\mathrm{S}}
        \end{bmatrix}.
\end{equation*}
The proof is complete by $\Psi(z) = \begin{bsmallmatrix}
    1/w & 0 \\ 0 & 1
\end{bsmallmatrix} P(z) \begin{bsmallmatrix}
    {w} & 0 \\ 0 & 1
\end{bsmallmatrix}$.

\section*{Acknowledgment}
We are grateful to Patrizio Colaneri for insightful discussions on the subject of this work.

\bibliographystyle{IEEEtran}
\bibliography{optimization}

\end{document}